\documentclass[pdflatex,sn-mathphys-num]{sn-jnl}

\usepackage{graphicx}%
\usepackage{multirow}%
\usepackage{amsmath,amssymb,amsfonts}%
\usepackage{amsthm}%
\usepackage{mathrsfs}%
\usepackage[title]{appendix}%
\usepackage{xcolor}%
\usepackage{textcomp}%
\usepackage{manyfoot}%
\usepackage{booktabs}%
\usepackage{algorithm}%
\usepackage{algorithmicx}%
\usepackage{algpseudocode}%
\usepackage{listings}%

\theoremstyle{thmstyleone}%
\newtheorem{theorem}{Theorem}
\newtheorem{proposition}[theorem]{Proposition}%

\theoremstyle{thmstyletwo}%
\newtheorem{remark}{Remark}%

\theoremstyle{thmstylethree}%
\newtheorem{definition}{Definition}
\newtheorem{assumption}{Assumption}
\newtheorem{lemma}[theorem]{Lemma}

\newcommand{\Exit}{\partial_{\mathrm{out}}}

\begin{document}

\title{Entropic Risk-Sensitive Evolutionary Learning and Equilibrium Selection in Coordination Games}


\author[1]{\fnm{Solaleh} \sur{Mohammadi}}\email{solalehm@umd.edu}

\author[2]{\fnm{Xiang} \sur{Gao}}\email{xgao242424@gmail.com}

\author*[1]{\fnm{Kaiqing} \sur{Zhang}}\email{kaiqing@umd.edu}

\affil[1]{\orgname{University of Maryland, College Park}}

\affil[2]{\orgname{University of Illinois Urbana-Champaign}} 



\abstract{
We study risk-sensitive evolutionary learning dynamics and their long-run equilibrium selection behaviors in coordination games. 
Agents' risk attitudes enter through the classical entropic risk measure, which evaluates opponent-induced payoff uncertainty and feeds into noisy best responses under two standard revision protocols: \emph{best response with mutations} and \emph{logit choice}. We first analyze $2\times 2$ coordination games in both single-population symmetric and two-population asymmetric settings. In the single-population setting, unlike the risk-neutral case where the dynamics are known to favor the \emph{risk-dominant equilibrium}, we show that risk sensitivity can change the stochastically stable outcome: a greater risk-seeking attitude favors the \emph{payoff-dominant equilibrium}, while a greater risk-averse attitude favors the \emph{maximin equilibrium}. 
Thus, the population's risk attitude may act as a control knob for long-run equilibrium selection. In both population settings, we also identify a robust regime: any \emph{super-dominant}  equilibrium is stochastically stable for \emph{all} risk attitudes, under both protocols, and across populations.  
We further extend the single-population analysis to symmetric $k$-action games, which include  symmetric $k$-action coordination games as a special case, under risk-sensitive best response with mutations. In this setting, we show that, for sufficiently large populations, sufficiently risk-seeking agents uniquely select the \emph{strongly payoff-dominant equilibrium}  when it exists, whereas sufficiently risk-averse agents uniquely select the \emph{strongly maximin equilibrium} when it exists. These results show that entropic risk sensitivity may serve as a systematic mechanism for steering equilibrium selection in evolutionary games,  beyond the classical risk-neutral benchmark.
}

\keywords{Risk Sensitivity, Evolutionary Dynamics, Coordination Games, Equilibrium Selection}


 
\maketitle

\section{Introduction}\label{sec1}


In game theory, strategic interactions and learning processes among agents often come with substantial uncertainty, driven by opponents' behavior, environmental randomness, and/or noise in payoffs and decision-making. For this reason, earlier literature argued that it is unrealistic to assume agents are always \emph{risk-neutral} against such uncertainties, and behave as \emph{expectation maximizers}  \cite{goeree2003risk, fiat2010players}. Instead, agents may be \emph{risk-averse} or \emph{risk-seeking}, and such attitudes are commonly modeled using risk measures that go beyond expected payoffs.
More recently, risk sensitivity has also been incorporated in game theory for more tractable equilibrium computation \cite{mazumdar2024tractable} and for promoting cooperation in multi-agent systems \cite{noorani2022risk,zhang2025optimism}. In particular, these works studied games whose payoff or value functions are \emph{transformed}  by risk sensitivity, and then analyzed the existence \cite{fiat2010players} and computation \cite{mazumdar2024tractable,noorani2022risk,zhang2025optimism} of equilibria in the resulting \emph{transformed} games.

What remains less understood is whether \emph{risk attitudes} can be incorporated directly into agents' \emph{learning dynamics}, and how doing so changes \emph{equilibrium selection} in the \emph{original} game, which asks 
the question: 
{which outcome will agents coordinate on through learning, and why  \cite{harsanyi1988general,sandholm2010population}?} 
Towards addressing this question, we study the stochastic stability of risk-sensitive evolutionary dynamics, in caonical $2\times2$ coordination games \cite{harsanyi1988general,kandori1993learning,young1993evolution}. We consider both the single-population symmetric setting and the two-population asymmetric setting, allowing different risk attitudes across populations. At each step, one random agent revises its action according to a risk-sensitive noisy best-response rule. Risk sensitivity enters through the classical entropic risk measure (ERM) \cite{follmer2002convex}, which replaces the risk-neutral expected payoffs used in learning. We focus on two standard revision protocols from evolutionary game theory: \emph{best response with mutations} (BRM) \cite{kandori1993learning} and \emph{logit choice}  \cite{blume1993statistical}.

Prior work in evolutionary game theory showed that, with risk-neutral agents, the risk-dominant equilibrium is often uniquely selected in the long run \cite{kandori1993learning,young1993evolution,staudigl2012stochastic}. Few results have studied the impact of risk sensitivity on evolutionary learning, except, e.g., \cite{sawa2018reference,nax2019risk}. 
They study how risk dominance and stochastic stability are \emph{preserved} under reference-dependent preferences \cite{sawa2018reference} and different risk attitudes \cite{nax2019risk}, particularly when risk dominance {coincides} with maximin or payoff dominance. By comparison, our results systematically characterize \emph{how} varying entropic risk sensitivity in the evolutionary dynamics \emph{changes} long-run equilibrium selection.

More precisely, in the single-population symmetric setting, and under both protocols, when the payoff-dominant and maximin equilibria differ, increasing the risk-sensitivity parameter shifts stochastic stability from the maximin equilibrium to the payoff-dominant equilibrium. This is also consistent with the recent empirical observations that more risk-seeking learning agents are more likely to \emph{coordinate} at the higher-payoff equilibrium \cite{noorani2022risk,zhang2025optimism},
and our results can thus be viewed as a formalization of these insights through the lens of coordination games and evolutionary dynamics. 
By contrast, if there exists a {super-dominant} equilibrium \cite{sawa2018reference}, which is both weakly maximin and weakly payoff-dominant, then it is stochastically stable for all risk attitudes under both protocols and in both population settings. 
Compared with  \cite{sawa2018reference,nax2019risk}, our insight on ERM aligns with theirs in the single-population BRM setting, regarding the robustness of super-dominance, although we start from a different modeling framework of risk-sensitive learning. Beyond this setting, we also analyze the single-population BRM in cases where payoff dominance and maximin \emph{differ}, and thus a super-dominant equilibrium does not exist; 
analyze the logit choice rule and its \emph{robustness}, which was not covered in \cite{nax2019risk};  and extend the analyses to asymmetric two-population games with heterogeneous risk parameters.

Furthermore, we show that the same mechanism extends beyond the two-action case. In single-population symmetric $k$-action games under risk-sensitive BRM, if a strongly payoff-dominant equilibrium exists, then it is uniquely selected by sufficiently risk-seeking agents for all sufficiently large populations; similarly, if a strongly maximin equilibrium exists, then it is uniquely selected by sufficiently risk-averse agents for all sufficiently large populations. Thus, 
the risk attitudes may provide a systematic way to steer long-run equilibrium selection for evolutionary dynamics: optimism toward high payoff realizations (i.e., risk-seeking dynamics) may promote payoff-dominant outcome, 
while pessimism toward low payoff realizations (i.e., risk-averse dynamics) may promote maximin outcome and thus robust behaviors. 

\section{Preliminaries}

A \(2\times2\) bimatrix game is specified by the payoff matrices
\[
M^p=
\begin{pmatrix}
a^p & b^p \\
c^p & d^p
\end{pmatrix},
\qquad p\in\{1,2\},
\]
where \(M^p\) denotes the payoff matrix of agent \(p\), and both agents have action set \(\{1,2\}\). In each \(M^p\), rows index agent \(p\)'s own action and columns index the opponent's action. The game is \emph{symmetric} if agents are interchangeable, i.e.,
\[
M^1=M^2=
\begin{pmatrix}
a & b \\
c & d
\end{pmatrix}.
\]

\begin{definition}[\(2\times2\) coordination games]
A \(2\times2\) bimatrix game is a \emph{coordination game} if
\[
a^p>c^p,
\qquad
d^p>b^p,
\qquad p\in\{1,2\}.
\]
In the symmetric setting, these conditions reduce to \(a>c\) and \(d>b\).
\end{definition}

Under these conditions, the pure action profiles \((1,1)\) and \((2,2)\) are \emph{strict Nash equilibria} of the game. We refer to them as \emph{coordinated equilibria}. Unless stated otherwise, we focus on such \(2\times2\) coordination games. We further introduce the following standard refinements of such Nash equilibria.

\begin{definition}[Risk dominance]
The coordinated equilibrium \((1,1)\) is \emph{risk-dominant} \cite{harsanyi1988general} if
\[
(a^1-c^1)(a^2-c^2)>(d^1-b^1)(d^2-b^2).
\]
In the symmetric setting, this condition reduces to \(a-c>d-b\). The coordinated equilibrium \((2,2)\) is risk-dominant if the reverse inequality holds.
\end{definition}

\begin{definition}[Payoff dominance]
The coordinated equilibrium \((1,1)\)  is \emph{payoff-dominant} \cite{harsanyi1988general} if it Pareto-dominates the other coordinated equilibrium, i.e.,
\[
a^p>d^p,
\qquad p\in\{1,2\}.
\]
It is \emph{weakly payoff-dominant} if the above inequalities are non-strict, i.e., \(a^p\ge d^p\) for all \(p\in\{1,2\}\). In the symmetric setting, payoff dominance reduces to \(a>d\), and weak payoff dominance reduces to \(a\ge d\). The coordinated equilibrium \((2,2)\) is payoff-dominant, respectively weakly payoff-dominant, if the reverse strict, respectively non-strict, inequalities hold.
\end{definition}

\begin{definition}[Maximin action and maximin equilibrium]
For agent \(p\), a \emph{maximin action} is an action that maximizes the agent's worst-case payoff over the opponent's actions. Thus, action \(1\) is \emph{weakly maximin} for agent \(p\) if
\[
\min\{a^p,b^p\}\ge \min\{c^p,d^p\}.
\]
It is \emph{strictly maximin} if the above inequality is strict. Similarly, action \(2\) is weakly maximin, respectively strictly maximin, for agent \(p\) if the reverse non-strict, respectively strict, inequality holds. A coordinated equilibrium is weakly maximin if each agent's equilibrium action is weakly maximin.
\end{definition}

\begin{definition}[Super-dominant equilibrium]
A coordinated equilibrium is \emph{super-dominant} \cite{sawa2018reference} if it is weakly maximin for both agents and weakly payoff-dominant, with at least one of these conditions holding strictly. Equivalently, \((1,1)\) is super-dominant if for each \(p\in\{1,2\}\),
\[
a^p\ge d^p,
\qquad
b^p\ge c^p,
\]
and at least one of these inequalities is strict for each $p$. The coordinated equilibrium \((2,2)\) is super-dominant if the reverse inequalities hold, with at least one strict inequality.
\end{definition}

We then consider the standard evolutionary game setting as in \cite{sandholm2010population}, with either one or two populations, i.e., $
\mathcal P\in\bigl\{\{1\},\{1,2\}\bigr\}.$ 
Each population \(p\in\mathcal P\) contains \(N^p\in\mathbb N\) agents, and each agent chooses an action from the set \(S^p=\{1,2\}\). Let \(x^p\in X^{p,N^p}\) denote the fraction of population \(p\) playing action \(1\), where
\[
X^{p,N^p}:=\bigl\{0,\tfrac1{N^p},\ldots,1\bigr\}.
\]
The aggregate state is
\[
x=(x^p)_{p\in\mathcal P}\in X^N
:=\prod_{p\in\mathcal P}X^{p,N^p}.
\]
In the \emph{single-population} setting, the state is one-dimensional. In this case, we write \(N:=N^1\), \(X^N:=X^{1,N}\), and \(x:=x^1\), and suppress the population index throughout for notational simplicity. In the \emph{two-population} setting, we write
\[
N:=(N^1,N^2),
\qquad
X^N \equiv X^{N^1,N^2}:=X^{1,N^1}\times X^{2,N^2}.
\]
Payoffs used in evolutionary learning are given by \(F^p:X^N\to\mathbb R^2\), where 
\[F^p(x)=(F_1^p(x),F_2^p(x)),\] 
and \(F_i^p(x)\) is the payoff to action \(i\in S^p\) for population \(p\) at state \(x\). Agents are repeatedly and randomly matched to play the coordination game. In the single-population setting, matching is with replacement, so the opponent is sampled from the same population according to the current state, and thus \(x^{-p}=x\). In the two-population setting, agents in population \(p\) are matched against population \(-p\), and payoffs depend on the opponent state \(x^{-p}\), where \(-p=2\) if \(p=1\) and \(-p=1\) if \(p=2\).

\section{Risk-Sensitive Evolutionary  Dynamics \& Stochastic Stability}\label{sec:risk-payoffs-dynamics}

\subsection{Risk-Sensitive Evolutionary Dynamics}

To model uncertainty in agents' learning process, we consider risk-sensitive agents who evaluate payoffs via the entropic risk measure  \cite{HowardMatheson1972,Whittle1981,follmer2002convex}, rather than risk-neutral expected payoffs. 

\begin{definition}[Entropic Risk Measure (ERM)]
\label{def:erm}
For a real-valued random payoff $R$ and parameter $\beta\in\mathbb{R}$, the entropic risk measure is defined as
\[
\mathcal{R}_\beta(R) 
:= \frac{1}{\beta}\log \mathbb{E}\!\left[e^{\beta R}\right],
\]
with the continuous extension $\mathcal{R}_0(R)=\mathbb{E}[R]$.
\end{definition}

For \(\beta\) near zero, the Taylor expansion of the ERM gives
\[
\mathcal{R}_{\beta}(R)
=
\mathbb{E}[R]
+\frac{\beta}{2}\operatorname{Var}(R)
+\frac{\beta^{2}}{6}\kappa_{3}(R)
+O(\beta^{3}),
\]
where \(\kappa_{3}(R)=\mathbb{E}[(R-\mathbb{E}[R])^{3}]\). Thus, the ERM
departs from expected payoff by accounting first for variance and then for
skewness. A positive \(\beta\) tilts evaluations toward favorable payoff
realizations and corresponds to risk-seeking behavior, whereas a negative
\(\beta\) emphasizes adverse realizations and corresponds to risk-averse
behavior. In our model, agents within each population \(p\) share a common
risk-sensitivity parameter \(\beta^p\in\mathbb{R}\), although this parameter
may differ across populations.

Conditioning on the population state \(x\), the uncertainty faced by a revising agent comes from \emph{random matching}. Hence, for the \(2\times 2\) bimatrix game, for each \(p\in\mathcal P\), the \emph{ERM-adjusted payoffs} to actions \(1\) and \(2\) are
\begin{align}\label{eq:erm-payoffs}
F_{\beta^p,1}^{p}(x)
&=\frac{1}{\beta^p}\log\!\Big(x^{-p}e^{\beta^p a^p}+(1-x^{-p})e^{\beta^p b^p}\Big),\\
F_{\beta^p,2}^{p}(x)
&=\frac{1}{\beta^p}\log\!\Big(x^{-p}e^{\beta^p c^p}+(1-x^{-p})e^{\beta^p d^p}\Big),\nonumber
\end{align}
with the continuous extension at \(\beta^p=0\) given by the expected payoffs
\begin{align*}
F_{0,1}^{p}(x)
=a^p x^{-p}+b^p(1-x^{-p}),\qquad\qquad 
F_{0,2}^{p}(x)
=c^p x^{-p}+d^p(1-x^{-p}).
\end{align*}
For each fixed \(\beta^p\), these functions are continuous in \(x\) and smooth on the interior. We thus define the \emph{ERM payoff difference} for population \(p\) as
\begin{align}\label{eq:erm-delta}
\Delta_{\beta^p}^{p}(x):=F_{\beta^p,1}^{p}(x)-F_{\beta^p,2}^{p}(x),
\end{align}
which further leads to the following definition. 

\begin{definition}[ERM-adjusted best response]
For population \(p\), an \emph{ERM-adjusted best response} at state \(x\) is an action that maximizes the ERM-adjusted payoff against the population state. Equivalently, action \(1\) is an ERM-adjusted best response if
\[
\Delta_{\beta^p}^p(x)\ge 0,
\]
and action \(2\) is an ERM-adjusted best response if
\[
\Delta_{\beta^p}^p(x)\le 0.
\]
When \(\Delta_{\beta^p}^p(x)=0\), both actions are ERM-adjusted best responses.
\end{definition}

We now introduce our \emph{risk-sensitive evolutionary dynamics}. Note that our focus is on the effect of risk-sensitivity on the \emph{learning process/dynamics}, rather than that on the \emph{equilibrium concept} of the underlying game (as in e.g., \cite{fiat2010players,mazumdar2024tractable}). At each time step \(k\in\{1,2,\ldots\}\), a uniformly random agent receives a revision opportunity and updates its action according to the following risk-sensitive noisy best-response protocol:
\begin{align}\label{eq:risk-rev}
\rho_{1,2}^p(x)
=\sigma^\eta\!\bigl(-\Delta_{\beta^p}^p(x^{-p})\bigr),\qquad\quad 
\rho_{2,1}^p(x)
=\sigma^\eta\!\bigl(\Delta_{\beta^p}^p(x^{-p})\bigr), 
\end{align}
with the convention that, in the single-population setting, we drop the superscript \(p\). Here, \(\sigma^\eta:\mathbb R\to(0,1)\) is parameterized by a noise level \(\eta>0\), and \(\rho_{i,j}^p(x)\) denotes the probability that an agent in population \(p\) switches from action \(i\) to action \(j\) at state \(x\). As the noise level vanishes,
\[
\lim_{\eta\to 0}\sigma^{\eta}(a)=
\begin{cases}
1,& a>0,\\
0,& a<0.
\end{cases}
\]
Hence, \(\sigma^{\eta}(\cdot)\) converges to the ERM-adjusted best response in the vanishing-noise limit. We call the dynamics generated by \eqref{eq:risk-rev} the \emph{risk-sensitive dynamics} for short. To characterize how the probability of choosing a suboptimal action decays as the noise vanishes, we impose the following assumption.

\begin{assumption}\label{ass:cost}
Let \(\{\sigma^\eta\}_{\eta>0}\) be a family of noisy best-response choice functions, with \(\sigma^\eta:\mathbb R\to(0,1)\). For each \(d\in\mathbb R\), assume that
\[
c(d):=-\lim_{\eta\to 0}\eta\log \sigma^\eta(-d)
\]
exists and is finite. The induced function \(c:\mathbb R\to[0,\infty)\) satisfies
\[
c(d)=0 \quad \text{for } d\le0,
\qquad
c(d)>0 \quad \text{for } d>0,
\]
and is nondecreasing on \(\mathbb R\). Moreover, 
\(
\lim_{\eta\to 0}\sigma^\eta(0)=q\in(0,1).
\)
\end{assumption}

We consider two classical noisy best-response protocols from evolutionary game theory: \emph{best response with mutations} \cite{kandori1993learning,young1993evolution} and \emph{logit choice} \cite{blume1993statistical}.

\subsubsection{Best Response with Mutations (BRM)}

Under BRM, a strict ERM-adjusted best response is chosen with probability \(1-e^{-1/\eta}\), and the alternative action is chosen with probability \(e^{-1/\eta}\); ties are broken uniformly:
\[
\sigma^\eta(a)=
\begin{cases}
1-\exp(-1/\eta),& a>0,\\
\frac12,& a=0,\\
\exp(-1/\eta),& a<0.
\end{cases}
\]
This choice rule satisfies Assumption~\ref{ass:cost} with \(c(d)=\mathbf 1_{\{d>0\}}\). We refer to the resulting risk-sensitive dynamics as \texttt{RS-BRM} and \texttt{2Pop-RS-BRM} in the single- and two-population settings, respectively.

\subsubsection{Logit Choice}

Under the logit choice rule,
\[
\sigma^\eta(a)=\frac{e^{a/\eta}}{1+e^{a/\eta}},
\qquad \eta>0,
\]
such that the actions with higher ERM-adjusted payoffs are chosen with higher probability, and \(\sigma^\eta\) converges to the ERM-adjusted best response as \(\eta\to0\). This protocol satisfies Assumption~\ref{ass:cost} with \(c(d)=(d)_+ := \max\{0,d\}\). We refer to the resulting risk-sensitive dynamics as \texttt{RS-logit} and \texttt{2Pop-RS-logit} in the single- and two-population settings, respectively.

For both of these protocols, \(
\lim_{\eta\to 0}\sigma^\eta(0)=\frac{1}{2}
\). We conclude this section with two lemmas that will be used repeatedly later, whose proofs are given in Appendices~\ref{app:lem-delta} and ~\ref{app:lem-xstar}, respectively.

\begin{lemma}\label{lm:delta}
For each population \(p\), the function \(z\mapsto \Delta_{\beta^p}^p(z)\) in \eqref{eq:erm-delta} is strictly increasing on \((0,1)\) and has a unique zero \(x_{\beta^p}^{-p,*}\in(0,1)\). For \(\beta^p\neq 0\), this zero is given by
\begin{equation}\label{eq:xstar-beta}
x_{\beta^p}^{-p,*}
=
\frac{e^{\beta^p d^p}-e^{\beta^p b^p}}
{\bigl(e^{\beta^p a^p}-e^{\beta^p c^p}\bigr)+\bigl(e^{\beta^p d^p}-e^{\beta^p b^p}\bigr)} .
\end{equation}
At \(\beta^p=0\), it has the continuous extension
\begin{align*}
x_{0}^{-p,*}
=
\frac{d^p-b^p}{(a^p-c^p)+(d^p-b^p)}.
\end{align*}
Moreover, \(\Delta_{\beta^p}^p(z)<0\) if and only if \(z<x_{\beta^p}^{-p,*}\), and \(\Delta_{\beta^p}^p(z)>0\) if and only if \(z>x_{\beta^p}^{-p,*}\). Thus, the ERM-adjusted best response changes from action~\(2\) to action~\(1\) as \(z\) crosses \(x_{\beta^p}^{-p,*}\), with both actions being best responses at \(z=x_{\beta^p}^{-p,*}\).
\end{lemma}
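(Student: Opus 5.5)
The plan is to handle \(\beta^p\neq0\) and \(\beta^p=0\) separately, and to reduce the sign of \(\Delta^p_{\beta^p}(z)\) to the sign of an affine function of \(z\).

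For \(\beta^p=0\), \(\Delta_0^p(z)=(a^p-c^p)z-(d^p-b^p)(1-z)\). This is affine with slope \((a^p-c^p)+(d^p-b^p)>0\) by the coordination conditions. Its value at \(z=0\) is \(-(d^p-b^p)<0\) and at \(z=1\) it is \(a^p-c^p>0\). So it is strictly increasing, and its unique zero is the stated \(x_0^{-p,*}\in(0,1)\).

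For \(\beta^p\neq0\), write
\[
\Delta^p_{\beta^p}(z)=\tfrac{1}{\beta^p}\log\frac{A(z)}{C(z)},\qquad A(z)=ze^{\beta^p a^p}+(1-z)e^{\beta^p b^p},\quad C(z)=ze^{\beta^p c^p}+(1-z)e^{\beta^p d^p}.
\]
Both \(A\) and \(C\) are positive on \([0,1]\). For strict monotonicity I will differentiate:
\[
\frac{d}{dz}\log\frac{A}{C}=\frac{A'C-AC'}{AC}.
\]
Since \(A\) and \(C\) are affine, \(A'C-AC'\) does not depend on \(z\). Evaluating at \(z=0\) gives
\[
A'C-AC'=(e^{\beta a}-e^{\beta b})e^{\beta d}-e^{\beta b}(e^{\beta c}-e^{\beta d})=e^{\beta(a+d)}-e^{\beta(b+c)},
\]
dropping the index \(p\) for brevity. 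The key observation is that \(a+d>b+c\), which follows from \(a>c\) and \(d>b\). Hence \(e^{\beta(a+d)}-e^{\beta(b+c)}\) has the sign of \(\beta\), so dividing by \(\beta\) yields a strictly positive derivative on \((0,1)\). This gives strict monotonicity.

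For the zero, \(\Delta=0\) iff \(A(z)=C(z)\), which is the linear equation
\[
z\bigl[(e^{\beta a}-e^{\beta c})+(e^{\beta d}-e^{\beta b})\bigr]=e^{\beta d}-e^{\beta b}.
\]
The two bracketed terms share the sign of \(\beta\) and are nonzero, because \(a>c\) and \(d>b\). The ratio is therefore in \((0,1)\), which gives the formula \eqref{eq:xstar-beta}.

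For the continuous extension, as \(\beta\to0\) I will divide numerator and denominator by \(\beta\) and apply \(e^{\beta u}-e^{\beta v}=\beta(u-v)+O(\beta^2)\). This recovers \(x_0^{-p,*}\).

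The sign characterization (\(\Delta<0\) iff \(z<x^*\), \(\Delta>0\) iff \(z>x^*\)) then follows from strict monotonicity together with uniqueness of the zero. The best-response statement is just a rewording of this via the definition of the ERM-adjusted best response.

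The main obstacle, mild here, is the sign bookkeeping for negative \(\beta\): both the \(1/\beta\) prefactor and the exponential differences flip sign together, and these flips must cancel. The identity \(A'C-AC'=e^{\beta(a+d)}-e^{\beta(b+c)}\) handles this cleanly.
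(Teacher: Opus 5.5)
Your proof is correct and follows essentially the same route as the paper: both treat \(\beta^p=0\) as the affine case, and for \(\beta^p\neq0\) both differentiate \(\Delta\) and reduce the sign of the derivative to \(\operatorname{sgn}(\beta)\) times the sign of \((a-c)+(d-b)>0\), with the two sign flips canceling. The differences are cosmetic: you combine \(A'/A-C'/C\) into the constant-numerator form \(\bigl(e^{\beta(a+d)}-e^{\beta(b+c)}\bigr)/(AC)\), whereas the paper uses monotonicity of \(\alpha\mapsto(\alpha-1)/(z\alpha+1-z)\); and you locate the zero directly from the linear equation \(A(z)=C(z)\), whereas the paper uses the intermediate value theorem with \(\Delta(0)=b-d<0<a-c=\Delta(1)\).
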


\begin{lemma}\label{lem:xstar-beta-monotonicity}
Fix a population \(p\), and let \(x_{\beta^p}^{-p,*}\) be the unique threshold obtained by solving \eqref{eq:xstar-beta}. Then, the map \(\beta^p \mapsto x_{\beta^p}^{-p,*}\) is continuous on \(\mathbb R\). Moreover:
\begin{enumerate}
    \item[(I)] If \(b^p>c^p\) and \(d^p>a^p\), then \(\beta^p \mapsto x_{\beta^p}^{-p,*}\) is strictly increasing, with
    \[
    \lim_{\beta^p\to-\infty} x_{\beta^p}^{-p,*}=0,
    \qquad
    \lim_{\beta^p\to\infty} x_{\beta^p}^{-p,*}=1.
    \]
    
    \item[(II)] If \(b^p<c^p\) and \(d^p<a^p\), then \(\beta^p \mapsto x_{\beta^p}^{-p,*}\) is strictly decreasing, with
    \[
    \lim_{\beta^p\to-\infty} x_{\beta^p}^{-p,*}=1,
    \qquad
    \lim_{\beta^p\to\infty} x_{\beta^p}^{-p,*}=0.
    \]
    
    \item[(III)] If \(b^p\ge c^p\) and \(d^p\le a^p\), with at least one strict inequality, then
    \[
    x_{\beta^p}^{-p,*}<\tfrac12
    \qquad
    \text{for all } \beta^p\in\mathbb R.
    \]
    
    \item[(IV)] If \(b^p\le c^p\) and \(d^p\ge a^p\), with at least one strict inequality, then
    \[
    x_{\beta^p}^{-p,*}>\tfrac12
    \qquad
    \text{for all } \beta^p\in\mathbb R.
    \]
\end{enumerate}
In the degenerate case where \(a^p=d^p\) and \(b^p=c^p\), we have \(x_{\beta^p}^{-p,*}\equiv \tfrac12\). In the single-population setting, the superscripts \(p\) and \(-p\) are omitted for notational convenience. 
\end{lemma}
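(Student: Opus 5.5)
Write $A=a^p$, $B=b^p$, $C=c^p$, $D=d^p$ and $\beta=\beta^p$, and set
\[
u(\beta):=\frac{e^{\beta A}-e^{\beta C}}{\beta},\qquad v(\beta):=\frac{e^{\beta D}-e^{\beta B}}{\beta},
\]
with $u(0)=A-C$ and $v(0)=D-B$. The plan is to work with the normalized numerator and denominator rather than with \eqref{eq:xstar-beta} directly. Dividing numerator and denominator of \eqref{eq:xstar-beta} by $\beta$ gives $x_\beta^{*}=v/(u+v)$ for $\beta\neq0$. Since $A>C$ and $D>B$, both $u$ and $v$ are strictly positive and continuous on $\mathbb R$, including at $\beta=0$ by the derivative of the exponential. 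Continuity of $\beta\mapsto x^*_\beta$ then follows, and the value at $0$ matches the extension in Lemma~\ref{lm:delta}.

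Everything reduces to the ratio
\[
r(\beta):=\frac{v(\beta)}{u(\beta)}=\frac{e^{\beta D}-e^{\beta B}}{e^{\beta A}-e^{\beta C}},
\]
since $x^*_\beta=r/(1+r)$ is an increasing function of $r$.

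\textbf{Monotonicity for (I) and (II).} The plan is to write each difference as an integral:
\[
e^{\beta D}-e^{\beta B}=\beta\int_B^D e^{\beta t}\,dt.
\]
This gives
\[
r(\beta)=\frac{\int_B^D e^{\beta t}\,dt}{\int_C^A e^{\beta t}\,dt}=\frac{D-B}{A-C}\cdot e^{\beta(B-C)}\,\frac{\int_0^{D-B}e^{\beta s}\,ds}{\int_0^{A-C}e^{\beta s}\,ds}.
\]
A cleaner route uses the log-derivative:
\[
\frac{d}{d\beta}\log\!\int_B^D e^{\beta t}\,dt=\mathbb E[T],
\]
where $T$ has density proportional to $e^{\beta t}$ on $[B,D]$. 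Hence $\frac{d}{d\beta}\log r=\mathbb E_{[B,D]}[T]-\mathbb E_{[C,A]}[T]$ under the exponentially tilted uniform laws.

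The key step is to compare these two tilted means. In case (I) we have $B>C$ and $D>A$, so the interval $[B,D]$ lies strictly above $[C,A]$ endpoint-wise. A family $\{U_{[\ell,h]}\}$ of tilted uniforms with the same tilt is stochastically increasing in each endpoint; this follows from the monotone likelihood ratio of the truncation. Therefore the derivative is strictly positive and $x^*_\beta$ is strictly increasing. Case (II) is the mirror image and gives a strictly decreasing map. I expect this stochastic-ordering argument to be the main technical point. If it proves awkward, I would fall back on a direct sign computation of $\partial_\beta \log r$.

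\textbf{Limits.} For $\beta\to\infty$ the largest exponents dominate, so $r\sim e^{\beta(D-A)}$. In case (I) this tends to $\infty$, giving $x^*\to1$. For $\beta\to-\infty$ the smallest exponents dominate: $r\approx e^{\beta B}/e^{\beta C}=e^{\beta(B-C)}\to0$ because $B>C$, giving $x^*\to0$. Case (II) follows symmetrically.

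\textbf{Cases (III) and (IV).} Here the claim is $x^*<\tfrac12$, which is equivalent to $r<1$, i.e.
\[
\int_B^D e^{\beta t}\,dt<\int_C^A e^{\beta t}\,dt.
\]
In case (III) we have $C\le B<D\le A$. Note $B<D$ holds since $D>B$ by the coordination condition. So $[B,D]\subseteq[C,A]$, and at least one endpoint differs because at least one inequality is strict. Since the integrand is positive, the inclusion of intervals gives the strict inequality for every $\beta$, including $\beta=0$. Case (IV) has the reverse inclusion $[C,A]\subseteq[B,D]$ and gives $x^*>\tfrac12$.

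In the degenerate case $A=D$ and $B=C$ the two integrals coincide, so $r\equiv1$ and $x^*\equiv\tfrac12$.
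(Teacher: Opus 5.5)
Your proof is correct, and it shares the paper's first reduction. Your $\frac{d}{d\beta}\log r$ is exactly the paper's $F(\beta)=w_2(\beta)-w_1(\beta)$, with $\frac{dx^*_\beta}{d\beta}=x^*_\beta(1-x^*_\beta)F(\beta)$. Your treatment of (III), (IV) and the degenerate case, by interval inclusion for $\int e^{\beta t}\,dt$, is the same as the paper's.

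Where you differ is in how you sign $F$. The paper does not read $F$ as a difference of tilted means. It differentiates once more to get $F'(\beta)=g(a-c)-g(d-b)$ with $g(z)=z^2e^{\beta z}/(e^{\beta z}-1)^2$, and shows $g$ is decreasing via the auxiliary function $h(r)=(r-2)e^r+r+2$. It then patches the removable singularity at $\beta=0$ to conclude that $F$ is monotone on $\mathbb R$, and reads off the sign of $F$ from its endpoint limits $b-c$ and $d-a$.

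You instead sign $F$ pointwise by first-order stochastic dominance of truncated exponential laws, and that step is sound. This route is shorter and gives strictness immediately. It also needs no special care at $\beta=0$, because $\int_B^D e^{\beta t}\,dt$ is smooth on all of $\mathbb R$.

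What the paper's longer computation buys is the fact that $g$ is decreasing, i.e.\ the tilted variance on an interval grows with its length. The paper reuses this in Lemma~\ref{lem:Slogit}, where what is needed is the \emph{monotonicity in $\beta$} of a difference of tilted means, not merely its sign. Your stochastic-ordering argument does not supply that.

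Two small points. First, your intermediate display for $r(\beta)$ carries a spurious factor $\frac{D-B}{A-C}$; this is harmless, since you abandon that route. Second, when $B\ge A$, move the upper endpoint first, $[C,A]\to[C,D]\to[B,D]$, so that the intermediate interval is nondegenerate.
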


\subsection{A Unified Framework for Stochastic-Stability Analysis}\label{subsec:stochastic-stability}

We now introduce a stochastic-stability analysis framework, adapted from \cite{sandholm2010orders,staudigl2012stochastic}, that will unify our subsequent analyses in the single-population and two-population settings. Throughout this subsection, \(N\) is fixed. For each \(\eta>0\), the risk-sensitive dynamics induce a discrete-time Markov chain
\[
X^{N,\eta}:=\{X_k^{N,\eta}\}_{k\ge 0}
\]
on the finite state space \(X^N\), where \(X_k^{N,\eta}\) denotes the population state after \(k\) revision opportunities. For the noisy best-response protocols studied here, each action is chosen with strictly positive probability for every \(\eta>0\). Hence, from any state, every adjacent state can be reached with positive probability whenever the corresponding transition is feasible, and the chain also has positive self-loop probability. Therefore, the chain is irreducible and aperiodic on \(X^N\). Since \(X^N\) is finite, the standard finite-state Markov-chain theorem implies that the chain admits a unique stationary distribution, denoted by \(\mu^{N,\eta}\); see e.g., \cite{stroock2005introduction} for the standard argument on this result.


\begin{definition}[Stochastic stability]\label{def:sse}
A state \(x\in X^N\) is \emph{stochastically stable} if 
\[\lim_{\eta\to0}\mu^{N,\eta}(x) >0,\]
and \emph{uniquely stochastically stable} if 
\[\lim_{\eta\to0}\mu^{N,\eta}(x)=1.\]
\end{definition}

Intuitively, a stochastically stable state retains strictly positive stationary probability as the noise level vanishes. Unique stochastic stability means that the stationary distribution concentrates entirely on that state. This is the standard stochastic-stability criterion for finite-state small-noise Markov chains; see e.g., \cite{young1993evolution,kandori1993learning}.

To analyze stochastic stability, we record the exponential decay rates of one-step transition probabilities. For states \(x,y\in X^N\), let
\[
P^{N,\eta}(x,y)
:=
\mathbb P\!\left(X_{k+1}^{N,\eta}=y \,\middle|\, X_k^{N,\eta}=x\right)
\]
denote the one-step transition probability. Since the noisy best-response choice functions satisfy \(\sigma^\eta(a)\in(0,1)\) for every \(a\in\mathbb R\) and every \(\eta>0\), the support of \(P^{N,\eta}\) is independent of \(\eta\). Thus, if a one-step transition has positive probability for some \(\eta>0\), then it has positive probability for every \(\eta>0\).

\begin{definition}[Transition costs]
For distinct states \(x\neq y\), we write \(x\to y\) if \(y\) is reachable from \(x\) in one feasible revision step, which is equivalent to \(P^{N,\eta}(x,y)>0\) for every \(\eta>0\). Under Assumption~\ref{ass:cost}, the \emph{small-noise cost} of a feasible one-step transition \(x\to y\) is
\begin{equation}\label{eq:ldp-step}
\kappa^N(x,y):=
\lim_{\eta\to 0}\bigl[-\eta\log P^{N,\eta}(x,y)\bigr].
\end{equation}
For infeasible transitions, we set \(\kappa^N(x,y)=+\infty\). A feasible transition \(x\to y\) is called a \emph{zero-cost transition} if \(\kappa^N(x,y)=0\).
\end{definition}

The limit in \eqref{eq:ldp-step} exists because the non-exponential factors in \(P^{N,\eta}(x,y)\), such as the probability of selecting a particular population or action type, do not affect the value of \(-\eta\log P^{N,\eta}(x,y)\) as \(\eta\to0\), while Assumption~\ref{ass:cost} controls the exponential decay rate of the revision probability. Under the revision protocols considered here, zero-cost transitions are exactly the one-step updates that either follow an ERM-adjusted best response or occur along a tie.

\begin{definition}[Paths and path costs]
A \emph{path} is a finite sequence of states $\phi=(\phi_0,\phi_1,\ldots,\phi_K)$ 
in \(X^N\) such that \(\phi_k\to\phi_{k+1}\) for all \(k=0,\ldots,K-1\). The \emph{cost} of a path \(\phi\) is
\[
C^N(\phi):=\sum_{k=0}^{K-1}\kappa^N(\phi_k,\phi_{k+1}).
\]
For states \(x,y\in X^N\), the \emph{minimal path cost} from \(x\) to \(y\) is
\begin{align*}
C^N(x,y):=
\min\bigl\{C^N(\phi):\phi_0=x,\ \phi_K=y\bigr\}.
\end{align*}
More generally, for a state \(x\in X^N\) and a nonempty set \(E\subseteq X^N\), the \emph{minimal path cost} from \(x\) to \(E\) is
\begin{align*}
C^N(x,E):=
\min\bigl\{C^N(\phi):\phi_0=x,\ \phi_K\in E\bigr\}.
\end{align*}
\end{definition}

The zero-cost transitions induce a directed graph on \(X^N\). We refer to this graph as the \emph{zero-noise dynamics}, because it describes the limiting motion of the noisy best response dynamics as \(\eta\to0\).

\begin{definition}[Attracting state, stability set, and exit boundary]\label{def:zero-noise-stability-set}
A state \(A\in X^N\) is an
\emph{attracting state} if \(\{A\}\) is a singleton sink recurrent class of the zero-noise dynamics; equivalently, there is no zero-cost transition from \(A\) to any
state outside \(\{A\}\). 
The \emph{stability set} of \(A\), denoted by \(\mathcal B_0^N(A)\), is the largest subset of \(X^N\) containing \(A\) such that every state in \(\mathcal B_0^N(A)\) reaches \(A\) along a zero-cost path, and no state in \(\mathcal B_0^N(A)\) has a zero-cost path to \(X^N\setminus \mathcal B_0^N(A)\). The corresponding \emph{exit boundary} of \(A\) is
\[
\Exit^N(A):=
\Bigl\{
x\in X^N\setminus \mathcal B_0^N(A):
\exists\, y\in \mathcal B_0^N(A)\ \text{such that } y\to x
\Bigr\}.
\]
\end{definition}

In coordination games, the zero-noise dynamics have two attracting states corresponding to the two coordinated equilibria. The following assumption isolates this structure.

\begin{assumption}\label{ass:2attractor}
The zero-noise dynamics have exactly two attracting states, denoted by \(A\) and \(B\). Moreover:
\begin{enumerate}
\item[(I)] Every state \(x\in X^N\setminus\{A,B\}\) has a zero-cost path to at least one of \(A\) or \(B\).
\item[(II)] If \(x\notin \mathcal B_0^N(A)\), then \(C^N(x,B)=0\), and if \(x\notin \mathcal B_0^N(B)\), then \(C^N(x,A)=0\).
\end{enumerate}
\end{assumption}

Assumption~\ref{ass:2attractor} will be verified explicitly in both the single-population and two-population settings. Under this assumption, stochastic-stability analysis reduces to comparing the minimal costs of escaping the two stability sets, as formalized in the following theorem.

\begin{theorem}[Stochastic stability via escape costs]\label{thm:sse}
Fix the population-size parameter \(N\), and consider the Markov chain \(X^{N,\eta}\) induced by the risk-sensitive dynamics. Suppose Assumptions~\ref{ass:cost} and~\ref{ass:2attractor} hold, with attracting states \(A\) and \(B\). Define
\[
\gamma_A^N:=C^N\!\bigl(B,\Exit^N(B)\bigr),
\qquad
\gamma_B^N:=C^N\!\bigl(A,\Exit^N(A)\bigr).
\]
Then, we have
\begin{equation}\label{eq:mass-ratio}
\lim_{\eta\to 0}\eta\log\frac{\mu^{N,\eta}(B)}{\mu^{N,\eta}(A)}
=
\gamma_A^N-\gamma_B^N.
\end{equation}
Consequently, \(A\) is uniquely stochastically stable if \(\gamma_A^N<\gamma_B^N\), while \(B\) is uniquely stochastically stable if \(\gamma_B^N<\gamma_A^N\).
\end{theorem}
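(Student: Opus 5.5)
The plan is to use the Freidlin--Wentzell / Markov chain tree theorem for finite-state chains with exponentially small transition probabilities.

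\textbf{Step 1: tree representation.} By the Markov chain tree theorem,
\[
\mu^{N,\eta}(x)\propto\sum_{T\in\mathcal T_x}\prod_{(y\to z)\in T}P^{N,\eta}(y,z),
\]
where \(\mathcal T_x\) is the set of spanning trees directed into \(x\). Each feasible \(P^{N,\eta}(y,z)\) equals a positive \(\eta\)-independent factor times a revision probability. By Assumption~\ref{ass:cost}, \(-\eta\log P^{N,\eta}(y,z)\to\kappa^N(y,z)\). Since \(\mathcal T_x\) is finite, the dominant exponential term gives
\[
\lim_{\eta\to0}-\eta\log\Bigl(\sum_{T\in\mathcal T_x}\prod P\Bigr)=W(x):=\min_{T\in\mathcal T_x}\sum_{(y,z)\in T}\kappa^N(y,z).
\]
Taking ratios, the normalizing constant cancels, so
\[
\lim_{\eta\to0}\eta\log\frac{\mu^{N,\eta}(B)}{\mu^{N,\eta}(A)}=W(A)-W(B).
\]
It therefore suffices to show \(W(A)-W(B)=\gamma_A^N-\gamma_B^N\).

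\textbf{Step 2: reduction to two attractors.} The standard Freidlin--Wentzell reduction collapses trees to the recurrent classes of the zero-noise graph. By Assumption~\ref{ass:2attractor}(I), the only such classes are \(\{A\}\) and \(\{B\}\), since every other state zero-cost reaches one of them. This gives
\[
W(A)=\min_{x}W(x)+\ldots
\]
and, more concretely, \(W(A)=C^N(B,A)\) and \(W(B)=C^N(A,B)\). The upper bound comes from a construction: form a tree into \(A\) by taking a zero-cost in-forest rooted at \(A\) or \(B\) for every other state (shortest zero-cost paths), then add a minimal-cost path from \(B\) to \(A\) and reroute the states along it. The lower bound holds because any tree into \(A\) contains a path from \(B\) to \(A\), whose cost is at least \(C^N(B,A)\).

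\textbf{Step 3: identify \(C^N(B,A)\) with \(\gamma_A^N\).} This is the main obstacle, since the stability-set definition must be used carefully.

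For the lower bound, any path from \(B\) to \(A\) must leave \(\mathcal B_0^N(B)\). Its first state outside \(\mathcal B_0^N(B)\) lies in \(\Exit^N(B)\), because it is reached by one step from inside. Hence the path cost is at least \(C^N(B,\Exit^N(B))\).

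For the upper bound, take an optimal path from \(B\) to some \(x\in\Exit^N(B)\). Since \(x\notin\mathcal B_0^N(B)\), Assumption~\ref{ass:2attractor}(II) gives \(C^N(x,A)=0\), so the path can be extended to \(A\) at zero cost. Hence \(C^N(B,A)=\gamma_A^N\), and symmetrically \(C^N(A,B)=\gamma_B^N\). (Note that \(A\notin\mathcal B_0^N(B)\), since \(A\) is attracting and cannot zero-cost reach \(B\).)

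\textbf{Step 4: conclusion.} Combining Steps 1--3 yields \eqref{eq:mass-ratio}. If \(\gamma_A^N<\gamma_B^N\), then \(\mu^{N,\eta}(B)/\mu^{N,\eta}(A)\to0\) exponentially. The same tree argument applied to any non-attracting \(x\) gives \(W(x)>W(A)\): any tree into \(x\) must still contain exits from both \(A\) and \(B\), costing at least \(\gamma_A^N+\gamma_B^N>\gamma_A^N\). So \(\mu^{N,\eta}(x)\to0\) for all \(x\neq A\), giving \(\mu^{N,\eta}(A)\to1\). The case \(\gamma_B^N<\gamma_A^N\) is symmetric.
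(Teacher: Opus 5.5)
Your Steps 1--3 are correct and match the paper's proof of \eqref{eq:mass-ratio}. The paper also uses the Markov chain tree formula. For the upper bound it builds an $A$-rooted tree from an optimal path $B\to\Exit^N(B)$, continued at zero cost to $A$ via Assumption~\ref{ass:2attractor}(II). For the lower bound it uses the first exit of the tree path from $\mathcal B_0^N(B)$. You only factor this through the intermediate identity $W(A)=C^N(B,A)$. The fragment ``$W(A)=\min_x W(x)+\ldots$'' says nothing and should be deleted.

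The gap is in Step 4. There you assert that every tree rooted at a non-attracting $x$ costs at least $\gamma_A^N+\gamma_B^N$. This is false when $x$ lies in a stability set, because the tree path from $A$ to $x$ then need not leave $\mathcal B_0^N(A)$. More generally, the tree paths from $A$ and from $B$ to $x$ may merge, so their exit costs cannot simply be added.

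Here is a concrete counterexample. Take single-population \texttt{RS-BRM} with $A=0$, $B=1$ and $1/N<x_\beta^*$. The unique tree rooted at $x=1/N$ costs $1+(N-\lfloor Nx_\beta^*\rfloor)=1+\gamma_A^N$. This is strictly below $\gamma_A^N+\gamma_B^N=\gamma_A^N+\lceil Nx_\beta^*\rceil$ as soon as $\lceil Nx_\beta^*\rceil\ge 2$.

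What you actually need is only $W(x)>\min\{W(A),W(B)\}$ for $x\notin\{A,B\}$, and this needs a different argument:
\begin{itemize}
\item Take an optimal $x$-rooted tree.
\item Choose a simple zero-cost path from $x$ to some $y\in\{A,B\}$, which exists by Assumption~\ref{ass:2attractor}(I).
\item Point each state of this path other than $y$ to its successor on the path, and delete the tree edge leaving $y$.
\end{itemize}
The result is a $y$-rooted spanning tree. Since $y$ is attracting, the deleted edge has strictly positive cost, so $W(y)<W(x)$.

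The paper handles this step differently, via the limiting stochastic matrix $P^{N,0}=\lim_{\eta\to0}P^{N,\eta}$. This limit exists by the assumed vanishing-noise behavior of $\sigma^\eta$, including $\lim_{\eta\to0}\sigma^\eta(0)=q$. Every subsequential limit of $\mu^{N,\eta}$ is $P^{N,0}$-stationary, hence supported on its only recurrent classes $\{A\}$ and $\{B\}$. So $\mu^{N,\eta}(A)+\mu^{N,\eta}(B)\to 1$, and combining this with \eqref{eq:mass-ratio} gives the conclusion. Once repaired, your potential-based route avoids needing the limit $P^{N,0}$ and gives exponential decay of $\mu^{N,\eta}(x)$ for every $x\neq A$.
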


Theorem~\ref{thm:sse} follows from the standard Freidlin--Wentzell tree representation used in stochastic-stability analysis; see e.g., \cite{freidlin1998random,staudigl2012stochastic,sandholm2016large}. For completeness, we provide a short proof in Appendix~\ref{app:thm-sse-proof}.

\section{Single-Population Symmetric Setting}

Since only one agent revises at each time step, the feasible one-step transitions are \(x\to x+\frac1N\), generated by an agent switching from action \(2\) to action \(1\), and \(x\to x-\frac1N\), generated by an agent switching from action \(1\) to action \(2\). Recalling the one-step cost function \(c(\cdot)\) from Assumption~\ref{ass:cost}, \eqref{eq:ldp-step} gives
\begin{align}\label{eq:1pop-kappa-compact}
\kappa^N\!\left(x,x+\tfrac1N\right)
=c\!\bigl((-\Delta_\beta(x))_+\bigr),\qquad\qquad  
\kappa^N\!\left(x,x-\tfrac1N\right)
=c\!\bigl((\Delta_\beta(x))_+\bigr).
\end{align}

Let \(x_\beta^*\in(0,1)\) be the unique threshold from Lemma~\ref{lm:delta}. Define the adjacent grid points
\[
\underline{x}_\beta^N:=\frac{\lfloor N x_\beta^*\rfloor}{N},
\qquad
\bar{x}_\beta^N:=\frac{\lceil N x_\beta^*\rceil}{N}.
\]
The next proposition verifies Assumption~\ref{ass:2attractor}.

\begin{proposition}\label{prop:1pop-ass2}
Under the single-population risk-sensitive dynamics induced by \eqref{eq:risk-rev}, with transition costs given by \eqref{eq:1pop-kappa-compact}, the zero-noise dynamics have exactly two attracting states, namely \(0\) and \(1\).

If \(N x_\beta^*\notin\mathbb Z\), then
\[
\mathcal B_0^N(0)
=
\Bigl\{0,\frac1N,\ldots,\underline{x}_\beta^N\Bigr\},
\qquad
\mathcal B_0^N(1)
=
\Bigl\{\bar{x}_\beta^N,\bar{x}_\beta^N+\frac1N,\ldots,1\Bigr\},
\]
with
\begin{align*}
\Exit^N(0)=\{\bar{x}_\beta^N\},
\qquad
\Exit^N(1)=\{\underline{x}_\beta^N\}.
\end{align*}
If \(N x_\beta^*\in\mathbb Z\), so that \(x_\beta^*\in X^N\), then
\[
\mathcal B_0^N(0)
=
\Bigl\{0,\frac1N,\ldots,x_\beta^*-\frac1N\Bigr\},
\qquad
\mathcal B_0^N(1)
=
\Bigl\{x_\beta^*+\frac1N,x_\beta^*+\frac2N,\ldots,1\Bigr\},
\]
with
\begin{align*}
\Exit^N(0)=\Exit^N(1)=\{x_\beta^*\}.
\end{align*}
In either case, Assumption~\ref{ass:2attractor} holds with \(A=0\) and \(B=1\).
\end{proposition}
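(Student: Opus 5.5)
The plan is to first describe the zero-noise dynamics completely, using Lemma~\ref{lm:delta} and the cost formula \eqref{eq:1pop-kappa-compact}; everything else then follows directly from the definitions.

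\textbf{Step 1: the zero-cost transitions.} By Assumption~\ref{ass:cost}, $c(d)=0$ if and only if $d\le 0$. Since $c$ in \eqref{eq:1pop-kappa-compact} is evaluated at a positive part, we get
\[
\kappa^N(x,x+\tfrac1N)=0 \iff \Delta_\beta(x)\ge 0,
\qquad
\kappa^N(x,x-\tfrac1N)=0 \iff \Delta_\beta(x)\le 0,
\]
restricted to feasible moves. There is no upward move from $x=1$ and no downward move from $x=0$. The ERM payoffs extend continuously to the endpoints, and strict monotonicity of $\Delta_\beta$ extends to $[0,1]$ (for instance, $\Delta_\beta(0)=b-d<0$ and $\Delta_\beta(1)=a-c>0$). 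Combining this with Lemma~\ref{lm:delta}:
\begin{itemize}
\item for $x<x_\beta^*$, the only zero-cost move is downward;
\item for $x>x_\beta^*$, the only zero-cost move is upward;
\item if $x_\beta^*\in X^N$, then at $x=x_\beta^*$ both moves have zero cost.
\end{itemize}
So $0$ and $1$ have no zero-cost exits, and they are attracting. Every other state has a zero-cost move, so it is not a singleton sink. This gives exactly two attracting states.

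\textbf{Step 2: stability sets and exit boundaries.} Take first the case $Nx_\beta^*\notin\mathbb Z$. Every state $x\le \underline x^N_\beta$ satisfies $x<x_\beta^*$, so its zero-cost path moves monotonically down to $0$ and never leaves $\{0,\dots,\underline x^N_\beta\}$. Symmetrically, every state $x\ge \bar x^N_\beta$ reaches $1$ and stays in $\{\bar x^N_\beta,\dots,1\}$. Maximality holds because any larger set containing $0$ would include a state above $x_\beta^*$, and that state has no zero-cost path to $0$.

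For the exit boundary, the one-step moves out of $\mathcal B_0^N(0)$ are only $\underline x^N_\beta\to\bar x^N_\beta$, so $\Exit^N(0)=\{\bar x^N_\beta\}$. By symmetry, $\Exit^N(1)=\{\underline x^N_\beta\}$.

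In the case $Nx_\beta^*\in\mathbb Z$, the tie state $x_\beta^*$ has zero-cost paths to both $0$ and $1$. It therefore cannot belong to either stability set: if it were in $\mathcal B_0^N(0)$, it would have a zero-cost path to $1\notin\mathcal B_0^N(0)$, violating the defining condition. This yields the stated sets, and both exit boundaries equal $\{x_\beta^*\}$.

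\textbf{Step 3: verifying Assumption~\ref{ass:2attractor}.} Part (I) follows from the monotone zero-cost paths in Step 2, together with the tie state's zero-cost paths to both attractors. For part (II), take $x\notin\mathcal B_0^N(0)$. Then either $x\ge\bar x^N_\beta$, or $x>x_\beta^*$, or $x=x_\beta^*$. In every case there is a zero-cost path to $1$, so $C^N(x,1)=0$. The statement for $\mathcal B_0^N(1)$ is symmetric.

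\textbf{Main obstacle.} The only delicate point is the "largest subset" definition of the stability set in the integer tie case. I would handle it by showing two things: any candidate set must exclude every state that has a zero-cost path to the opposite attractor, and the proposed set is closed under zero-cost moves. Together these give maximality directly.
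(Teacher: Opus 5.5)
Your proposal is correct and follows essentially the same route as the paper: you read off the zero-cost moves from the sign of $\Delta_\beta$ via Lemma~\ref{lm:delta}, identify the downward-closed and upward-closed blocks as the stability sets, read off the one-step exits, and then check Assumption~\ref{ass:2attractor}. You also make explicit two points the paper's proof leaves implicit: the maximality of the stability sets (any state with a zero-cost path to the opposite attractor is excluded) and the endpoint check $\Delta_\beta(0)<0<\Delta_\beta(1)$.
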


\begin{proof}
By Lemma~\ref{lm:delta}, \(\Delta_\beta(x)<0\) for \(x<x_\beta^*\), and \(\Delta_\beta(x)>0\) for \(x>x_\beta^*\). Hence, by \eqref{eq:1pop-kappa-compact}, every state \(x<x_\beta^*\) has a zero-cost downward move \(x\to x-\frac1N\), while every state \(x>x_\beta^*\) has a zero-cost upward move \(x\to x+\frac1N\). If \(x_\beta^*\in X^N\), then \(\Delta_\beta(x_\beta^*)=0\), so both moves
\[
x_\beta^*\to x_\beta^*-\frac1N,
\qquad
x_\beta^*\to x_\beta^*+\frac1N
\]
have zero cost.

Define
\[
E_0:=\{x\in X^N:x<x_\beta^*\},
\qquad
E_1:=\{x\in X^N:x>x_\beta^*\}.
\]
If \(N x_\beta^*\notin\mathbb Z\), then
\[
E_0=\Bigl\{0,\frac1N,\ldots,\underline{x}_\beta^N\Bigr\},
\qquad
E_1=\Bigl\{\bar{x}_\beta^N,\bar{x}_\beta^N+\frac1N,\ldots,1\Bigr\}.
\]
If \(N x_\beta^*\in\mathbb Z\), then
\[
E_0=\Bigl\{0,\frac1N,\ldots,x_\beta^*-\frac1N\Bigr\},
\qquad
E_1=\Bigl\{x_\beta^*+\frac1N,x_\beta^*+\frac2N,\ldots,1\Bigr\}.
\]

Every state in \(E_0\) reaches \(0\) along repeated downward zero-cost moves. Moreover, no state in \(E_0\) has a zero-cost path to \(X^N\setminus E_0\), since any such path must contain an upward move from some state \(x<x_\beta^*\), which has positive cost. Therefore, \(\mathcal B_0^N(0)=E_0\). Similarly, every state in \(E_1\) reaches \(1\) along repeated upward zero-cost moves, and no state in \(E_1\) has a zero-cost path to \(X^N\setminus E_1\). Hence, \(\mathcal B_0^N(1)=E_1\).

The formulas for the exit boundaries follow directly. If \(N x_\beta^*\notin\mathbb Z\), the only one-step transition from \(\mathcal B_0^N(0)\) to its complement is  $
\underline{x}_\beta^N \to \bar{x}_\beta^N$, 
so \(\Exit^N(0)=\{\bar{x}_\beta^N\}\). Similarly, the only one-step transition from \(\mathcal B_0^N(1)\) to its complement is $\bar{x}_\beta^N \to \underline{x}_\beta^N$, 
so \(\Exit^N(1)=\{\underline{x}_\beta^N\}\). If \(N x_\beta^*\in\mathbb Z\), then the first state outside either stability set that can be reached in one step is \(x_\beta^*\), and thus
\[
\Exit^N(0)=\Exit^N(1)=\{x_\beta^*\}.
\]

It remains to verify Assumption~\ref{ass:2attractor}. The zero-noise dynamics move downward on \(E_0\) and upward on \(E_1\), while at \(x_\beta^*\), when \(x_\beta^*\in X^N\), both adjacent moves have zero cost. Hence, the only attracting states are \(0\) and \(1\). Assumption~\ref{ass:2attractor} \textup{(I)} holds because every state \(x\in X^N\setminus\{0,1\}\) either lies in \(E_0\), in which case it has a zero-cost path to \(0\), or lies in \(E_1\), in which case it has a zero-cost path to \(1\), or equals \(x_\beta^*\), in which case it has zero-cost paths to both \(0\) and \(1\). Assumption~\ref{ass:2attractor} \textup{(II)} holds because if \(x\notin \mathcal B_0^N(0)\), then \(x\ge x_\beta^*\), so \(x\) has a zero-cost path to \(1\); similarly, if \(x\notin \mathcal B_0^N(1)\), then \(x\le x_\beta^*\), so \(x\) has a zero-cost path to \(0\).
\end{proof}

With the above notation, the escape costs from the two stability sets are
\begin{align}\label{eq:escape-costs-1pop-general}
\gamma_0^N
:=C^N\!\bigl(1,\Exit^N(1)\bigr)
=
C^N\!\bigl(1,\{\underline{x}_\beta^N\}\bigr),~~~
\gamma_1^N
:=C^N\!\bigl(0,\Exit^N(0)\bigr)
=
C^N\!\bigl(0,\{\bar{x}_\beta^N\}\bigr).
\end{align}
We also define the selection coefficient
\begin{equation}\label{eq:SN}
S^N:=\gamma_1^N-\gamma_0^N.
\end{equation}
By Theorem~\ref{thm:sse}, state \(1\) is uniquely stochastically stable if \(S^N<0\), whereas state \(0\) is uniquely stochastically stable if \(S^N>0\). We next establish the stochastic stability results under both risk-sensitive dynamics of \texttt{RS-BRM} and \texttt{RS-logit}.

\begin{theorem}[Stochastic stability under \texttt{RS-BRM}]
\label{thm:rsbrm_sse}
Fix \(\beta\in\mathbb R\), and let \(x_\beta^*\in(0,1)\) be the unique threshold from Lemma~\ref{lm:delta}. Then, there exists \(N_0(\beta)\) such that, for every \(N\ge N_0(\beta)\), the unique stochastically stable state under \texttt{RS-BRM} is $1$ if $x_\beta^*<\tfrac12$, and is $0$ if $x_\beta^*>\tfrac12$. Consequently, for every fixed \(\beta\), the following conclusions hold
for all \(N\ge N_0(\beta)\):
\begin{enumerate}
\item[(I)] If \(a>d\) and \(b<c\), or \(a<d\) and \(b>c\), then there
    exists a unique \(\beta^\dagger\in\mathbb R\) such that the
    payoff-dominant equilibrium is uniquely stochastically stable when
    \(\beta>\beta^\dagger\), whereas the maximin equilibrium is uniquely
    stochastically stable when \(\beta<\beta^\dagger\).

\item[(II)] If \(a\ge d\) and \(b\ge c\), or \(a\le d\) and \(b\le c\),
    with at least one strict inequality, then the super-dominant equilibrium
    is uniquely stochastically stable.
\end{enumerate}
\end{theorem}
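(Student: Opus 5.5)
The plan is to compute the escape costs \eqref{eq:escape-costs-1pop-general} explicitly under \texttt{RS-BRM}, where $c(d)=\mathbf 1_{\{d>0\}}$. By \eqref{eq:1pop-kappa-compact}, an upward move $x\to x+\frac1N$ costs $1$ exactly when $\Delta_\beta(x)<0$, i.e.\ $x<x_\beta^*$, and costs $0$ otherwise. A downward move costs $1$ exactly when $x>x_\beta^*$. Since the state space is a line and only nearest-neighbour moves are feasible, I will take the minimal-cost path from $0$ to $\bar x_\beta^N$ to be monotone upward. Any downward step inside $\mathcal B_0^N(0)$ is free but only has to be undone later at positive cost. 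Summing costs along that path, $\gamma_1^N$ equals the number of grid points $x\in\{0,\frac1N,\dots\}$ with $x<x_\beta^*$ from which the path steps up. Likewise, $\gamma_0^N$ is the number of grid points $x>x_\beta^*$ traversed downward from $1$ to $\underline x_\beta^N$. Concretely, $\gamma_1^N=\lceil Nx_\beta^*\rceil$ and $\gamma_0^N=\lceil N(1-x_\beta^*)\rceil$. I will double-check the off-by-one conventions separately in the cases $Nx_\beta^*\in\mathbb Z$ and $Nx_\beta^*\notin\mathbb Z$; at a tie point the move is free, which is consistent with Proposition~\ref{prop:1pop-ass2}.

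With these counts, $S^N=\gamma_1^N-\gamma_0^N=N(2x_\beta^*-1)+O(1)$, with the $O(1)$ term bounded by $1$ in absolute value. If $x_\beta^*<\frac12$, choose $N_0(\beta)$ with $N_0(1-2x_\beta^*)>1$. Then $S^N<0$ for all $N\ge N_0$, so Theorem~\ref{thm:sse} makes state $1$ uniquely stochastically stable. The case $x_\beta^*>\frac12$ is symmetric. Note that $N_0$ blows up as $x_\beta^*\to\frac12$, which is why it depends on $\beta$.

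Parts (I) and (II) then come from Lemma~\ref{lem:xstar-beta-monotonicity}. For (I), take first $a>d$ and $b<c$. This is case (II) of that lemma, so $\beta\mapsto x_\beta^*$ is continuous and strictly decreasing from $1$ to $0$. Hence there is a unique $\beta^\dagger$ with $x_{\beta^\dagger}^*=\frac12$. For $\beta>\beta^\dagger$ we get $x_\beta^*<\frac12$, so state $1$, the payoff-dominant equilibrium, is selected. For $\beta<\beta^\dagger$ state $0$ is selected, and this is the maximin equilibrium because $d>b$ and $c>b$ give $\min(c,d)>b\ge\min(a,b)$. The case $a<d$, $b>c$ is symmetric via case (I) of the lemma. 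For (II), cases (III) and (IV) of the lemma give $x_\beta^*<\frac12$ (respectively $>\frac12$) for every $\beta$, so the super-dominant equilibrium is selected.

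The main obstacle is the claim that the monotone path is cost-minimal, i.e.\ that a lower bound on $C^N$ holds over all paths. I will argue it as follows: every path from $0$ to $\bar x_\beta^N$ must make at least one upward crossing of each edge $(x,x+\frac1N)$ with $x<x_\beta^*$, and each such crossing costs $1$. This gives the lower bound; the monotone path achieves it.
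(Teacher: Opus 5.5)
Your proposal is correct and follows the paper's proof: you compute $\gamma_1^N=\lceil Nx_\beta^*\rceil$ and $\gamma_0^N=\lceil N(1-x_\beta^*)\rceil=N-\lfloor Nx_\beta^*\rfloor$ via monotone paths, compare them for large $N$, and read off (I) and (II) from Lemma~\ref{lem:xstar-beta-monotonicity}. Your edge-crossing lower bound and the explicit estimate $|S^N-N(2x_\beta^*-1)|\le 1$ are slightly more careful than the paper's step-counting and limit argument, but the route is the same.
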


\begin{proof}
By Proposition~\ref{prop:1pop-ass2}, Assumption~\ref{ass:2attractor} holds with attracting states \(0\) and \(1\). Under \texttt{RS-BRM}, the cost function is \(c(d)=\mathbf 1_{\{d>0\}}\). Hence, a transition that moves opposite to the unique ERM-adjusted best response has cost \(1\), whereas a transition that follows an ERM-adjusted best response, or occurs at a tie, has cost \(0\).

Starting from \(1\), any path to \(\underline{x}_\beta^N=\lfloor N x_\beta^*\rfloor/N\) must contain at least $N-\lfloor N x_\beta^*\rfloor$ 
 downward steps. The monotone path
\[
\left(1,\ 1-\frac1N,\ \ldots,\ \underline{x}_\beta^N\right)
\]
attains this lower bound. Each downward step on this path starts from a state strictly larger than \(x_\beta^*\). Indeed, the last such step starts from \(\underline{x}_\beta^N+\frac1N>x_\beta^*\) if \(N x_\beta^*\notin\mathbb Z\), and from \(x_\beta^*+\frac1N>x_\beta^*\) if \(N x_\beta^*\in\mathbb Z\). At such states, action \(1\) is the unique ERM-adjusted best response, so a downward step, which decreases the share of action \(1\), moves opposite to the ERM-adjusted best-response direction and has cost \(1\). Therefore, we have 
\[
\gamma_0^N=N-\lfloor N x_\beta^*\rfloor.
\]

Similarly, any path from \(0\) to \(\bar{x}_\beta^N=\lceil N x_\beta^*\rceil/N\) must contain at least $
\lceil N x_\beta^*\rceil$ 
upward steps. The monotone path
\[
\left(0,\ \frac1N,\ \ldots,\ \bar{x}_\beta^N\right)
\]
attains this lower bound. Each upward step on this path starts from a state strictly smaller than \(x_\beta^*\). At such states, action \(2\) is the unique ERM-adjusted best response, so an upward step, which increases the share of action \(1\), moves opposite to the ERM-adjusted best-response direction and has cost \(1\). Thus, we have 
\[
\gamma_1^N=\lceil N x_\beta^*\rceil.
\]

By Theorem~\ref{thm:sse}, state \(1\) is uniquely stochastically stable if \(\gamma_1^N<\gamma_0^N\), while state \(0\) is uniquely stochastically stable if \(\gamma_0^N<\gamma_1^N\). Since
\begin{align*}
\frac{\gamma_1^N}{N}
=
\frac{\lceil N x_\beta^*\rceil}{N}
\to x_\beta^*,
\qquad
\frac{\gamma_0^N}{N}
=
1-\frac{\lfloor N x_\beta^*\rfloor}{N}
\to 1-x_\beta^*,
\end{align*}
the sign of \(\gamma_1^N-\gamma_0^N\) agrees with the sign of \(x_\beta^*-\tfrac12\) for all sufficiently large \(N\), provided \(x_\beta^*\neq\tfrac12\). This proves the first statement.

The remaining claims follow from Lemma~\ref{lem:xstar-beta-monotonicity}. Suppose first that \(a<d\) and \(b>c\). Then \((2,2)\), corresponding to state \(0\), is payoff-dominant, while action \(1\) is maximin, so \((1,1)\), corresponding to state \(1\), is the maximin equilibrium. By Lemma~\ref{lem:xstar-beta-monotonicity}, \(\beta\mapsto x_\beta^*\) is continuous and strictly increasing, with limits \(0\) and \(1\) as \(\beta\to-\infty\) and \(\beta\to+\infty\), respectively. Hence, there exists a unique \(\beta^\dagger\) such that \(x_{\beta^\dagger}^*=\tfrac12\). For every fixed \(\beta<\beta^\dagger\), we have \(x_\beta^*<\tfrac12\), so state \(1\), the maximin equilibrium, is stochastically stable for all sufficiently large \(N\). For every fixed \(\beta>\beta^\dagger\), we have \(x_\beta^*>\tfrac12\), so state \(0\), the payoff-dominant equilibrium, is stochastically stable for all sufficiently large \(N\).

Now suppose that \(a>d\) and \(b<c\). Then \((1,1)\), corresponding to state \(1\), is payoff-dominant, while action \(2\) is maximin, so \((2,2)\), corresponding to state \(0\), is the maximin equilibrium. In this case, Lemma~\ref{lem:xstar-beta-monotonicity} implies that \(\beta\mapsto x_\beta^*\) is continuous and strictly decreasing, again crossing \(\tfrac12\) at a unique \(\beta^\dagger\). Thus, for every fixed \(\beta<\beta^\dagger\), we have \(x_\beta^*>\tfrac12\), so state \(0\), the maximin equilibrium, is stochastically stable for all sufficiently large \(N\). For every fixed \(\beta>\beta^\dagger\), we have \(x_\beta^*<\tfrac12\), so state \(1\), the payoff-dominant equilibrium, is stochastically stable for all sufficiently large \(N\). This proves \textup{(I)}.

Finally, suppose that \(a\ge d\) and \(b\ge c\), with at least one strict inequality. Then, by definition,  \((1,1)\) is super-dominant, and Lemma~\ref{lem:xstar-beta-monotonicity} gives \(x_\beta^*<\tfrac12\) for all \(\beta\in\mathbb R\). Hence, state \(1\) is stochastically stable for every fixed \(\beta\), once \(N\) is sufficiently large. The case \(a\le d\) and \(b\le c\), with at least one strict inequality, is symmetric: \((2,2)\) is super-dominant, \(x_\beta^*>\tfrac12\) for all \(\beta\), and state \(0\) is stochastically stable for every fixed \(\beta\), once \(N\) is sufficiently large. This proves \textup{(II)} and completes the proof.
\end{proof}

\begin{theorem}[Stochastic stability under \texttt{RS-logit}]
\label{thm:rslogit_sse}
Define
\begin{align}\label{eq:bar-s-logit}
    \bar S_{\mathrm{logit}}(\beta)
:=
-\int_0^1 \Delta_\beta(x)\,dx.
\end{align}
Fix \(\beta\in\mathbb R\) and assume \(\bar S_{\mathrm{logit}}(\beta)\neq 0\). Then, there exists \(N_0(\beta)\) such that, for every \(N\ge N_0(\beta)\), the unique stochastically stable state under \texttt{RS-logit} is $1$ if $\bar S_{\mathrm{logit}}(\beta)<0$, and $0$ if $ \bar S_{\mathrm{logit}}(\beta)>0$.
Consequently, for every fixed \(\beta\), the following conclusions hold
for all \(N\ge N_0(\beta)\):
\begin{enumerate}
\item[(I)] If \(a>d\) and \(b<c\), or \(a<d\) and \(b>c\), then there exists a unique \(\beta^\dagger\in\mathbb R\) such that the payoff-dominant equilibrium is uniquely stochastically stable for \(\beta > \beta^\dagger\), while the maximin equilibrium is uniquely stochastically stable for \(\beta < \beta^\dagger\).

\item[(II)] If \(a\ge d\) and \(b\ge c\), or \(a\le d\) and \(b\le c\), with at least one strict inequality, then the super-dominant equilibrium is uniquely stochastically stable.
\end{enumerate}
\end{theorem}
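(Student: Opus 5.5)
The plan is to follow the proof of Theorem~\ref{thm:rsbrm_sse}: compute both escape costs exactly and show that $S^N/N\to\bar S_{\mathrm{logit}}(\beta)$. Under \texttt{RS-logit}, $c(d)=(d)_+$, so by \eqref{eq:1pop-kappa-compact} an upward step from $x$ costs $(-\Delta_\beta(x))_+$ and a downward step costs $(\Delta_\beta(x))_+$.

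\emph{Escape costs.} Any path from $0$ to $\bar x_\beta^N$ must traverse each edge $(k/N,(k+1)/N)$ with $k<\lceil Nx_\beta^*\rceil$ upward at least once. Since all costs are nonnegative, the monotone upward path is optimal, and
\[
\gamma_1^N=\sum_{k=0}^{\lceil Nx_\beta^*\rceil-1}\bigl(-\Delta_\beta(k/N)\bigr)_+ .
\]
By the same argument,
\[
\gamma_0^N=\sum_{k=\lfloor Nx_\beta^*\rfloor+1}^{N}\bigl(\Delta_\beta(k/N)\bigr)_+ .
\]
By Lemma~\ref{lm:delta}, $\Delta_\beta<0$ on the first range and $\Delta_\beta>0$ on the second, so the positive parts can be dropped. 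Combining the two sums gives
\[
S^N=-\sum_{k=0}^{N}\Delta_\beta(k/N)+O(1),
\]
where the $O(1)$ term absorbs at most two terms near $x_\beta^*$. Each such term is bounded, because $\Delta_\beta$ is continuous on $[0,1]$. Continuity also makes the Riemann sum converge, so $S^N/N\to\bar S_{\mathrm{logit}}(\beta)$. When $\bar S_{\mathrm{logit}}(\beta)\neq0$, the sign of $S^N$ therefore matches the sign of $\bar S_{\mathrm{logit}}(\beta)$ for all $N\ge N_0(\beta)$. Theorem~\ref{thm:sse} together with Proposition~\ref{prop:1pop-ass2} then gives the first claim.

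\emph{Closed form.} To obtain (I) and (II), I would evaluate the integral explicitly. The identity
\[
\int_0^1\log\bigl(xA+(1-x)B\bigr)\,dx=\frac{A\log A-B\log B}{A-B}-1
\]
shows that $\int_0^1F_{\beta,1}(x)\,dx=m_\beta(a,b)$, where $m_\beta(a,b)$ is the mean of the density proportional to $e^{\beta u}$ on the interval between $a$ and $b$. For $a=b$ this is the point mass at $a$. Likewise $\int_0^1 F_{\beta,2}(x)\,dx=m_\beta(c,d)$, so
\[
\bar S_{\mathrm{logit}}(\beta)=m_\beta(c,d)-m_\beta(a,b).
\]

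\emph{Case (II).} The exponentially tilted uniform is stochastically increasing in both endpoints of its interval. If $a\ge d$ and $b\ge c$ with one inequality strict, then $\min(a,b)\ge\min(c,d)$ and $\max(a,b)\ge\max(c,d)$, with at least one of these strict. Hence $m_\beta(a,b)>m_\beta(c,d)$, so $\bar S_{\mathrm{logit}}(\beta)<0$ and state $1$ is selected for every $\beta$. The reverse inequalities are symmetric and select state $0$.

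\emph{Case (I).} Take $a>d$ and $b<c$. Then $b<\min(c,d)$ and $a>\max(c,d)$, so the interval $[\min(c,d),\max(c,d)]$ lies strictly inside $[b,a]$. As $\beta\to+\infty$ the tilted means tend to the interval maxima, so $\bar S_{\mathrm{logit}}\to\max(c,d)-a<0$ and the payoff-dominant state $1$ wins. As $\beta\to-\infty$ they tend to the minima, so $\bar S_{\mathrm{logit}}\to\min(c,d)-b>0$ and the maximin state $0$ wins. By continuity in $\beta$ (including at $\beta=0$), $\bar S_{\mathrm{logit}}$ has a zero.

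\emph{Main obstacle: uniqueness of $\beta^\dagger$.} I would show that $D(\beta):=m_\beta(a,b)-m_\beta(c,d)$ satisfies $D'(\beta)>0$ at every zero. Since $\partial_\beta m_\beta$ equals the variance of the tilted distribution, this amounts to showing that the outer tilted law has strictly larger variance than the inner one whenever the two means coincide. The inner density is a constant multiple of the outer density on the inner interval and vanishes outside it. Hence their difference changes sign exactly twice, in the pattern $(-,+,-)$. Karlin's sign-change (variation-diminishing) argument then shows that, with equal means, the outer law has the larger second moment, hence the larger variance. A function whose every zero is a strict upcrossing has at most one zero, which gives a unique $\beta^\dagger$ with the stated sign pattern. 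The case $a<d$, $b>c$ follows by relabeling the actions. If the variation-diminishing step proves delicate, my fallback is the explicit parametrization $m=\ell+(h-\ell)\,g\bigl(\beta(h-\ell)\bigr)$ with $g(t)=\frac{1}{1-e^{-t}}-\frac1t$, followed by a direct comparison of the two variances at a common mean.
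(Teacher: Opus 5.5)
Your proposal is correct. The escape-cost computation, the Riemann-sum limit and the closed form coincide with the paper's. Your tilted mean $m_\beta(p,q)$ is exactly the function $\psi_{p,q}(\beta)$ in the proof of Lemma~\ref{lem:Slogit}. Your $O(1)$ correction is in fact zero: the two index ranges partition $\{0,\ldots,N\}$ except possibly for a grid point at $x_\beta^*$, where $\Delta_\beta=0$.

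Where you genuinely differ is in deriving (I) and (II).

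\textbf{The paper's route.} It differentiates $\psi_{p,q}$ and observes that the tilted variance, $\beta^{-2}-G_\beta(q-p)$, depends only on the interval length. It is strictly increasing in that length, by the same calculation with $h(r)=(r-2)e^r+r+2$ as in Lemma~\ref{lem:xstar-beta-monotonicity}. Hence $\bar S_{\mathrm{logit}}'(\beta)$ has the sign of $|c-d|-|a-b|$ for every $\beta$, so $\bar S_{\mathrm{logit}}$ is globally monotone. Case (I) then follows from the endpoint limits, and case (II) from the endpoint limits together with $\bar S_{\mathrm{logit}}(0)<0$.

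\textbf{Your route for (II).} You argue directly via stochastic monotonicity of the tilted law in both endpoints. This needs no monotonicity in $\beta$ at all and is arguably cleaner.

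\textbf{Your route for (I).} You only prove transversality at zeros, using the single-crossing (Karlin) comparison of nested tilted laws with equal means. This avoids the calculus lemma on $G_\beta$. It uses exactly the strict nesting $[\min\{c,d\},\max\{c,d\}]\subsetneq[b,a]$ that characterizes case (I). The cost is a weaker conclusion: you get uniqueness of $\beta^\dagger$, not strict monotonicity of $\bar S_{\mathrm{logit}}$ on all of $\mathbb R$.

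\textbf{A degenerate case to address.} In case (I) the inner law can be a point mass, for example $c=d$ with $a>c=d>b$, or $a=b$ in the mirrored subcase. Then there is no inner density, so the sign-change step should be phrased for measures. Alternatively, just note that the inner variance is then $0$ while the outer variance is positive.
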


\begin{proof}
By Proposition~\ref{prop:1pop-ass2}, Assumption~\ref{ass:2attractor} holds with attracting states \(0\) and \(1\). Under \texttt{RS-logit}, the cost function is \(c(d)=(d)_+\). Therefore, by \eqref{eq:1pop-kappa-compact},
\[
\kappa^N\!\left(x,x+\tfrac1N\right)=(-\Delta_\beta(x))_+,
\qquad
\kappa^N\!\left(x,x-\tfrac1N\right)=(\Delta_\beta(x))_+.
\]

Since the state space is one-dimensional and all step costs are nonnegative, the minimum-cost paths to the exit boundaries are the monotone paths. Hence,
\[
\gamma_0^N
=
\sum_{m=\lfloor N x_\beta^*\rfloor+1}^{N}
\Delta_\beta\!\left(\frac{m}{N}\right),
\qquad
\gamma_1^N
=
\sum_{m=0}^{\lceil N x_\beta^*\rceil-1}
\left[-\Delta_\beta\!\left(\frac{m}{N}\right)\right].
\]
The summands are nonnegative: the first sum runs over states \(x>x_\beta^*\), where \(\Delta_\beta(x)>0\), and the second runs over states \(x<x_\beta^*\), where \(\Delta_\beta(x)<0\).

Recalling the definition \(S^N=\gamma_1^N-\gamma_0^N\) from \eqref{eq:SN}, define
\[
\bar S_{\mathrm{logit}}^{\,N}(\beta):=\frac1N S^N.
\]
The above expressions imply
\begin{equation}\label{eq:S_N_bar_logit}
\bar S_{\mathrm{logit}}^{\,N}(\beta)
=
-\frac1N\sum_{m=0}^{N}
\Delta_\beta\!\left(\frac{m}{N}\right),  
\end{equation}
where, if \(N x_\beta^*\in\mathbb Z\), the threshold term contributes zero. Since \(\Delta_\beta\) is continuous on \([0,1]\), this is a Riemann sum, and therefore
\[
\bar S_{\mathrm{logit}}^{\,N}(\beta)
\longrightarrow
\bar S_{\mathrm{logit}}(\beta)
\qquad
\text{as } N\to\infty.
\]
Thus, whenever \(\bar S_{\mathrm{logit}}(\beta)\neq0\), the sign of \(S^N\) agrees with the sign of \(\bar S_{\mathrm{logit}}(\beta)\) for all sufficiently large \(N\). The first claim then follows from Theorem~\ref{thm:sse}.

It remains to identify the selected equilibrium in the payoff-dominant/maximin and super-dominant cases. By Lemma~\ref{lem:Slogit}, \(\bar S_{\mathrm{logit}}\) is continuous and monotone on \(\mathbb R\), with
\begin{align*}
\lim_{\beta\to-\infty}\bar S_{\mathrm{logit}}(\beta)
&=
-\min\{a,b\}+\min\{c,d\},\\
\lim_{\beta\to+\infty}\bar S_{\mathrm{logit}}(\beta)
&=
-\max\{a,b\}+\max\{c,d\}.
\end{align*}

In fact, when payoff-dominant and maximin equilibria differ, the coordination-game inequalities imply
\(|a-b|\neq |c-d|\): specifically, \(a<d\) and \(b>c\) imply
\(|a-b|<|c-d|\), while \(a>d\) and \(b<c\) imply
\(|a-b|>|c-d|\). Hence, 
\(\bar S_{\mathrm{logit}}\) is strictly monotone in these cases.

Suppose first that \(a<d\) and \(b>c\). Then \((2,2)\), which corresponds  to state \(0\), is payoff-dominant, while \((1,1)\), which corresponds  to state \(1\), is the maximin equilibrium. In this case, the two endpoint limits of \(\bar S_{\mathrm{logit}}\) have opposite signs. Hence, by Lemma~\ref{lem:Slogit}, there exists a unique \(\beta^\dagger\in\mathbb R\) such that
\[
\bar S_{\mathrm{logit}}(\beta^\dagger)=0.
\]
For every fixed \(\beta<\beta^\dagger\), we have \(\bar S_{\mathrm{logit}}(\beta)<0\), so state \(1\), the maximin equilibrium, is stochastically stable for all sufficiently large \(N\). For every fixed \(\beta>\beta^\dagger\), we have \(\bar S_{\mathrm{logit}}(\beta)>0\), so state \(0\), the payoff-dominant equilibrium, is stochastically stable for all sufficiently large \(N\).

Now suppose that \(a>d\) and \(b<c\). Then \((1,1)\), which corresponds  to state \(1\), is payoff-dominant, while \((2,2)\), which corresponds  to state \(0\), is the maximin equilibrium. Again, Lemma~\ref{lem:Slogit} gives a unique \(\beta^\dagger\) at which \(\bar S_{\mathrm{logit}}\) changes sign. For every fixed \(\beta<\beta^\dagger\), state \(0\), the maximin equilibrium, is stochastically stable for all sufficiently large \(N\), while for every fixed \(\beta>\beta^\dagger\), state \(1\), the payoff-dominant equilibrium, is stochastically stable for all sufficiently large \(N\). This proves \textup{(I)}.

Finally, suppose that \(a\ge d\) and \(b\ge c\), with at least one strict inequality. Then, equilibrium \((1,1)\) is super-dominant. In this case,
\[
\min\{a,b\}\ge \min\{c,d\},
\qquad
\max\{a,b\}\ge \max\{c,d\},
\]
and \(\bar S_{\mathrm{logit}}(0)<0\). By Lemma~\ref{lem:Slogit}, it follows that
\[
\bar S_{\mathrm{logit}}(\beta)<0
\qquad
\text{for all } \beta\in\mathbb R.
\]
Hence, state \(1\) is stochastically stable for every fixed \(\beta\), once \(N\) is sufficiently large. The case \(a\le d\) and \(b\le c\), with at least one strict inequality, is symmetric: \((2,2)\) is super-dominant, \(\bar S_{\mathrm{logit}}(\beta)>0\) for all \(\beta\in\mathbb R\), and state \(0\) is stochastically stable for every fixed \(\beta\), once \(N\) is sufficiently large. This proves \textup{(II)}.
\end{proof}

\begin{remark}
    The conclusions of Theorems~\ref{thm:rsbrm_sse} and~\ref{thm:rslogit_sse} are pointwise in \(\beta\): \(N_0(\beta)\) need not be uniformly bounded and may diverge as \(\beta\) approaches the corresponding critical value \(\beta^\dagger\).
\end{remark}




Theorems~\ref{thm:rsbrm_sse} and~\ref{thm:rslogit_sse} show that, despite their differences, \texttt{RS-BRM} and \texttt{RS-logit} induce the same equilibrium-selection pattern in symmetric \(2\times 2\) coordination games. When one equilibrium is super-dominant, long-run equilibrium selection is robust to both agents' risk attitudes and the choice of noisy best-response protocols. Intuitively, a super-dominant equilibrium \emph{aligns} payoff dominance with worst-case robustness: it yields higher coordinated payoffs while also protecting agents against unfavorable mismatches. Thus, when such an equilibrium exists, there is no tension between \emph{efficiency} and \emph{safety}, and the learning dynamics select it regardless of whether agents are risk-averse, risk-neutral, or risk-seeking. This robustness is desirable because the selected outcome does not depend sensitively on the precise risk preferences of the population or on the specific noise model used in the revision protocol.

For example, consider  $M=
\begin{pmatrix}
4 & 1\\
0 & 2
\end{pmatrix}.$ 
Here, \((1,1)\) is payoff-dominant because \(4>2\), and action \(1\) is also maximin because
\[
\min\{4,1\}=1
>
0=\min\{0,2\}.
\]
Hence, \((1,1)\) is super-dominant. Theorems~\ref{thm:rsbrm_sse} and~\ref{thm:rslogit_sse} imply that \((1,1)\) is stochastically stable for every risk-sensitivity parameter \(\beta\), under both \texttt{RS-BRM} and \texttt{RS-logit}, once the population is sufficiently large.

On the other hand, when the payoff-dominant and maximin equilibria \emph{differ}, there is a fundamental tension between efficiency and robustness. In this case, the risk-sensitivity parameter can act  as a knob for equilibrium selection: as agents become more risk-seeking, the payoff-dominant equilibrium, which Pareto-dominates the other coordinated equilibrium and yields higher coordinated payoffs, becomes stochastically stable; whereas as agents become more risk-averse, the maximin equilibrium, which provides better protection against worst-case opponent actions, becomes stochastically stable.

For example, consider the symmetric coordination game
\[
M=
\begin{pmatrix}
4 & 0\\
2 & 3
\end{pmatrix}.
\]
Here, \((1,1)\) is payoff-dominant because \(4>3\), while \((2,2)\) is maximin because
\[
\min\{4,0\}=0
<
2=\min\{2,3\}.
\]
Thus, efficiency and robustness point to different equilibria. Theorems~\ref{thm:rsbrm_sse} and~\ref{thm:rslogit_sse} imply that, under both \texttt{RS-BRM} and \texttt{RS-logit}, sufficiently risk-averse agents select the maximin equilibrium \((2,2)\), whereas sufficiently risk-seeking agents select the payoff-dominant equilibrium \((1,1)\), for all sufficiently large populations. The cutoff value of the risk-sensitivity parameter may differ across the two protocols, but the transition is the same: increasing \(\beta\) shifts long-run selection from the ``safer'' and more robust maximin equilibrium to the more ``efficient'' payoff-dominant equilibrium. These insights align with recent empirical observations in \cite{noorani2022risk,zhang2025optimism} that risk-seeking  learning agents can improve coordination efficiency, and our results may be viewed as a formalization of these insights through the lens of coordination games and evolutionary dynamics. 

\section{Extension: Two-Population Asymmetric Setting}

We now consider a \emph{two-population} setting with an \emph{asymmetric} coordination game. This setting is natural because many coordination problems involve distinct agent roles with different payoffs, such as the Battle of the Sexes \cite{osborne1994course}. Notably, this setting also allows for \emph{heterogeneous risk attitudes}: agents are homogeneous within each population, sharing a common risk-sensitivity parameter \(\beta^p\), while the two populations may have different risk parameters, so \(\beta^1\) and \(\beta^2\) need not be equal.

Moving from one to two populations requires more than a change in notation. In the single-population model, the aggregate state is one-dimensional, and a single risk-dependent threshold separates the two best-response regions, yielding closed-form escape costs. In the two-population model, the state is two-dimensional, and each population’s best response depends on the state of the other. Although BRM remains tractable, transition costs under logit dynamics are state- and path-dependent, requiring a genuinely two-dimensional analysis. We characterize stochastic selection under super-dominance, while the general risk-sensitive two-population logit case remains open.

In the two-population setting, the aggregate population state is
\[
x=(x^1,x^2)\in X^N,
\]
where \(x^p\) is the fraction of population \(p\) playing action \(1\). A one-step update changes the action of only one agent in one population. Let \(e_1=(1,0)\) and \(e_2=(0,1)\). Then, for \(p\in\{1,2\}\), the nontrivial feasible one-step transitions are
\[
x\to x+\frac1{N^p}e_p,
\qquad
x\to x-\frac1{N^p}e_p,
\]
whenever the resulting state remains in \(X^N\). Here, \(x+\frac1{N^p}e_p\) corresponds to an agent in population \(p\) switching from action \(2\) to action \(1\), while \(x-\frac1{N^p}e_p\) corresponds to an agent in population \(p\) switching from action \(1\) to action \(2\).

Applying the small-noise cost definition in \eqref{eq:ldp-step} to the two-population state space, we write the corresponding transition costs as \(\kappa^N\), which may also be denoted as  \(\kappa^{N^1,N^2}\) when we want to emphasize the two population sizes. Recalling the risk-sensitive revision protocol \eqref{eq:risk-rev} and the cost function \(c(\cdot)\) from Assumption~\ref{ass:cost}, we obtain
\begin{align}\label{eq:2pop-kappa}
\kappa^N\!\left(x,x+\frac1{N^p}e_p\right)=
c\!\left((-\Delta_{\beta^p}^p(x^{-p}))_+\right),~ 
\kappa^N\!\left(x,x-\frac1{N^p}e_p\right)=
c\!\left((\Delta_{\beta^p}^p(x^{-p}))_+\right).
\end{align}

Next, let \(x_{\beta^1}^{2,*}\in(0,1)\) and \(x_{\beta^2}^{1,*}\in(0,1)\) be the unique thresholds from Lemma~\ref{lm:delta}, defined by
\[
\Delta_{\beta^1}^{1}\!\left(x_{\beta^1}^{2,*}\right)=0,
\qquad
\Delta_{\beta^2}^{2}\!\left(x_{\beta^2}^{1,*}\right)=0.
\]
Define the associated grid points by
\[
\underline x^1:=\frac{\lfloor N^1 x_{\beta^2}^{1,*}\rfloor}{N^1},
\qquad
\bar x^1:=\frac{\lceil N^1 x_{\beta^2}^{1,*}\rceil}{N^1},
\]
and
\[
\underline x^2:=\frac{\lfloor N^2 x_{\beta^1}^{2,*}\rfloor}{N^2},
\qquad
\bar x^2:=\frac{\lceil N^2 x_{\beta^1}^{2,*}\rceil}{N^2}.
\]

For \(p=1,2\), let \(\underline x^{p,-}\) and \(\bar x^{p,+}\) denote the grid points immediately below and above \(x_{\beta^{-p}}^{p,*}\), respectively. Thus, if \(N^p x_{\beta^{-p}}^{p,*}\notin\mathbb Z\), then
\[
\underline x^{p,-}=\underline x^p,
\qquad
\bar x^{p,+}=\bar x^p.
\]
If \(N^p x_{\beta^{-p}}^{p,*}\in\mathbb Z\), then
\[
\underline x^{p,-}=\underline x^p-\frac1{N^p},
\qquad
\bar x^{p,+}=\bar x^p+\frac1{N^p}.
\]
The next proposition verifies Assumption~\ref{ass:2attractor}.

\begin{figure}
    \centering    
    \includegraphics[width=0.7\linewidth]{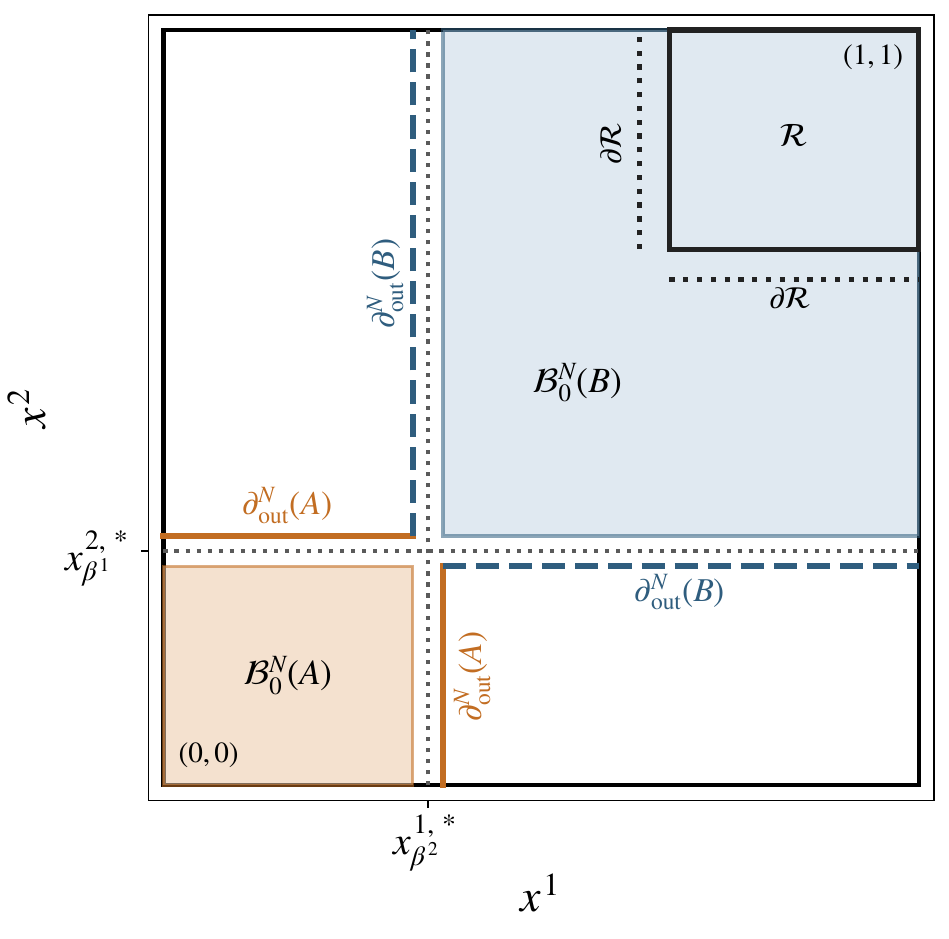}
    \caption{Zero-noise geometry for the two-population dynamics when the thresholds do not lie on the population grids, i.e., \(N^p x_{\beta^{-p}}^{p,*}\notin\mathbb Z\) for \(p=1,2\). The lower-left and upper-right rectangles are the stability sets \(\mathcal B_0^N(A)\) and \(\mathcal B_0^N(B)\), respectively, and the highlighted boundary segments indicate the exit boundaries \(\Exit^N(A)\) and \(\Exit^N(B)\).}
    \label{fig:2pop-geometry}
\end{figure}

\begin{proposition}\label{prop:2pop-ass2}
Under the two-population risk-sensitive dynamics induced by \eqref{eq:risk-rev}, with transition costs given by \eqref{eq:2pop-kappa}, the zero-noise dynamics have exactly two attracting states, namely \(A=(0,0)\) and \(B=(1,1)\). With \(N=(N^1,N^2)\), their stability sets are
\begin{align*}
\mathcal B_0^N(A)
=
\bigl([0,\underline x^{1,-}]\times[0,\underline x^{2,-}]\bigr)\cap X^N,\qquad 
\mathcal B_0^N(B)
=
\bigl([\bar x^{1,+},1]\times[\bar x^{2,+},1]\bigr)\cap X^N,
\end{align*}
and their exit boundaries are
\begin{align*}
\Exit^N(A)
&=
\Bigl[\bigl(\{\bar x^1\}\times[0,\underline x^{2,-}]\bigr)
\;\cup\;
\bigl([0,\underline x^{1,-}]\times\{\bar x^2\}\bigr)\Bigr]\cap X^N,\\
\Exit^N(B)
&=
\Bigl[\bigl(\{\underline x^1\}\times[\bar x^{2,+},1]\bigr)
\;\cup\;
\bigl([\bar x^{1,+},1]\times\{\underline x^2\}\bigr)\Bigr]\cap X^N.
\end{align*}
Consequently, Assumption~\ref{ass:2attractor} holds. The geometry is illustrated in Fig.~\ref{fig:2pop-geometry}.
\end{proposition}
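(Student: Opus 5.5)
The plan is to mirror the proof of Proposition~\ref{prop:1pop-ass2}, working coordinatewise, since the zero-cost moves of population \(p\) depend only on the opponent coordinate \(x^{-p}\).

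\textbf{Step 1: zero-cost directions.} By Lemma~\ref{lm:delta} and \eqref{eq:2pop-kappa}, a move in coordinate \(p\) is zero-cost downward when \(x^{-p}<x_{\beta^p}^{-p,*}\), zero-cost upward when \(x^{-p}>x_{\beta^p}^{-p,*}\), and zero-cost in both directions at equality. Any move against the strict best response has positive cost, since \(c(d)>0\) for \(d>0\). So the state space splits into a grid of regions cut by the lines \(x^2=x_{\beta^1}^{2,*}\), which governs population \(1\)'s moves, and \(x^1=x_{\beta^2}^{1,*}\), which governs population \(2\)'s moves.

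\textbf{Step 2: identify the attracting states and the stability sets.} Set
\[
E_A:=\{x:x^1<x_{\beta^2}^{1,*},\ x^2<x_{\beta^1}^{2,*}\},\qquad E_B:=\{x:x^1>x_{\beta^2}^{1,*},\ x^2>x_{\beta^1}^{2,*}\}.
\]
In grid terms, \(E_A=([0,\underline x^{1,-}]\times[0,\underline x^{2,-}])\cap X^N\), and similarly for \(E_B\).

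Take \(x\in E_A\). Since \(x^2<x^{2,*}\), population \(1\) moves down at zero cost. Since \(x^1<x^{1,*}\), population \(2\) moves down at zero cost. Every zero-cost move therefore decreases a coordinate, so it stays in \(E_A\), and \(E_A\) is closed under zero-cost moves. In addition, \(x\) reaches \(A\) at zero cost. Hence \(E_A\subseteq\mathcal B_0^N(A)\).

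Conversely, any state outside \(E_A\) has a zero-cost path leaving every set containing \(A\) that is closed. Concretely, if \(x^1\ge x^{1,*}\), then population \(2\) can weakly move up to \(x^2=1>x^{2,*}\), after which population \(1\) moves up. This gives a zero-cost path to \(B\), so \(x\notin\mathcal B_0^N(A)\). The case \(x^2\ge x^{2,*}\) is symmetric. The same argument identifies \(\mathcal B_0^N(B)=E_B\).

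The attracting states are then found as follows. A singleton sink must admit no zero-cost move in either coordinate. An interior coordinate always has a zero-cost move, namely the best-response direction, or both directions at a tie. So a sink needs \(x^p\in\{0,1\}\) with the move pushing outward. This rules out the mixed corners: at \((0,1)\), for example, \(x^2=1>x^{2,*}\) lets population \(1\) move up. Only \(A\) and \(B\) remain.

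\textbf{Step 3: exit boundaries.} The one-step neighbors of \(E_A\) lying outside it come from incrementing \(x^1\) from \(\underline x^{1,-}\) to the next grid point, with \(x^2\) unchanged, or incrementing \(x^2\) from \(\underline x^{2,-}\). The next grid point after \(\underline x^{1,-}\) is \(\bar x^1\): it equals the threshold itself if the threshold lies on the grid, and \(\lceil\cdot\rceil/N^1\) otherwise. This gives the stated formula for \(\Exit^N(A)\), and symmetrically for \(\Exit^N(B)\).

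\textbf{Step 4: Assumption~\ref{ass:2attractor}.} For (I), use the routing from Step~2. States in \(E_A\) go to \(A\) and states in \(E_B\) go to \(B\). A state with \(x^1\ge x^{1,*}\) goes to \(B\) via population \(2\) moving up first. A state with \(x^1<x^{1,*}\) and \(x^2\ge x^{2,*}\) goes to \(B\) via population \(1\) moving up first. For (II), this routing shows that every \(x\notin\mathcal B_0^N(A)\) has \(C^N(x,B)=0\). The symmetric argument, with downward moves, gives the statement for \(B\).

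\textbf{Main obstacle.} The main obstacle is the careful handling of tie states, where a threshold lies on the grid. At such a state both directions are zero-cost. We must check that it is correctly excluded from both stability sets, and that it serves as the exit point \(\bar x^p=\underline x^p\). This is exactly why the definitions use \(\underline x^{p,-}\) and \(\bar x^{p,+}\), and I would verify these edge cases explicitly.
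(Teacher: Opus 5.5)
Your proposal is correct and follows essentially the same route as the paper's proof: you show the two corner rectangles are closed under zero-cost moves, route every other state to $B$ (respectively $A$) by first moving the population whose threshold condition is met, read the exit layers off the grid points adjacent to the thresholds, and use the same routing to verify Assumption~\ref{ass:2attractor}. The differences are cosmetic. You identify the attracting states directly as the states with no zero-cost move, excluding interior coordinates and the mixed corners, and you make the maximality of $\mathcal B_0^N(A)$ explicit through the zero-cost path to $B$; the paper instead concludes via transience of all remaining states.
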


\begin{proof}
By Lemma~\ref{lm:delta}, for each \(p\in\{1,2\}\),
\[
\Delta_{\beta^p}^p(z)<0 \quad \text{if } z<x_{\beta^p}^{-p,*},
\qquad
\Delta_{\beta^p}^p(z)>0 \quad \text{if } z>x_{\beta^p}^{-p,*},
\]
and \(\Delta_{\beta^p}^p(z)=0\) at \(z=x_{\beta^p}^{-p,*}\). Hence, by the two transition-cost identities in \eqref{eq:2pop-kappa}, if \(x^{-p}<x_{\beta^p}^{-p,*}\), then decreasing \(x^p\) by \(1/N^p\) has zero cost; if \(x^{-p}>x_{\beta^p}^{-p,*}\), then increasing \(x^p\) by \(1/N^p\) has zero cost; and if \(x^{-p}=x_{\beta^p}^{-p,*}\), then both moves have zero cost.

Define
\[
E_{00}:=\bigl([0,\underline x^{1,-}]\times[0,\underline x^{2,-}]\bigr)\cap X^N,
\qquad
E_{11}:=\bigl([\bar x^{1,+},1]\times[\bar x^{2,+},1]\bigr)\cap X^N.
\]
Every state in \(E_{00}\) reaches \(A=(0,0)\) along repeated downward zero-cost moves. Moreover, no state in \(E_{00}\) has a zero-cost path to \(X^N\setminus E_{00}\): any first exit from \(E_{00}\) must either increase \(x^1\) at a state with \(x^2\le \underline x^{2,-}<x_{\beta^1}^{2,*}\), or increase \(x^2\) at a state with \(x^1\le \underline x^{1,-}<x_{\beta^2}^{1,*}\), and each such move has strictly positive cost. Therefore,
\[
\mathcal B_0^N(A)=E_{00}.
\]
By the symmetric argument, every state in \(E_{11}\) reaches \(B=(1,1)\) along repeated upward zero-cost moves, and no state in \(E_{11}\) has a zero-cost path to \(X^N\setminus E_{11}\). Hence,
\[
\mathcal B_0^N(B)=E_{11}.
\]

We next show that all states outside these two stability sets are transient under the zero-noise dynamics. As depicted in Fig.~\ref{fig:2pop-geometry}, the sets \(E_{00}\) and \(E_{11}\) are the lower-left and upper-right rectangles, respectively. Consider any state \(x\notin E_{00}\). Then at least one coordinate is at or above its threshold. If \(x^1\ge \bar x^1\), then population \(2\) has a zero-cost upward move, so \(x^2\) can be increased until it reaches the upper region. Once \(x^2\ge \bar x^2\), population \(1\) has zero-cost upward moves, and the state can reach \(B=(1,1)\). The case \(x^2\ge \bar x^2\) is symmetric. Thus, every state outside \(E_{00}\) has a zero-cost path to \(B\).

Similarly, every state outside \(E_{11}\) has a zero-cost path to \(A\). Indeed, if \(x^1\le \underline x^1\), then population \(2\) has zero-cost downward moves, so \(x^2\) can be decreased until it reaches the lower region; once \(x^2\le \underline x^2\), population \(1\) has zero-cost downward moves, and the state can reach \(A=(0,0)\). The case \(x^2\le \underline x^2\) is symmetric. If a coordinate lies exactly on a threshold grid line, the corresponding population is indifferent, so both upward and downward moves in the other coordinate have zero cost; the same argument applies after one such zero-cost move.

Therefore, every state in \(X^N\setminus(E_{00}\cup E_{11})\) has zero-cost paths to both \(A\) and \(B\). Since \(A\) and \(B\) themselves have no outgoing zero-cost transitions, they are the only recurrent states of the zero-noise dynamics, i.e., the only attracting states.

The formulas for the exit boundaries follow directly from the descriptions of \(E_{00}\) and \(E_{11}\). From \(E_{00}\), the first one-step exits occur by increasing either \(x^1\) to \(\bar x^1\) while \(x^2\le \underline x^{2,-}\), or \(x^2\) to \(\bar x^2\) while \(x^1\le \underline x^{1,-}\). This gives the stated expression for \(\Exit^N(A)\). The expression for \(\Exit^N(B)\) follows analogously by considering the first downward exit from \(E_{11}\).

Finally, Assumption~\ref{ass:2attractor} \textup{(I)} holds because every state outside \(\{A,B\}\) has a zero-cost path to at least one of \(A\) or \(B\). Assumption~\ref{ass:2attractor} \textup{(II)} holds because every state outside \(\mathcal B_0^N(A)=E_{00}\) has a zero-cost path to \(B\), and every state outside \(\mathcal B_0^N(B)=E_{11}\) has a zero-cost path to \(A\).
\end{proof}

We define the escape costs from the two stability sets as
\begin{align}
\label{eq:escape-costs-2pop} 
\gamma_{00}^{N}:=
C^{N}\!\bigl((1,1),\Exit^{N}(B)\bigr),\qquad\qquad 
\gamma_{11}^{N}:=
C^{N}\!\bigl((0,0),\Exit^{N}(A)\bigr).
\end{align}
We also define the selection coefficient
\[
S^{N}:=\gamma_{11}^{N}-\gamma_{00}^{N}.
\]
By Theorem~\ref{thm:sse}, the state \((1,1)\) is uniquely stochastically stable if \(S^{N}<0\), whereas \((0,0)\) is uniquely stochastically stable if \(S^{N}>0\). 

\begin{theorem}[Stochastic stability under \texttt{2Pop-RS-BRM}]
\label{thm:2pop-rsbrm}
Consider an asymmetric two-population coordination game under \texttt{2Pop-RS-BRM}. For fixed \(N=(N^1,N^2)\), the escape costs defined in \eqref{eq:escape-costs-2pop} satisfy
\begin{align*}
\gamma_{00}^{N}
=
\min\!\bigl\{N^1(1-\underline x^1),\,N^2(1-\underline x^2)\bigr\},\qquad\qquad 
\gamma_{11}^{N}
=
\min\!\bigl\{N^1\bar x^1,\,N^2\bar x^2\bigr\}.
\end{align*}
Moreover, if \((1,1)\) is super-dominant, then for every fixed pair \((\beta^1,\beta^2)\), there exists some \(N_0(\beta^1,\beta^2)\in\mathbb N\) such that \((1,1)\) is uniquely stochastically stable for all \(N^1,N^2\ge N_0(\beta^1,\beta^2)\). The analogous statement holds for \((0,0)\).
\end{theorem}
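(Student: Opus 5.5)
We prove the cost formulas first, using Proposition~\ref{prop:2pop-ass2}, and then settle the super-dominant case with Lemma~\ref{lem:xstar-beta-monotonicity} and Theorem~\ref{thm:sse}. Under \texttt{2Pop-RS-BRM}, $c(d)=\mathbf 1_{\{d>0\}}$. So every step has cost $0$ or $1$, and a step costs $1$ exactly when it moves population $p$ against its strict ERM-adjusted best response, which is determined by $x^{-p}$ relative to the threshold $x^{-p,*}_{\beta^p}$.

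\emph{Computing $\gamma_{11}^N$.} We start from $A=(0,0)$. Any path to $\Exit^N(A)$ must leave the rectangle $\mathcal B_0^N(A)$, and its first exit either raises $x^1$ to $\bar x^1$ or raises $x^2$ to $\bar x^2$. Before that exit, every state has $x^2\le \underline x^{2,-}<x^{2,*}_{\beta^1}$ and $x^1\le\underline x^{1,-}<x^{1,*}_{\beta^2}$. At such states, every upward move of either population is against the best response and costs exactly $1$. The path must contain at least $N^1\bar x^1$ upward moves of population $1$, or at least $N^2\bar x^2$ upward moves of population $2$. All these moves happen inside the rectangle or on the exiting step, so they each cost $1$. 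This gives the lower bound $\min\{N^1\bar x^1,N^2\bar x^2\}$. The bound is attained by the straight path that raises only $x^1$ from $0$ to $\bar x^1$ while $x^2=0$, or only $x^2$, whichever is shorter. Along that path each step starts at a state with the other coordinate equal to $0<$ threshold, so each step costs exactly $1$. The symmetric argument from $B=(1,1)$ gives $\gamma_{00}^N=\min\{N^1(1-\underline x^1),N^2(1-\underline x^2)\}$. The one detail to check is that downward steps started at $x^{-p}>$ threshold cost $1$ all the way to $\underline x^p$, including when the threshold lies on the grid. This matches the single-population computation in Theorem~\ref{thm:rsbrm_sse}.

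\emph{Super-dominant case.} If $(1,1)$ is super-dominant, then for each $p$ we have $a^p\ge d^p$ and $b^p\ge c^p$ with one inequality strict. Lemma~\ref{lem:xstar-beta-monotonicity}(III) then gives $x^{-p,*}_{\beta^p}<\tfrac12$ for all $\beta^p$. Fix $(\beta^1,\beta^2)$ and set $\delta:=\tfrac12-\max\{x^{1,*}_{\beta^2},x^{2,*}_{\beta^1}\}>0$. Then $N^p\bar x^p\le N^p x^{p,*}+1\le N^p(\tfrac12-\delta)+1$, and $N^p(1-\underline x^p)\ge N^p(\tfrac12+\delta)$.

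Let $N_{\min}=\min\{N^1,N^2\}$ and $N_{\max}=\max\{N^1,N^2\}$. Here lies the main obstacle: the two populations may have very different sizes, so we must compare minima of mismatched quantities. The smallest term of $\gamma_{11}^N$ is at most $N_{\min}(\tfrac12-\delta)+1$. Every term of $\gamma_{00}^N$ is at least $N_{\min}(\tfrac12+\delta)$. Indeed, each $N^p(1-\underline x^p)\ge N^p(\tfrac12+\delta)\ge N_{\min}(\tfrac12+\delta)$. Hence
\[
\gamma_{11}^N\le N_{\min}\bigl(\tfrac12-\delta\bigr)+1<N_{\min}\bigl(\tfrac12+\delta\bigr)\le\gamma_{00}^N
\]
as soon as $N_{\min}>1/(2\delta)$. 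So we take $N_0(\beta^1,\beta^2)=\lceil 1/(2\delta)\rceil+1$. Then $S^N<0$, and Theorem~\ref{thm:sse} makes $(1,1)$ uniquely stochastically stable. The case of $(0,0)$ follows in the same way from part~(IV) of the lemma, with the roles of floors and ceilings swapped.
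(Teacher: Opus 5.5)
Your proposal is correct and follows essentially the paper's argument. You derive the cost formulas by truncating at the first exit from the stability set, counting the unit-cost moves, and attaining the bound with monotone coordinate paths; you then settle the super-dominant case with Lemma~\ref{lem:xstar-beta-monotonicity}(III) and Theorem~\ref{thm:sse}. The only difference is the final comparison: the paper uses the termwise inequalities $N^p\bar x^p<N^p(1-\underline x^p)$ for each $p$, which already force $\min_p N^p\bar x^p<\min_p N^p(1-\underline x^p)$, so mismatched population sizes are not actually an obstacle. Your uniform-margin argument through $N_{\min}$ is equally valid and yields the explicit threshold $N_0=\lceil 1/(2\delta)\rceil+1$.
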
 
\begin{proof}
By the definition of \texttt{2Pop-RS-BRM}, the cost function in
Assumption~\ref{ass:cost} is \(c(d)=\mathbf 1_{\{d>0\}}\). Let \(B=(1,1)\). By Proposition~\ref{prop:2pop-ass2}, the exit boundary of
\(\mathcal B_0^{N^1,N^2}(B)\) is
\[
\Exit^{N^1,N^2}(B)
=
\Bigl[\bigl(\{\underline x^1\}\times[\bar x^{2,+},1]\bigr)
\;\cup\;
\bigl([\bar x^{1,+},1]\times\{\underline x^2\}\bigr)\Bigr]\cap X^{N^1,N^2}.
\]
Consider any path from \(B\) to \(\Exit^{N^1,N^2}(B)\), and truncate it at its
first hitting time of \(\Exit^{N^1,N^2}(B)\). Since transition costs are
nonnegative, this truncation cannot increase the path cost. Before the terminal
step, the path lies in \(\mathcal B_0^{N^1,N^2}(B)\).

If the truncated path ends in
\(\{\underline x^1\}\times[\bar x^{2,+},1]\), then its first coordinate must
have decreased from \(1\) to \(\underline x^1\). Hence, the path contains at least
\(N^1(1-\underline x^1)\) downward moves in population \(1\). Each such move
starts from a state in \(\mathcal B_0^{N^1,N^2}(B)\), where
\(x^2\ge \bar x^{2,+}>x_{\beta^1}^{2,*}\). Therefore, action \(1\) is the unique
ERM-adjusted best response for population \(1\), and a downward move is a strict
mistake with cost \(1\). Thus, the cost of such a path is at least
\(N^1(1-\underline x^1)\).

Similarly, if the truncated path ends in
\([\bar x^{1,+},1]\times\{\underline x^2\}\), then it contains at least
\(N^2(1-\underline x^2)\) downward moves in population \(2\), each of cost \(1\).
Therefore, every path from \(B\) to \(\Exit^{N^1,N^2}(B)\) has cost at least
\begin{align*}
\min\!\bigl\{N^1(1-\underline x^1),\,N^2(1-\underline x^2)\bigr\}.
\end{align*}
This lower bound is attained by the monotone path that decreases only the
coordinate achieving the minimum, while keeping the other coordinate equal to
\(1\). Hence
\[
\gamma_{00}^{N^1,N^2}
=
\min\!\bigl\{N^1(1-\underline x^1),\,N^2(1-\underline x^2)\bigr\}.
\]

The same argument applies to \(A=(0,0)\). Any path from \(A\) to \(\Exit^{N^1,N^2}(A)\), truncated at its first hitting
time of the exit boundary, must either increase \(x^1\) from \(0\) to
\(\bar x^1\), or increase \(x^2\) from \(0\) to \(\bar x^2\). While the path
remains in \(\mathcal B_0^{N^1,N^2}(A)\), each such upward move is a strict
mistake under \texttt{2Pop-RS-BRM} and has cost \(1\). The corresponding
monotone coordinate paths attain these costs. Therefore
\begin{align*}
\gamma_{11}^{N^1,N^2}
=
\min\!\bigl\{N^1\bar x^1,\,N^2\bar x^2\bigr\}.
\end{align*}

Finally, suppose that \((1,1)\) is super-dominant for both populations. Then, 
Lemma~\ref{lem:xstar-beta-monotonicity} implies
\[
x_{\beta^2}^{1,*}<\frac12,
\qquad
x_{\beta^1}^{2,*}<\frac12 .
\]
Since \(\bar x^p\to x_{\beta^{-p}}^{p,*}\) and
\(\underline x^p\to x_{\beta^{-p}}^{p,*}\) as \(N^p\to\infty\), there exists
\(N_0(\beta^1,\beta^2)\) such that, for all
\(N^1,N^2\ge N_0(\beta^1,\beta^2)\),
\[
\bar x^p < \frac12 < 1-\underline x^p,
\qquad p\in\{1,2\}.
\]
Consequently,
\[
N^p\bar x^p < N^p(1-\underline x^p),
\qquad p\in\{1,2\}.
\]
Taking minima over \(p\) gives
\[
\gamma_{11}^{N^1,N^2}
<
\gamma_{00}^{N^1,N^2}.
\]
By Theorem~\ref{thm:sse}, this implies that \((1,1)\) is uniquely
stochastically stable. The proof for \((0,0)\) is symmetric.
\end{proof}

Theorem~\ref{thm:2pop-rsbrm} shows that, under \texttt{2Pop-RS-BRM}, equilibrium
selection is determined by the relative magnitudes of the two escape costs
\(\gamma_{00}^{N^1,N^2}\) and \(\gamma_{11}^{N^1,N^2}\). These escape costs depend
on the ERM thresholds \(x_{\beta^2}^{1,*}\) and \(x_{\beta^1}^{2,*}\), and hence
on the risk-sensitivity parameters \((\beta^1,\beta^2)\). Therefore, changing
agents' risk attitudes can change the relative difficulty of escaping the two
coordinated equilibria, and may consequently alter the stochastically stable
equilibrium. We illustrate this risk-attitude tuning effect on equilibrium selection in the numerical
example in Section~\ref{sec:num_illus_theorem_52}.

\begin{theorem}[Stochastic stability under \texttt{2Pop-RS-logit}]
\label{thm:2pop_rslogit_superdominant}
Consider an asymmetric two-population \(2\times 2\) coordination game under
\texttt{2Pop-RS-logit}. Suppose that \((1,1)\) is super-dominant. Then, for every
fixed pair of risk parameters \((\beta^1,\beta^2)\in\mathbb R^2\), there exists
\(N_0(\beta^1,\beta^2)\) such that \((1,1)\) is uniquely stochastically stable
for all \(N^1,N^2\ge N_0(\beta^1,\beta^2)\). The analogous statement holds for
\((0,0)\).
\end{theorem}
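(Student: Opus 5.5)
The plan is to compare the two escape costs of Theorem~\ref{thm:sse} by a reflection argument, rather than computing either of them in closed form. The map $\Phi(x^1,x^2):=(1-x^1,1-x^2)$ sends $X^N$ to itself, swaps $B=(1,1)$ and $A=(0,0)$, and turns a downward move of population $p$ into an upward move. Under \texttt{2Pop-RS-logit}, by \eqref{eq:2pop-kappa}, the move $x\to x-\frac1{N^p}e_p$ costs $(\Delta^p_{\beta^p}(z))_+$ with $z=x^{-p}$. Its reflected move $\Phi(x)\to\Phi(x)+\frac1{N^p}e_p$ costs $(-\Delta^p_{\beta^p}(1-z))_+$. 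So everything hinges on the pointwise inequality
\[
\Delta^p_{\beta^p}(z)+\Delta^p_{\beta^p}(1-z)\;\ge\;\delta^p>0\qquad\text{for all } z\in[0,1].
\]

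\textbf{Step 1: the pointwise inequality.} I will check $F^p_{\beta^p,1}(z)\ge F^p_{\beta^p,2}(1-z)$. From \eqref{eq:erm-payoffs}, this compares the ERM of the lottery $(a^p \text{ w.p. } z,\ b^p \text{ w.p. } 1-z)$ with that of $(d^p \text{ w.p. } z,\ c^p \text{ w.p. } 1-z)$. Super-dominance gives $a^p\ge d^p$ and $b^p\ge c^p$, and $\mathcal R_\beta$ is monotone for every $\beta$, including the sign flip when $\beta<0$ and the case $\beta=0$. Hence the inequality holds, and it is strict for $z\in(0,1)$ because one of the two payoff inequalities is strict. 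Applying the same bound with $z$ replaced by $1-z$ and adding the two gives $\Delta^p(z)+\Delta^p(1-z)\ge 0$. I will then upgrade this to a uniform $\delta^p>0$ on $[0,1]$. I expect strictness to hold at the endpoints as well, since $a^p+b^p>c^p+d^p$ there. If it does, continuity and compactness give $\delta^p>0$.

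\textbf{Step 2: reflecting an optimal escape path.} Take a minimal-cost path $\phi$ from $B$ to $\Exit^N(B)$, truncated at first hitting. By Lemma~\ref{lem:xstar-beta-monotonicity}(III), $x^{p,*}_{\beta^{-p}}<\tfrac12$, so for large $N^p$ the exit coordinate $\underline x^p$ satisfies $1-\underline x^p>\tfrac12>\bar x^p$. The reflected path $\Phi\circ\phi$ starts at $A$ and drives some coordinate up to $1-\underline x^p>\bar x^p$. It must therefore cross $\Exit^N(A)$, and I truncate it at that first crossing. Every downward mistake step of $\phi$ has cost $\Delta^p(z)>0$, and by Step 1 its reflection has cost at most $(\Delta^p(z)-\delta^p)_+$. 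The path $\phi$ contains at least $\min_p N^p(1-\underline x^p)-O(1)$ such steps. The saving from these steps alone is linear in $N$.

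\textbf{Main obstacle.} The difficulty is the zero-cost upward steps of $\phi$. These are best-response moves inside or near $\mathcal B_0^N(B)$, and their reflections are downward moves in the $A$-world. Those reflected moves can carry a positive cost $(\Delta^p(1-z))_+$ whenever $1-z>x^{p,*}$. My first attempt is to show that an optimal escape path from $B$ can be taken monotone, meaning it never increases a coordinate. Deleting an upward step and its compensating downward step should not increase cost, because $\Delta^p$ is increasing (Lemma~\ref{lm:delta}) and downward costs are smaller at lower opponent states. If that exchange argument works, the reflected path consists only of upward moves. Its cost is then at most $\gamma^N_{00}-c\,N$ with $c>0$, so $\gamma^N_{11}<\gamma^N_{00}$ for all $N^1,N^2\ge N_0(\beta^1,\beta^2)$, and Theorem~\ref{thm:sse} gives the claim. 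The case of $(0,0)$ follows by the symmetric argument with $\Phi$.
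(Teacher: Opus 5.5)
Your Step~1 is fine. Monotonicity of $\mathcal R_\beta$ under the statewise coupling $a^p\ge d^p$, $b^p\ge c^p$ is in fact a cleaner route to Lemma~\ref{lem:Rpos} than the paper's computation, and strictness at $z\in\{0,1\}$ holds because $\Delta^p(0)+\Delta^p(1)=a^p+b^p-c^p-d^p>0$.

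\textbf{The gap.} It is the step you flag yourself. The whole comparison rests on the claim that a minimal-cost path from $B$ to $\Exit^N(B)$ can be taken coordinatewise nonincreasing, and you leave this as ``if that exchange argument works.'' Without it the argument does not close. A free up-step of population $p$ at opponent state $z$ reflects to a down-step costing $(\Delta^p_{\beta^p}(1-z))_+$, which is positive once $1-z>x^{-p,*}_{\beta^p}$.

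The pairwise deletion you sketch also fails as stated. Suppose the compensating down-step of population $p$ comes \emph{before} the up-step. Deleting both raises $x^p$ by $1/N^p$ on the intermediate segment. Population $-p$'s down-steps there cost $(\Delta^{-p}_{\beta^{-p}}(x^p))_+$, so they become weakly \emph{more} expensive. Moreover, an up-step with no later down-step of the same population cannot be paired at all, and deleting it alone shifts the endpoint.

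Separately, the ``linear in $N$'' saving is not justified: the per-step saving is only $\min\{\Delta^p(z),\delta^p\}$, which is small near the threshold. It is also not needed. Once no step becomes more expensive, a single strictly cheaper step already gives $\gamma_{11}^N<\gamma_{00}^N$.

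\textbf{Repairing your route.} The monotonicity claim is true, and the clean proof uses a running minimum rather than exchanges. Truncate $\phi$ at its first hit of $\Exit^N(B)$, and let $m_j$ be the coordinatewise minimum of $\phi_0,\ldots,\phi_j$.
\begin{itemize}
\item After deleting repetitions, $(m_j)$ is a feasible path made only of down-steps.
\item Each of its steps is a population-$p$ down-step triggered by a distinct down-step of $\phi$ from $\phi_j$, with $\phi_j^{-p}\ge m_j^{-p}$. By Lemma~\ref{lm:delta}, its cost $(\Delta^p(m_j^{-p}))_+$ is at most $(\Delta^p(\phi_j^{-p}))_+$.
\item $\mathcal B_0^N(B)$ is a rectangle, hence closed under coordinatewise minima. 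So the new path stays in $\mathcal B_0^N(B)$ until its last state, which lies in $\Exit^N(B)$.
\end{itemize}
Reflecting this monotone path, every step becomes strictly cheaper, because $(-\Delta^p(1-z))_+<\Delta^p(z)$ whenever $\Delta^p(z)>0$ and $\Delta^p(z)+\Delta^p(1-z)>0$. So no uniform $\delta^p$ is required.

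\textbf{The paper's route.} The paper avoids any rearrangement lemma. It lower-bounds $\gamma_{00}^N$ by the cost of leaving the subrectangle $\mathcal R=\Phi(\mathcal B_0^N(A))\subseteq\mathcal B_0^N(B)$, whose south and west exit layer reflects exactly onto $\Exit^N(A)$. A minimal path to that layer has all states but the last in $\mathcal R$. Its free up-steps therefore reflect to down-steps starting in $\mathcal B_0^N(A)$, which are also free, and every down-step gets strictly cheaper by Lemma~\ref{lem:Rpos}.

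Either repair completes your argument. Yours compares against the true escape path; the paper's accepts a cruder lower bound in exchange for needing no path surgery.
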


\begin{proof}
We prove the claim for \((1,1)\); the proof for \((0,0)\) is symmetric. Fix
\((\beta^1,\beta^2)\in\mathbb R^2\), and write
\(A:=(0,0)\), \(B:=(1,1)\), and \(X:=X^{N^1,N^2}\) for notational convenience. 

Since \((1,1)\) is super-dominant for each population,
Lemma~\ref{lem:xstar-beta-monotonicity} gives
\[
x_{\beta^2}^{1,*}<\frac12,
\qquad
x_{\beta^1}^{2,*}<\frac12 .
\]
Therefore, for all sufficiently large \(N^1,N^2\), we have
\[
\bar x^{p,+}<1-\underline x^{p,-},
\qquad p\in\{1,2\}.
\]
Fix such \(N^1,N^2\). 
Recall from Proposition~\ref{prop:2pop-ass2} that
\[
\mathcal B_0^{N^1,N^2}(B)
=
\bigl([\bar x^{1,+},1]\times[\bar x^{2,+},1]\bigr)\cap X .
\]
Inside this stability set, define the rectangle
\[
\mathcal R
:=
\bigl([1-\underline x^{1,-},1]\times[1-\underline x^{2,-},1]\bigr)\cap X .
\]
By the preceding inequalities, \(\mathcal R\subseteq
\mathcal B_0^{N^1,N^2}(B)\). Define the south and west exit layer of
\(\mathcal R\) by
\[
\partial \mathcal R
:=
\Bigl[\bigl(\{1-\bar x^1\}\times[1-\underline x^{2,-},1]\bigr)
\cup
\bigl([1-\underline x^{1,-},1]\times\{1-\bar x^2\}\bigr)\Bigr]\cap X.
\]
The rectangle \(\mathcal R\) and its exit layer \(\partial\mathcal R\) are shown in Fig.~\ref{fig:2pop-geometry}.

Every path from \(B\in\mathcal R\) to \(\Exit^{N^1,N^2}(B)\) must leave
\(\mathcal R\), because \(\Exit^{N^1,N^2}(B)\cap\mathcal R=\emptyset\).
If such a path is truncated at its first exit from \(\mathcal R\), then the
terminal state of the truncated path lies in \(\partial\mathcal R\). Since all
one-step costs are nonnegative, truncation cannot increase the cost. Hence, 
\[
\gamma_{00}^{N^1,N^2}
=
C^{N^1,N^2}\bigl(B,\Exit^{N^1,N^2}(B)\bigr)
\ge
C^{N^1,N^2}\bigl(B,\partial\mathcal R\bigr).
\]

Now define the reflection map \(T:X\to X\) by
\[
T(x^1,x^2):=(1-x^1,1-x^2).
\]
Then, we have 
\[
T(\mathcal R)
=
\bigl([0,\underline x^{1,-}]\times[0,\underline x^{2,-}]\bigr)\cap X
=
\mathcal B_0^{N^1,N^2}(A),
\]
and
\[
T(\partial\mathcal R)
=
\Bigl[\bigl(\{\bar x^1\}\times[0,\underline x^{2,-}]\bigr)
\cup
\bigl([0,\underline x^{1,-}]\times\{\bar x^2\}\bigr)\Bigr]\cap X
=
\Exit^{N^1,N^2}(A).
\]

Let \(\psi=(\psi_0,\ldots,\psi_K)\) be a minimum-cost path from \((1,1)\) to \(\partial\mathcal R\), stopped at its first visit to this layer. Hence, \(\psi_i\in\mathcal R\) for every \(i<K\). Define the reflected path \(\phi=(\phi_0,\ldots,\phi_K)\) by
\(
\phi_i:=T(\psi_i), i=0,\ldots,K.
\)
Then \(\phi\) is a feasible path from \(A\) to \(\Exit^{N^1,N^2}(A)\), and \(\phi_i\in\mathcal B_0^{N^1,N^2}(A)\) for every \(i<K\).
We compare the two path costs step by step. Suppose first that a step of \(\psi\) moves left from \((u,v)\in\mathcal R\). Then
\(
v\ge1-\underline x^{2,-}>x_{\beta^1}^{2,*}\) and \(
1-v\le\underline x^{2,-}<x_{\beta^1}^{2,*}.
\)
Hence, the original left step has cost \(\Delta_{\beta^1}^1(v)\), whereas its reflected right step has cost \(-\Delta_{\beta^1}^1(1-v)\). By Lemma~\ref{lem:Rpos},
\(
\Delta_{\beta^1}^1(v)>-\Delta_{\beta^1}^1(1-v).
\)
Similarly, a down step from \((u,v)\in\mathcal R\) has cost \(\Delta_{\beta^2}^2(u)\), whereas its reflected up step has cost \(-\Delta_{\beta^2}^2(1-u)\), and
\(
\Delta_{\beta^2}^2(u)>-\Delta_{\beta^2}^2(1-u).
\)

If a step of \(\psi\) instead moves right or up, it has zero cost because its starting state lies in \(\mathcal R\). Its reflected left or down step also has zero cost because its starting state lies in
\(T(\mathcal R)=\mathcal B_0^{N^1,N^2}(A)\). Thus, reflection strictly reduces the cost of every left or down step and preserves the zero cost of every right or up step. Since \(\psi\) must contain at least one left or down step to reach \(\partial\mathcal R\), we obtain
\[
C^{N^1,N^2}(\phi)
<
C^{N^1,N^2}(\psi)
=
C^{N^1,N^2}\bigl((1,1),\partial\mathcal R\bigr).
\]
Since \(\phi\) is a path from \(A\) to \(\Exit^{N^1,N^2}(A)\),
\[
\gamma_{11}^{N^1,N^2}
\le
C^{N^1,N^2}(\phi)
<
C^{N^1,N^2}\bigl((1,1),\partial\mathcal R\bigr)
\le
\gamma_{00}^{N^1,N^2}.
\]
Thus, by Theorem~\ref{thm:sse}, \((1,1)\) is uniquely stochastically stable.
\end{proof}

Theorems~\ref{thm:2pop-rsbrm} and~\ref{thm:2pop_rslogit_superdominant} show
that the robustness result from the single-population setting extends to the
two-population setting. Under both revision protocols, a super-dominant equilibrium remains uniquely stochastically stable for all fixed risk-attitude
profiles \((\beta^1,\beta^2)\), provided the two population sizes are
sufficiently large. Thus, even when the two populations have heterogeneous risk
preferences, long-run equilibrium selection is pinned down by the payoff
structure rather than by the particular risk attitudes entering the learning
rule. The robustness of super-dominant equilibria is natural and useful because
super-dominance aligns payoff dominance with downside protection. Such an
equilibrium is attractive both to risk-seeking agents, who emphasize high-payoff
outcomes, and to risk-averse agents, who emphasize worst-case guarantees.
Therefore, heterogeneity in the risk parameters does not
change the direction of long-run selection: the payoff structure itself keeps the dynamics anchored at the super-dominant equilibrium.

\subsection{Numerical Illustration}\label{sec:num_illus_theorem_52}

\begin{figure}
    \centering
    \includegraphics[width=0.75\linewidth]{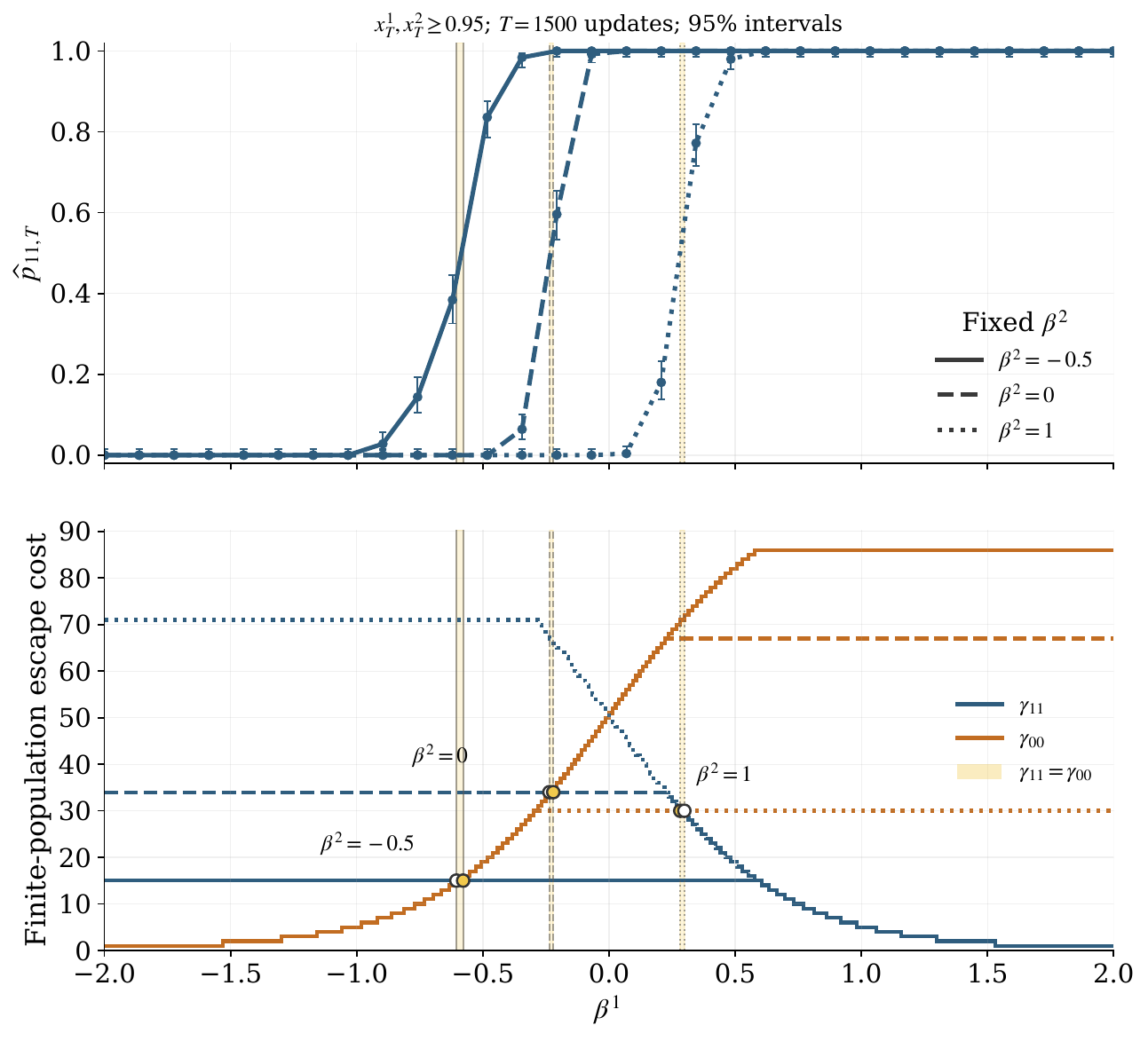}
    \vspace{-10pt}
    \caption{\texttt{2Pop-RS-BRM} as \(\beta^1\) varies, for three fixed values of \(\beta^2\). Top: finite-time coordination frequencies with pointwise \(95\%\) Wilson intervals. Bottom: finite-population escape costs; shading marks equal-cost regions.}
    \label{fig:2pop-rsbrm-numerical}
\end{figure}

We illustrate the dependence of finite-time coordination frequencies and
finite-population escape costs on risk attitudes under \texttt{2Pop-RS-BRM}.
Consider the asymmetric coordination game
\[
M^1=\begin{pmatrix}7 & 1\\ 4 & 4\end{pmatrix},
\qquad
M^2=\begin{pmatrix}5 & 4\\ 1 & 6\end{pmatrix}.
\]
This is a Battle-of-the-Sexes-type game \cite{osborne1994course}: population~1
receives a higher coordinated payoff at \((1,1)\), whereas population~2 receives
a higher coordinated payoff at \((0,0)\). Thus, the two populations have
conflicting payoff preferences over the two coordinated equilibria.

We fix \(N^1=N^2=100\), \(\eta=0.1\), and the initial state
\(x_0^1=x_0^2=0.5\). For each \(\beta^2\in\{-0.5,0,1\}\), we use
\(\beta_j^1=-2+4j/29\), \(j=0,\ldots,29\), and run \(250\) independent
Monte Carlo simulations per parameter pair. Each run lasts exactly
\(T=1500\) individual revision opportunities. We classify a terminal state
as near \((1,1)\) if both coordinates are at least \(0.95\),
as near \((0,0)\) if both are at most \(0.05\), and otherwise as ``other.''
The top panel of Fig.~\ref{fig:2pop-rsbrm-numerical} plots
\[
\widehat p_{11,T}
=\frac{1}{250}\sum_{r=1}^{250}
\mathbf 1_{\{x_{T,r}^1\geq0.95,\;x_{T,r}^2\geq0.95\}}.
\]
Error bars are pointwise \(95\%\) Wilson score intervals for this
terminal-event probability; near a
frequency of \(0.5\), their half-width is approximately \(0.062\).

The bottom panel plots the integer escape costs
\(\gamma_{11}^{N^1,N^2}\) and \(\gamma_{00}^{N^1,N^2}\) from
Theorem~\ref{thm:2pop-rsbrm}. These costs describe stochastic stability in the
vanishing-noise limit at fixed population sizes: \((1,1)\) is uniquely stochastically stable
when \(\gamma_{11}^{N^1,N^2}<\gamma_{00}^{N^1,N^2}\), and \((0,0)\) is uniquely
stochastically stable under the reverse strict inequality.

The terminal frequencies show how risk attitudes affect coordination from the
specified initial state over this fixed horizon. For each fixed value of
\(\beta^2\), the observed transition occurs near the parameter interval where
the two escape costs tie. The direction of this transition also has a clear
payoff interpretation for population~1. As \(\beta^1\) increases, population~1
becomes more risk-seeking, and more runs end near \((1,1)\), the coordinated
equilibrium with the higher payoff for population~1. Conversely, as
\(\beta^1\) decreases, population~1 becomes more risk-averse, and more runs end
near \((0,0)\), which corresponds to its maximin choice. Thus, changing only
population~1's risk attitude can change the relative difficulty of escaping
the two coordinated equilibria and shift finite-time coordination from its
maximin equilibrium toward the equilibrium with the higher payoff. Moreover,
the location of this transition depends on population~2's risk attitude,
showing that finite-time coordination is shaped jointly by the risk attitudes
of both populations.

\section{Extension: $k$-Action Setting}\label{sec:k_action}

We now extend the single-population analysis from two actions to symmetric games
with an arbitrary finite number of actions. 

\subsection{Model and Setup}

Let \(S=\{1,\ldots,k\}\), with
\(k\ge 2\), denote the common action set, and let
\(A=(a_{ij})_{i,j\in S}\) be the payoff matrix, where \(a_{ij}\) is the payoff
to an agent playing action \(i\) against an opponent playing action \(j\).

For a population of size \(N\), the finite state space is
\[
\Delta_k^N :=
\left\{
x\in \frac1N\mathbb Z_+^k:
\sum_{i=1}^k x_i=1
\right\},
\]
where \(x_i\) denotes the fraction of agents currently playing action \(i\).

In a symmetric \(k\)-action game, a diagonal action profile \((i,i)\) is a
strict Nash equilibrium if action \(i\) is the unique best response to itself,
that is,
\[
    a_{ii}>a_{ji},
    \qquad \forall j\in S\setminus\{i\}.
\]
The corresponding population state to this equilibrium is the vertex
\(e_i\in\Delta_k^N\), where all agents play action \(i\).

\begin{definition}[\(k\)-action coordination games]
\label{def:k-action-coordination}
A symmetric \(k\)-action game is a \emph{\(k\)-action coordination game} if
every diagonal action profile \((i,i)\), \(i\in S\), is a strict Nash
equilibrium; equivalently,
\[
    a_{ii}>a_{ji},
    \qquad \forall i\in S,\ \forall j\in S\setminus\{i\}.
\]
\end{definition}

Before proceeding to the stochastic-stability analysis, we introduce two refinement notions for diagonal Nash equilibria.

\begin{definition}[Payoff dominance]
\label{def:payoff-dominance}
In a symmetric \(k\)-action coordination game, a strict coordinated equilibrium \((i,i)\) is
\emph{payoff-dominant} if its coordinated payoff is strictly larger than the
coordinated payoff of every other strict coordinated equilibrium:
\[
    a_{ii}>a_{ss},
    \qquad \forall s\in S\setminus\{i\}.
\]
\end{definition}

\begin{definition}[Strong payoff dominance]
\label{def:strong-payoff-dominance}
In a symmetric \(k\)-action game, a diagonal profile \((i,i)\) is
\emph{strongly payoff-dominant} if
\[
    a_{ii}>
    \max_{s\in S\setminus\{i\}}\max_{r\in S} a_{sr}.
\]
In particular, any strongly payoff-dominant diagonal profile is a strict Nash
equilibrium, because the above condition implies
\[
    a_{ii}>a_{ji},
    \qquad \forall j\in S\setminus\{i\}.
\]
\end{definition}

Strong payoff dominance requires the diagonal payoff \(a_{ii}\) to exceed not
only the diagonal payoffs \(a_{ss}\) of all other actions \(s\neq i\), but also
all payoffs obtainable by any action \(s\neq i\) against any opponent action
\(r\in S\). In a \(k\)-action coordination game, this condition is \emph{equivalent to} 
payoff dominance. Indeed, suppose \((i,i)\) is payoff-dominant, so that
\(a_{ii}>a_{rr}\) for every \(r\neq i\). Since the game is a coordination game,
each diagonal profile \((r,r)\) is a strict Nash equilibrium. Hence, we have
\[
a_{rr}>a_{sr},
\qquad \forall s\neq r .
\]
Therefore, for every \(s\neq i\) and every \(r\in S\), we have
\(a_{ii}>a_{sr}\): if \(r=i\), this follows from the strictness of \((i,i)\);
if \(r\neq i\), it follows from \(a_{ii}>a_{rr}> a_{sr}\) when $s \neq r$ and \(a_{ii}>a_{rr} = a_{sr}\) when $s = r$. Thus, payoff dominance
implies strong payoff dominance. The converse is immediate by taking \(r=s\).
Hence, in \(k\)-action coordination games, payoff dominance and strong payoff
dominance coincide. The strong formulation is nevertheless useful because it is
defined directly in \emph{general} $k$-action symmetric games, and can be applied beyond the
coordination-game structure.

We also introduce two maximin-type refinements that will be used later.

\begin{definition}[Strict maximin]
\label{def:strict-maximin}
In a symmetric \(k\)-action game, action \(i\in S\) is \emph{strictly maximin}
if it strictly maximizes the worst-case payoff:
\[
    \min_{r\in S} a_{ir}
    >
    \min_{r\in S} a_{sr},
    \qquad \forall s\in S\setminus\{i\}.
\]
If, in addition, \((i,i)\) is a strict Nash equilibrium, then we call the
diagonal equilibrium \((i,i)\) strictly maximin.
\end{definition}

\begin{definition}[Strong maximin]
\label{def:strong-maximin}
In a symmetric \(k\)-action game, action \(i\in S\) is \emph{strongly maximin}
if
\[
    \min_{r\in S} a_{ir}
    >
    \max_{s\in S\setminus\{i\}} a_{si}.
\]
In this case, the diagonal profile \((i,i)\) is a strict Nash equilibrium,
because
\[
    a_{ii}\ge \min_{r\in S}a_{ir}
    >
    \max_{s\in S\setminus\{i\}}a_{si}
    \ge a_{ji},
    \qquad \forall j\in S\setminus\{i\}.
\]
\end{definition}

Strong maximin is a stronger condition than strict maximin. To see
this, suppose action \(i\) is strongly maximin. Then, for every \(s\neq i\),
\[
    \min_{r\in S} a_{sr}
    \le a_{si}
    \le \max_{s'\in S\setminus\{i\}} a_{s'i}
    <
    \min_{r\in S} a_{ir}.
\]
Hence, action \(i\) strictly maximizes the worst-case payoff, and is thus 
strictly maximin. The stronger formulation is useful as it compares the
worst-case payoff of action \(i\) with the best payoff that any alternative
action can obtain when the opponent plays \(i\).

\subsection{Risk-Sensitive Evolutionary Dynamics}

We now introduce the risk-sensitive evolutionary dynamics for the
single-population \(k\)-action setting. To keep the analysis focused, we consider
the best-response-with-mutations protocol, and refer to the resulting dynamics
as \texttt{RS-BRM}, consistent with the two-action case.

Given a state \(x\in\Delta_k^N\), the ERM-adjusted payoff of action \(i\in S\)
is
\[
F_{\beta,i}(x)
=
\frac1\beta
\log\!\left(\sum_{\ell=1}^k x_\ell e^{\beta a_{i\ell}}\right),
\qquad \beta\neq 0,
\]
with the continuous risk-neutral extension
\[
F_{0,i}(x)=\sum_{\ell=1}^k x_\ell a_{i\ell}.
\]
We write
\begin{align*}
F_\beta(x):=\bigl(F_{\beta,1}(x),\ldots,F_{\beta,k}(x)\bigr)
\end{align*}
for the vector of ERM-adjusted payoffs at state \(x\). The corresponding
ERM-adjusted best-response set is
\begin{align*}
    \mathrm{BR}_\beta(x)
    :=
    \arg\max_{j\in S} F_{\beta,j}(x).
\end{align*}

At each time step, one agent is selected uniformly at random and receives a
revision opportunity. Under \texttt{RS-BRM}, the revising agent chooses an
ERM-adjusted best response with probability \(1-e^{-1/\eta}\), and chooses a
non-best-response action with probability \(e^{-1/\eta}\), where
\(\eta>0\) is the noise level. More precisely, if
\(\mathrm{BR}_\beta(x)\neq S\), then the probability of choosing action
\(j\in S\) is
\[
q_j^\eta(x)
=
\begin{cases}
\dfrac{1-e^{-1/\eta}}{|\mathrm{BR}_\beta(x)|},
& j\in \mathrm{BR}_\beta(x), \\[1.2ex]
\dfrac{e^{-1/\eta}}{k-|\mathrm{BR}_\beta(x)|},
& j\notin \mathrm{BR}_\beta(x).
\end{cases}
\]
If all actions are tied, i.e., \(\mathrm{BR}_\beta(x)=S\), we set
\[
q_j^\eta(x)=\frac1k,
\qquad j\in S.
\]
Thus, in the vanishing-noise limit, only ERM-adjusted best responses are chosen
with non-exponentially small probability.

Since only one agent revises at a time, the feasible nontrivial one-step
transitions are
\[
    x \longrightarrow x+\frac1N(e_j-e_i),
    \qquad i,j\in S,\quad i\neq j,\quad x_i>0,
\]
where an agent currently playing action \(i\) switches to action \(j\). The corresponding one-step transition probability is
\[
P^{N,\eta}\!\left(x,x+\frac1N(e_j-e_i)\right)
=
x_i q_j^\eta(x).
\]
Here, \(x_i\) is the probability that the selected revising agent currently
plays action \(i\), and \(q_j^\eta(x)\) is the probability that this agent
chooses action \(j\) after revision.

Because \(x_i>0\) is independent of \(\eta\), it does not contribute to the
small-noise exponential cost. Therefore, the cost of a feasible nontrivial
transition is
\begin{align}\label{eq:kappa-k-brm}
    \kappa^N\!\left(x,x+\frac1N(e_j-e_i)\right)
=
\begin{cases}
0, & j\in \mathrm{BR}_\beta(x),\\
1, & j\notin \mathrm{BR}_\beta(x).
\end{cases}
\end{align}
Thus, under \texttt{RS-BRM}, transitions to ERM-adjusted best-response actions
have zero cost, whereas transitions to actions outside the ERM-adjusted
best-response set have unit cost. The associated zero-noise dynamics are 
defined by the directed graph on \(\Delta_k^N\) whose edges are precisely the
zero-cost transitions in \eqref{eq:kappa-k-brm}; equivalently, these are the
one-agent revisions to ERM-adjusted best responses.

\subsection{Analyses}

We now analyze long-run equilibrium selection under \texttt{RS-BRM} in the
single-population \(k\)-action setting. In this section, the finite state space
in Definition~\ref{def:zero-noise-stability-set} is \(X^N=\Delta_k^N\), and the zero-noise dynamics are the directed graph generated by zero-cost
\texttt{RS-BRM} transitions, equivalently, by one-agent revisions to
ERM-adjusted best responses. For any attracting state
\(A\in\Delta_k^N\), we use \(\mathcal B_0^N(A)\) and \(\Exit^N(A)\) as defined
in Definition~\ref{def:zero-noise-stability-set}.

\begin{definition}[Radius and coradius {\cite{ellison2000basins}}]
\label{def:radius-coradius}
Let \(A\in\Delta_k^N\) be an attracting state of the zero-noise
\texttt{RS-BRM} dynamics. The \emph{radius} of \(A\) is
\[
R^N(A):=
C^N\!\left(A,\Exit^N(A)\right),
\]
and the \emph{coradius} of \(A\) is
\[
CR^N(A):=
\max_{x\in \Delta_k^N\setminus \mathcal B_0^N(A)}
C^N\!\left(x,\mathcal B_0^N(A)\right).
\]
\end{definition}

We use the following radius--coradius criterion to identify stochastically
stable states for the risk-sensitive \texttt{RS-BRM} dynamics.

\begin{theorem}[Radius--coradius criterion for \texttt{RS-BRM}]
\label{thm:radius-coradius-brm}
Fix \(N\), and consider the Markov chain \(X^{N,\eta}\) induced by
\texttt{RS-BRM} on the finite state space \(\Delta_k^N\). Let
\(A\in\Delta_k^N\) be an attracting state of the zero-noise \texttt{RS-BRM}
dynamics. If
\[
    R^N(A)>CR^N(A),
\]
then \(A\) is the unique stochastically stable state.
\end{theorem}

\begin{proof}
For every \(\eta>0\), the \texttt{RS-BRM} learning dynamics define an
irreducible finite-state Markov chain. As \(\eta\to0\), this family of Markov
chains is a regular perturbation of the zero-noise \texttt{RS-BRM} dynamics.
Indeed, for a feasible transition
\[
x\to x+\frac1N(e_j-e_i),
\qquad i\neq j,\quad x_i>0,
\]
the transition probability is \(x_iq_j^\eta(x)\). Since \(x_i>0\) is independent
of \(\eta\), it does not affect the exponential rate. Hence, the resistance of
the transition is \(0\) if \(j\in\mathrm{BR}_\beta(x)\), and is \(1\) if
\(j\notin\mathrm{BR}_\beta(x)\). In particular, moves to ERM-adjusted best
responses, including moves along payoff ties, have zero resistance, while
moves to non-best-response actions have unit resistance.

Thus, the unperturbed dynamics in Ellison's framework coincide with the
zero-noise \texttt{RS-BRM} dynamics defined above. Taking \(\Omega=\{A\}\),
Ellison's basin, radius, and coradius coincide respectively with
\(\mathcal B_0^N(A)\), \(R^N(A)\), and \(CR^N(A)\). The claim then follows from
the radius--coradius theorem of Ellison~\cite[Theorem~1]{ellison2000basins}.
\end{proof}

\begin{theorem}
\label{thm:payoff-dominance-high-beta}
Consider the single-population risk-sensitive evolutionary dynamics under
\texttt{RS-BRM} in a finite symmetric \(k\)-action game with action set
\(S=\{1,\ldots,k\}\). Let \(e_i\in\Delta_k^N\) denote the state in
which all agents play action \(i\in S\).

Suppose first that the diagonal profile \((i,i)\) is strongly payoff-dominant,
and define
\[
\delta_i
:=
a_{ii}
-
\max_{s\in S\setminus\{i\}}\max_{r\in S}a_{sr}
>0 .
\]
Let $\bar\beta_i:=\frac{\log 2}{\delta_i}$. 
 Then, for every \(\beta>\bar\beta_i\), there exists
\(N_0(\beta)\) such that, for every \(N\ge N_0(\beta)\), the state \(e_i\) is
uniquely stochastically stable under \texttt{RS-BRM}.

Similarly, suppose that the diagonal profile \((i,i)\) is strongly maximin, and
define
\[
    \delta_i^{\mathrm{mm}}
    :=
    \min_{r\in S} a_{ir}
    -
    \max_{s\in S\setminus\{i\}} a_{si}
    >0 .
\]
Let $\underline\beta_i
    :=
    -\frac{\log 2}{\delta_i^{\mathrm{mm}}}.$ 
Then, for every \(\beta<\underline\beta_i\), there exists
\(N_0(\beta)\) such that, for every \(N\ge N_0(\beta)\), the state \(e_i\) is
uniquely stochastically stable under \texttt{RS-BRM}.
\end{theorem}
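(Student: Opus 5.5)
We apply the radius--coradius criterion of Theorem~\ref{thm:radius-coradius-brm}. The plan is to find a fraction $\theta<\tfrac12$, depending on $\beta$ but not on $N$, with two properties. First, $i$ is the unique ERM-adjusted best response at every state with $x_i\ge\theta$. Second, $i$ is a best response at every state with $x_i$ roughly at least $1-\theta$, and this should hold \emph{from the other side}, for leaving $e_i$. The radius is then of order $N(1-\theta)$ and the coradius at most of order $N\theta$, so $R^N(e_i)>CR^N(e_i)$ for large $N$.

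\textbf{Key inequality (risk-seeking case).} Fix $\beta>\bar\beta_i$ and write $M:=\max_{s\ne i}\max_r a_{sr}=a_{ii}-\delta_i$. For any state $x$ and any $s\ne i$,
\[
e^{\beta F_{\beta,i}(x)}\ge x_i e^{\beta a_{ii}},\qquad e^{\beta F_{\beta,s}(x)}\le e^{\beta M}.
\]
Hence $i$ is the unique best response whenever $x_i>e^{-\beta\delta_i}=:\theta_\beta$. Since $\beta\delta_i>\log 2$, we get $\theta_\beta<\tfrac12$.

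\textbf{Coradius.} Take any $x\notin\mathcal B_0^N(e_i)$. Revise agents not playing $i$ to $i$ one at a time, each at cost at most $1$, until $x_i>\theta_\beta$. From then on every move to $i$ is a zero-cost best response, and the region $\{x_i>\theta_\beta\}$ is closed under zero-cost moves, since only switches to $i$ are best responses there. So the state reaches $e_i$, and that whole region lies in $\mathcal B_0^N(e_i)$. This gives
\[
CR^N(e_i)\le\lfloor N\theta_\beta\rfloor+1.
\]

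\textbf{Radius.} Any path leaving $\mathcal B_0^N(e_i)$ must reach a state with $x_i\le\theta_\beta$, because $\{x_i>\theta_\beta\}\subseteq\mathcal B_0^N(e_i)$. Along the way, every step that lowers $x_i$ starts from a state with $x_i>\theta_\beta$, where $i$ is the unique best response, so each such step costs $1$. At least $N-\lfloor N\theta_\beta\rfloor$ such steps are needed. Therefore
\[
R^N(e_i)\ge N(1-\theta_\beta).
\]
Comparing the two bounds, $N(1-\theta_\beta)>N\theta_\beta+1$ once $N>1/(1-2\theta_\beta)$. This fixes $N_0(\beta)$, and Theorem~\ref{thm:radius-coradius-brm} gives unique stochastic stability of $e_i$.

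\textbf{Risk-averse case.} Fix $\beta<\underline\beta_i$ and write $\beta=-\lambda$ with $\lambda>0$. Now $F_{\beta,j}(x)=-\frac1\lambda\log\sum_\ell x_\ell e^{-\lambda a_{j\ell}}$. Let $m:=\min_r a_{ir}$ and $M':=\max_{s\ne i}a_{si}=m-\delta_i^{\mathrm{mm}}$. Then
\[
\sum_\ell x_\ell e^{-\lambda a_{i\ell}}\le e^{-\lambda m},\qquad \sum_\ell x_\ell e^{-\lambda a_{s\ell}}\ge x_i e^{-\lambda M'}.
\]
So $i$ is the unique best response whenever $x_i>e^{-\lambda\delta_i^{\mathrm{mm}}}=:\theta_\beta<\tfrac12$, using $\lambda\delta_i^{\mathrm{mm}}>\log 2$. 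The radius and coradius arguments above then go through verbatim.

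The main obstacle is making the set inclusion $\{x_i>\theta_\beta\}\subseteq\mathcal B_0^N(e_i)$ rigorous. This requires that no zero-cost path from that region escapes it, which holds because decreasing $x_i$ is never a best-response move there. It also requires handling the boundary case $N\theta_\beta\in\mathbb Z$, which only shifts the bounds by one and is absorbed by taking $N_0(\beta)$ large.
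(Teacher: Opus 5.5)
Your proposal is correct and follows essentially the same route as the paper. In both, the exponential bounds show that $i$ is the unique ERM-adjusted best response on $\{x_i\ge q\}$ for some $q<\tfrac12$. This region therefore lies in $\mathcal B_0^N(e_i)$, which gives a radius of roughly $N(1-q)$ against a coradius of roughly $Nq$, and Theorem~\ref{thm:radius-coradius-brm} finishes the argument.

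The differences are cosmetic. You use $e^{-\beta\delta_i}$ itself as the threshold with a strict inequality, where the paper picks an intermediate $q$. You should also say explicitly that $e_i$ is attracting and note the trivial case $\mathcal B_0^N(e_i)=\Delta_k^N$, as the paper does.
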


\begin{proof}
We prove the two claims using the same argument. Let
\[
\Delta_k:=\left\{x\in\mathbb R_+^k:\sum_{r=1}^k x_r=1\right\}
\]
denote the continuous simplex. We first show that, in either case, there exists
\(q<1/2\) such that action \(i\) is the unique ERM-adjusted best response at
every state \(x\in\Delta_k\) satisfying \(x_i\ge q\). The desired statement then
applies in particular to all finite-population states
\(x\in\Delta_k^N\subset\Delta_k\).

First, consider the strong payoff-dominance case and fix
\(\beta>\bar\beta_i\). Then, we have
\[
    e^{-\beta\delta_i}<\frac12 .
\]
Choose \(q\) such that
\[
    e^{-\beta\delta_i}<q<\frac12 .
\]
Equivalently,
\[
    a_{ii}+\frac1\beta\log q
    >
    a_{ii}-\delta_i .
\]
Now take any \(x\in\Delta_k\) with \(x_i\ge q\). Since \(\beta>0\),
\[
F_{\beta,i}(x)
=
\frac1\beta
\log\!\left(\sum_{r\in S}x_r e^{\beta a_{ir}}\right)
\ge
\frac1\beta\log\!\left(x_i e^{\beta a_{ii}}\right)
=
a_{ii}+\frac1\beta\log x_i
\ge
a_{ii}+\frac1\beta\log q .
\]
Therefore,
\[
    F_{\beta,i}(x)>a_{ii}-\delta_i .
\]
On the other hand, for every \(s\neq i\),
\[
\sum_{r\in S}x_r e^{\beta a_{sr}}
\le
\sum_{r\in S}x_r e^{\beta \max_{r'\in S}a_{sr'}}
=
e^{\beta \max_{r'\in S}a_{sr'}} .
\]
Thus,
\[
F_{\beta,s}(x)
\le
\max_{r\in S}a_{sr}
\le
a_{ii}-\delta_i .
\]
Hence,
\[
    F_{\beta,i}(x)>F_{\beta,s}(x),
    \qquad \forall s\neq i .
\]
So action \(i\) is the unique ERM-adjusted best response whenever \(x_i\ge q\).

Now consider the strong maximin case and fix
\(\beta<\underline\beta_i\). Then, we have  $e^{\beta\delta_i^{\mathrm{mm}}}<\frac12.$ 
Choose \(q\) such that
\[
    e^{\beta\delta_i^{\mathrm{mm}}}<q<\frac12 .
\]
Since \(\beta<0\), this implies that $\frac1\beta\log q < \delta_i^{\mathrm{mm}}.$  
Let
\[
    m_i:=\min_{r\in S}a_{ir}.
\]
For any \(x\in\Delta_k\), since \(a_{ir}\ge m_i\) for all \(r\in S\) and
\(\beta<0\), we have
\[
e^{\beta a_{ir}}\le e^{\beta m_i},
\qquad \forall r\in S.
\]
Therefore,
\[
\sum_{r\in S}x_r e^{\beta a_{ir}}
\le
e^{\beta m_i}.
\]
Taking logarithms and then dividing by the negative number \(\beta\) reverses the
inequality, yielding $F_{\beta,i}(x)\ge m_i.$ 

Now fix any \(s\neq i\). Since all terms in the sum are nonnegative,
\[
\sum_{r\in S}x_r e^{\beta a_{sr}}
\ge
x_i e^{\beta a_{si}} .
\]
Taking logarithms and dividing by \(\beta<0\), we get
\[
F_{\beta,s}(x)
\le
a_{si}+\frac1\beta\log x_i .
\]
Since \(x_i\ge q\) and \(\beta<0\),
\[
\frac1\beta\log x_i
\le
\frac1\beta\log q .
\]
Moreover, by the definition of \(\delta_i^{\mathrm{mm}}\),
\[
a_{si}
\le
\max_{s'\neq i}a_{s'i}
=
m_i-\delta_i^{\mathrm{mm}} .
\]
Combining these inequalities gives
\[
F_{\beta,s}(x)
\le
m_i-\delta_i^{\mathrm{mm}}
+
\frac1\beta\log q
<
m_i
\le
F_{\beta,i}(x).
\]
Therefore,
\[
    F_{\beta,i}(x)>F_{\beta,s}(x),
    \qquad \forall s\neq i .
\]
Thus, in both cases, there exists \(q<1/2\) such that action \(i\) is the unique
ERM-adjusted best response whenever \(x_i\ge q\).

Now define
\begin{align*}
    H_q^N:=\{x\in \Delta_k^N:x_i\ge q\}.
\end{align*}
Since action \(i\) is the unique ERM-adjusted best response at every state in
\(H_q^N\), any one-agent revision from an action \(s\neq i\) to action \(i\) has
zero cost. Repeating such revisions gives a zero-cost path from every state in
\(H_q^N\) to \(e_i\). Moreover, no zero-cost path can leave \(H_q^N\). Indeed,
leaving \(H_q^N\) requires a transition that decreases \(x_i\) from a state with
\(x_i\ge q\) to a state with \(x_i<q\). Such a transition switches an agent away
from action \(i\), even though action \(i\) is the unique ERM-adjusted best
response, and hence has cost \(1\) under \texttt{RS-BRM}. Therefore,
\begin{align*}
    H_q^N\subseteq \mathcal B_0^N(e_i). 
\end{align*}
In particular, \(e_i\) is an attracting state of the zero-noise
\texttt{RS-BRM} dynamics.

If \(\mathcal B_0^N(e_i)=\Delta_k^N\), then \(e_i\) is the unique recurrent class
of the zero-noise \texttt{RS-BRM} dynamics, and the conclusion follows directly.
Otherwise, since \(H_q^N\subseteq\mathcal B_0^N(e_i)\), any path from \(e_i\) to
\(\Exit^N(e_i)\) must first leave \(H_q^N\). Starting from \(e_i\), this requires
decreasing \(x_i\) from \(1\) to a value strictly below \(q\). Hence, the path
must contain at least
\begin{align*}
    N-\lceil qN\rceil+1
\end{align*}
unit-cost revisions away from action \(i\). Therefore,
\begin{align*}
    R^N(e_i)
    \ge
    N-\lceil qN\rceil+1 .
\end{align*}

Conversely, from any state outside \(\mathcal B_0^N(e_i)\), one can reach
\(H_q^N\subseteq\mathcal B_0^N(e_i)\) by revising agents to action \(i\) one at a
time. Starting from any state, at most \(\lceil qN\rceil\) such revisions are
needed to make \(x_i\ge q\). Each such revision has cost at most \(1\) under
\texttt{RS-BRM}. Hence,
\[
    CR^N(e_i)\le \lceil qN\rceil .
\]

Since \(q<1/2\), there exists \(N_0(\beta)\) such that, for every
\(N\ge N_0(\beta)\),
\[
    N-\lceil qN\rceil+1>\lceil qN\rceil .
\]
For such
\(N\), we have
\[
    R^N(e_i)>CR^N(e_i).
\]
By Theorem~\ref{thm:radius-coradius-brm}, \(e_i\) is the unique stochastically
stable state. This completes the proof.
\end{proof}

We next provide a numerical illustration of how risk attitudes can affect the 
long-run equilibrium selection in the \(k\)-action setting.

\subsection{Numerical Illustration}

We illustrate Theorem~\ref{thm:payoff-dominance-high-beta} using the following
symmetric four-action game:
\[
A=
\begin{pmatrix}
-10 & -10 & -10 & 0\\
-10 & 10  & -10 & 0\\
4   & 4   & 7   & 0\\
2   & 2   & 2   & 5
\end{pmatrix}.
\]
This example is chosen to highlight the two opposite risk-sensitive selection
effects in Theorem~\ref{thm:payoff-dominance-high-beta}. In particular, one
diagonal equilibrium is strongly payoff-dominant and is therefore selected under
sufficiently risk-seeking behavior, while another diagonal equilibrium is
strongly maximin and is selected under sufficiently risk-averse behavior.

This game is a symmetric four-action game, but it is not a four-action
coordination game in the sense of Definition~\ref{def:k-action-coordination},
because \((1,1)\) is not a strict Nash equilibrium. By contrast, \((2,2)\), \((3,3)\),
and \((4,4)\) are strict diagonal equilibria. As before, \(e_i\in\Delta_4^N\)
denotes the state in which all agents play action \(i\).

The diagonal equilibrium \((2,2)\) is strongly payoff-dominant. Its payoff is
\(a_{22}=10\), while the largest payoff attainable by any action other than
action \(2\) is
\[
\max_{s\neq 2}\max_{r\in S}a_{sr}
=
\max\{0,7,5\}
=
7 .
\]
Therefore,
\[
\delta_2
=
a_{22}
-
\max_{s\neq 2}\max_{r\in S}a_{sr}
=
10-7
=
3>0.
\]
The corresponding high-risk-seeking threshold is
\[
\bar\beta_2
=
\frac{\log 2}{\delta_2}
=
\frac{\log 2}{3}
\approx 0.231.
\]
Hence, for \(\beta>\bar\beta_2\), Theorem~\ref{thm:payoff-dominance-high-beta}
predicts that \(e_2\) is the unique stochastically stable state for all
sufficiently large \(N\).

The diagonal equilibrium \((4,4)\) is strongly maximin. Action \(4\) guarantees
the payoff
\begin{align*}
\min_{r\in S}a_{4r}
=
\min\{2,2,2,5\}
=
2,
\end{align*}
whereas the largest payoff obtained by any other action against an opponent
playing action \(4\) is
\[
\max_{s\neq 4}a_{s4}
=
\max\{0,0,0\}
=
0.
\]
Thus,
\[
\delta_4^{\mathrm{mm}}
=
\min_{r\in S}a_{4r}
-
\max_{s\neq 4}a_{s4}
=
2-0
=
2>0.
\]
The corresponding low-risk threshold is
\[
\underline\beta_4
=
-\frac{\log 2}{\delta_4^{\mathrm{mm}}}
=
-\frac{\log 2}{2}
\approx -0.347.
\]
Therefore, for \(\beta<\underline\beta_4\),
Theorem~\ref{thm:payoff-dominance-high-beta} predicts that \(e_4\) is the
unique stochastically stable state for all sufficiently large \(N\).

\begin{figure}[t]
    \centering
    \includegraphics[width=0.82\linewidth]{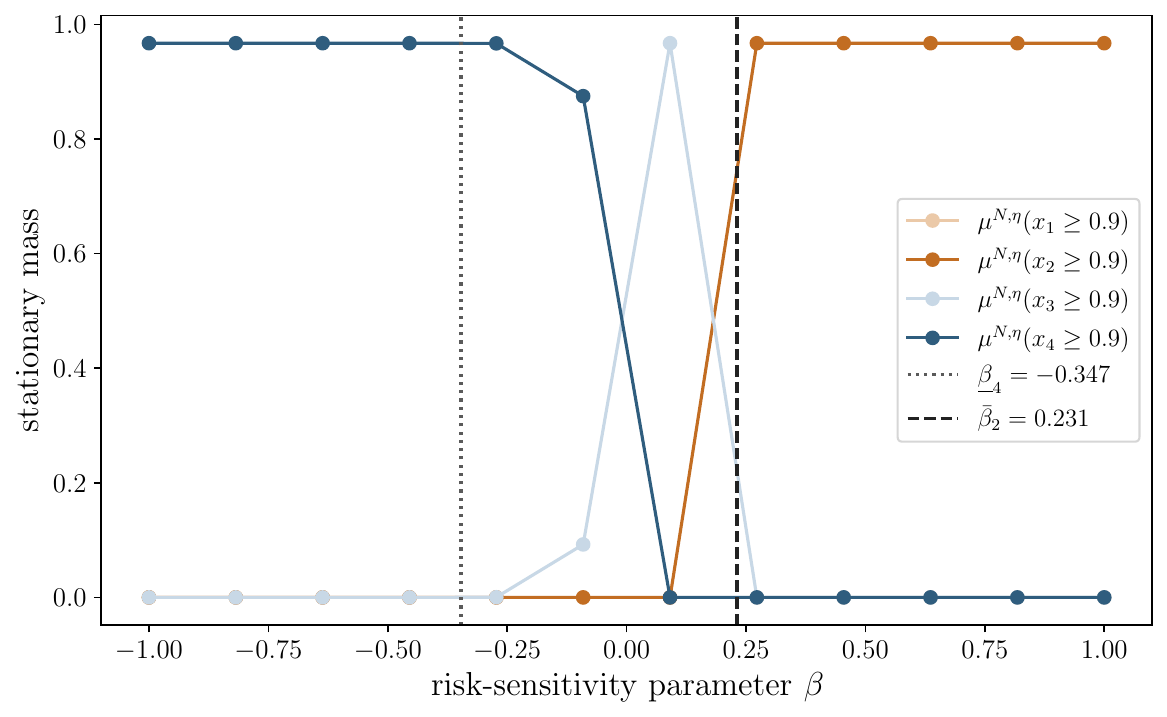}
    \caption{Stationary mass near each pure population state under \texttt{RS-BRM},
    with \(N=20\), \(\eta=0.3\), and threshold \(x_i\geq 0.9\).}
    \label{fig:near-stationary-mass}
\end{figure}

We compute the exact stationary distribution of the finite Markov chain induced
by \texttt{RS-BRM}. The experiment uses \(N=20\), \(\eta=0.3\), and a grid of
\(12\) evenly spaced values of \(\beta\in[-1,1]\). Since the stationary
distribution at a nonzero noise level may place substantial mass on states that
are close to, but not exactly equal to, an attracting state, we also report an
aggregate concentration measure. Specifically, Fig.~\ref{fig:near-stationary-mass}
plots $\mu^{N,\eta}\bigl(x_i\geq 0.9\bigr)$, 
the stationary probability that at least \(90\%\) of the population plays action
\(i\). This statistic gives a more robust finite-noise visualization of how the
stationary distribution concentrates around each attracting state.

This example illustrates how the risk-sensitivity parameter shapes long-run
selection. When \(\beta>0\), the ERM-adjusted payoff places greater emphasis on
high-payoff realizations. As \(\beta\) increases, the high diagonal payoff
\(a_{22}=10\) makes action \(2\) increasingly attractive, leading to selection
of the strongly payoff-dominant equilibrium \((2,2)\). When \(\beta<0\), the
ERM-adjusted payoff places greater emphasis on low-payoff realizations, so
selection is driven more by downside protection. In this example, action \(4\)
guarantees a payoff of at least \(2\) against every opponent action, while every
other action obtains at most \(0\) against an opponent playing action \(4\).
Thus, stronger risk aversion enables the evolutionary dynamics to favor the
more robust alternative against adverse opponent behavior, namely the strongly
maximin equilibrium \((4,4)\).

\section{Concluding Remarks} 

We studied risk-sensitive evolutionary learning dynamics in coordination games. By incorporating the entropic risk measure into noisy best-response protocols, we showed that agents' risk attitudes could change the stochastically stable outcome. In single-population symmetric \(2\times2\) coordination games, we showed that greater risk-seeking behavior favors the payoff-dominant equilibrium, whereas greater risk-averse behavior favors the maximin (and thus worst-case robust) equilibrium whenever these two equilibria differ. We also showed that this selection pattern holds under both protocols of  best response with mutations and logit choice.

In both the single-population and two-population settings, we also identified a robust regime in which super-dominant equilibria remain stochastically stable for all risk-sensitivity parameters. Thus, when payoff dominance and worst-case robustness are \emph{aligned}, long-run selection can be insensitive to agents' risk attitudes and to the specific noisy best-response protocol. Finally, we extended the single-population analysis to symmetric $k$-action games, 
and showed that sufficiently risk-seeking agents select strongly payoff-dominant equilibria, while sufficiently risk-averse agents select strongly maximin equilibria, whenever such equilibria exist. These results together showed that entropic risk sensitivity may serve as a systematic mechanism for steering equilibrium selection beyond the classical risk-neutral benchmark.

Several directions remain open. First, our analysis focused on homogeneous risk attitudes \emph{within} each population. Relaxing this assumption and allowing risk sensitivity to vary across agents or over time could provide a broader understanding of how risk preferences can shape the long-run equilibrium selection.
Second, we studied entropic risk sensitivity because it provided a natural and analytically tractable way to incorporate payoff uncertainty into evolutionary learning. Other risk-sensitive criteria may induce different effective payoff comparisons and, consequently, different stochastic-stability patterns, which are worth further investigation. 
Finally, the present analysis focused on coordination-game environments. Extending the framework to richer population structures and other  payoff structures  
may further illustrate how risk attitudes can shape the collective behaviors in multi-agent interactions. 

\section*{Acknowledgement}
The authors acknowledge the  support from the Army Research Office (ARO) grant W911NF-24-1-0085, the NSF CAREER Award 2443704, the AFOSR YIP Award FA9550-25-1-0258, an AI Safety
Research Award from Coefficient Giving,  a Cisco Faculty Research Award, and a JP Morgan Faculty Research Award. The authors also acknowledge the conversations with Tamer Ba\c{s}ar, Xiangyu Liu, and Eric Mazumdar, and the valuable feedback from the anonymous reviewers of the IEEE Conference on Decision and Control (CDC) 2026. 

\begin{appendices}

\section{Deferred Proofs}\label{secA1}

\subsection{Proof of Lemma~\ref{lm:delta}}\label{app:lem-delta}

\begin{proof}
Suppress the population superscript \(p\), and write \(\beta,\Delta,a,b,c,d\). Fix \(z\in(0,1)\). For \(\beta\neq0\),
\[
\Delta'(z)
=
\frac{1}{\beta}
\left(
\frac{e^{\beta a}-e^{\beta b}}{z e^{\beta a}+(1-z)e^{\beta b}}
-
\frac{e^{\beta c}-e^{\beta d}}{z e^{\beta c}+(1-z)e^{\beta d}}
\right).
\]
Let
\[
\alpha_1=e^{\beta(a-b)}, 
\qquad 
\alpha_2=e^{\beta(c-d)},
\qquad
\phi(\alpha):=\frac{\alpha-1}{z\alpha+1-z}.
\]
Then
\[
\Delta'(z)=\frac1\beta\bigl[\phi(\alpha_1)-\phi(\alpha_2)\bigr],
\qquad
\phi'(\alpha)=\frac{1}{(z\alpha+1-z)^2}>0.
\]
Hence, \(\phi\) is strictly increasing in \(\alpha\). Therefore,
\[
\operatorname{sgn}\Delta'(z)
=
\operatorname{sgn}(\beta)\operatorname{sgn}(\alpha_1-\alpha_2).
\]
If \(\beta>0\), then
\[
\operatorname{sgn}(\alpha_1-\alpha_2)
=
\operatorname{sgn}\bigl((a-b)-(c-d)\bigr),
\]
whereas if \(\beta<0\), this sign is reversed. This reversal is exactly offset by the negative factor \(1/\beta\). Thus, in both cases,
\[
\operatorname{sgn}\Delta'(z)
=
\operatorname{sgn}\bigl((a-b)-(c-d)\bigr).
\]
Since \(a>c\) and \(d>b\) in a coordination game, we have
\[
(a-b)-(c-d)=(a-c)+(d-b)>0.
\]
Hence, \(\Delta'(z)>0\) for all \(z\in(0,1)\), so \(\Delta\) is strictly increasing on \((0,1)\).

For \(\beta=0\), the expected-payoff expression gives
\[
\Delta_0(z)
=
z(a-c)+(1-z)(b-d),
\]
whose derivative is
\[
\Delta_0'(z)=(a-c)+(d-b)>0.
\]
Thus, strict monotonicity also holds at \(\beta=0\).

Moreover,
\[
\Delta(0)=b-d<0,
\qquad
\Delta(1)=a-c>0.
\]
By continuity, \(\Delta\) has a root in \((0,1)\), and strict monotonicity makes this root unique. For \(\beta\neq0\), solving \(\Delta(z)=0\) yields
\[
z e^{\beta a}+(1-z)e^{\beta b}
=
z e^{\beta c}+(1-z)e^{\beta d},
\]
and hence \eqref{eq:xstar-beta}. The expression at \(\beta=0\) follows by taking the continuous extension, or equivalently by solving \(\Delta_0(z)=0\). The stated sign characterization follows immediately from strict monotonicity.
\end{proof}

\subsection{Proof of Lemma~\ref{lem:xstar-beta-monotonicity}}\label{app:lem-xstar}

\begin{proof}
The argument is identical for every population, so we suppress the superscript \(p\) and write
\[
x_\beta^*
=
\frac{e^{\beta d}-e^{\beta b}}
{\bigl(e^{\beta a}-e^{\beta c}\bigr)+\bigl(e^{\beta d}-e^{\beta b}\bigr)}.
\]
For \(\beta\neq 0\), the denominator is nonzero because \(e^{\beta a}-e^{\beta c}\) and \(e^{\beta d}-e^{\beta b}\) have the same sign. Hence, \(\beta\mapsto x_\beta^*\) is continuous on \(\mathbb R\setminus\{0\}\). At \(\beta=0\), using \(e^{\beta u}=1+\beta u+o(\beta)\), we obtain
\begin{align*}
x_\beta^*
\to
\frac{d-b}{(a-c)+(d-b)}.
\end{align*}
Therefore, \(\beta\mapsto x_\beta^*\) is continuous on \(\mathbb R\).

For \(\beta\neq 0\), differentiating gives
\begin{equation}\label{eq:dxstar-beta-generalized}
\frac{d x^{*}_{\beta}}{d\beta}
=
x^{*}_{\beta}\bigl(1-x^{*}_{\beta}\bigr)
\bigl[w_2(\beta)-w_1(\beta)\bigr],
\end{equation}
where
\[
w_1(\beta):=
\frac{a e^{\beta a}-c e^{\beta c}}{e^{\beta a}-e^{\beta c}},
\qquad
w_2(\beta):=
\frac{d e^{\beta d}-b e^{\beta b}}{e^{\beta d}-e^{\beta b}}.
\]
Let
\begin{align*}
F(\beta):=w_2(\beta)-w_1(\beta),
\qquad
s:=a-c>0,
\qquad
t:=d-b>0.
\end{align*}
A direct differentiation shows that
\[
F'(\beta)=g(s)-g(t),
\qquad
g(z):=\frac{z^2 e^{\beta z}}{(e^{\beta z}-1)^2},
\quad z>0.
\]
For fixed \(\beta\neq0\), differentiating \(g\) with respect to \(z\) gives
\[
g'(z)
=
-\frac{z e^{\beta z}}{(e^{\beta z}-1)^3}\,h(\beta z),
\qquad
h(r):=(r-2)e^r+r+2.
\]
We now show that \(g'(z)<0\) for all \(z>0\). Since
\[
h(0)=0,
\qquad
h'(r)=(r-1)e^r+1,
\]
and
\[
h''(r)=r e^r,
\]
the function \(h'\) decreases on \((-\infty,0]\), increases on \([0,\infty)\), and attains its minimum \(h'(0)=0\). Hence, \(h'(r)\ge0\) for all \(r\), with strict inequality for \(r\neq0\). Thus, \(h\) is strictly increasing and, since \(h(0)=0\), \(h(r)\) has the same sign as \(r\). The term \((e^r-1)^3\) also has the same sign as \(r\), so \(g'(z)<0\) for all \(z>0\).

It follows that \(F\) is constant when \(s=t\), and otherwise is strictly monotone on \((-\infty,0)\) and on \((0,\infty)\), with the same monotonicity direction on both intervals. As \(\beta\to0\),
\[
w_1(\beta)=\frac1\beta+\frac{a+c}{2}+O(\beta),
\qquad
w_2(\beta)=\frac1\beta+\frac{b+d}{2}+O(\beta),
\]
so the poles cancel and \(F\) extends continuously to \(\beta=0\), with \(F(0)=\frac{b+d-a-c}{2}\). Therefore, \(F\) is monotone on all of \(\mathbb R\).

Using \(a>c\) and \(d>b\), we have
\[
\lim_{\beta\to\infty}w_1(\beta)=a,
\qquad
\lim_{\beta\to\infty}w_2(\beta)=d,
\]
and
\[
\lim_{\beta\to-\infty}w_1(\beta)=c,
\qquad
\lim_{\beta\to-\infty}w_2(\beta)=b.
\]
Hence,
\[
\lim_{\beta\to\infty}F(\beta)=d-a,
\qquad
\lim_{\beta\to-\infty}F(\beta)=b-c.
\]
Moreover,
\[
\lim_{\beta\to\infty}x_\beta^*
=
\begin{cases}
1,& d>a,\\
0,& d<a,
\end{cases}
\qquad
\lim_{\beta\to-\infty}x_\beta^*
=
\begin{cases}
1,& b<c,\\
0,& b>c.
\end{cases}
\]

Since \(x_\beta^*(1-x_\beta^*)>0\), \eqref{eq:dxstar-beta-generalized} implies
\[
\operatorname{sgn}\!\left(\frac{d x_\beta^*}{d\beta}\right)
=
\operatorname{sgn}\bigl(F(\beta)\bigr).
\]
If \(b>c\) and \(d>a\), then both endpoint limits of \(F\) are positive. Since \(F\) is monotone, \(F(\beta)>0\) for all \(\beta\), so \(x_\beta^*\) is strictly increasing. The stated limits give case \textup{(I)}. Similarly, if \(b<c\) and \(d<a\), then both endpoint limits of \(F\) are negative. Hence, \(F(\beta)<0\) for all \(\beta\), so \(x_\beta^*\) is strictly decreasing, and the stated limits give case \textup{(II)}.

It remains to prove cases \textup{(III)} and \textup{(IV)}. Since \(a>c\) and \(d>b\), the quantities
\(e^{\beta a}-e^{\beta c}\) and \(e^{\beta d}-e^{\beta b}\) have the same sign as \(\beta\). Hence, for \(\beta\neq0\), the expression in \eqref{eq:xstar-beta} can be written as
\[
x_\beta^*
=
\frac{B_\beta}{A_\beta+B_\beta},
\]
where
\[
A_\beta:=\bigl|e^{\beta a}-e^{\beta c}\bigr|,
\qquad
B_\beta:=\bigl|e^{\beta d}-e^{\beta b}\bigr|.
\]
Then \(x_\beta^*<\tfrac12\) if and only if \(B_\beta<A_\beta\), and \(x_\beta^*>\tfrac12\) if and only if \(B_\beta>A_\beta\). For any \(u<v\),
\[
\bigl|e^{\beta v}-e^{\beta u}\bigr|
=
|\beta|\int_u^v e^{\beta r}\,dr.
\]
If \(b\ge c\) and \(d\le a\), with at least one strict inequality, then \([b,d]\) is a proper subinterval of \([c,a]\). Therefore,
\[
B_\beta
=
|\beta|\int_b^d e^{\beta r}\,dr
<
|\beta|\int_c^a e^{\beta r}\,dr
=
A_\beta.
\]
Thus, \(x_\beta^*<\tfrac12\) for all \(\beta\neq0\). At \(\beta=0\), \(a-c>d-b\) gives \(x_0^*=\frac{d-b}{(a-c)+(d-b)}<\tfrac12\). This proves \textup{(III)}.
Similarly, if \(b\le c\) and \(d\ge a\), with at least one strict inequality, then \([c,a]\) is a proper subinterval of \([b,d]\). Hence, \(A_\beta<B_\beta\), so \(x_\beta^*>\tfrac12\) for all \(\beta\neq0\). At \(\beta=0\), \(d-b>a-c\) gives \(x_0^*>\tfrac12\). This proves \textup{(IV)}.

Finally, if \(a=d\) and \(b=c\), then \eqref{eq:xstar-beta} simplifies to \(x_\beta^*\equiv\tfrac12\).
\end{proof}

\subsection{Proof of Theorem~\ref{thm:sse}}
\label{app:thm-sse-proof}

\begin{proof}
Let \(P^{N,\eta}\) be the transition matrix of \(X^{N,\eta}\). By the Markov-chain tree representation for finite irreducible Markov chains, also used in Freidlin--Wentzell stochastic-stability theory \cite{freidlin1998random,kandori1993learning,young1993evolution}, the stationary distribution satisfies
\[
\mu^{N,\eta}(x)
=
\frac{\tau_x^{N,\eta}}{\sum_{z\in X^N}\tau_z^{N,\eta}},
\]
where
\begin{align*}
\tau_x^{N,\eta}
:=
\sum_{T\in\mathcal T_x}
\prod_{(u\to v)\in T}P^{N,\eta}(u,v),
\end{align*}
and \(\mathcal T_x\) denotes the set of directed spanning trees rooted at \(x\), i.e., each state \(y\neq x\) has a unique directed path to \(x\) along the edges of \(T\).

For each tree \(T\in\mathcal T_x\), \eqref{eq:ldp-step} gives
\[
-\eta\log
\left(
\prod_{(u\to v)\in T}P^{N,\eta}(u,v)
\right)
\longrightarrow
\sum_{(u\to v)\in T}\kappa^N(u,v),
\qquad \eta\to0.
\]
Since \(X^N\) is finite, the set \(\mathcal T_x\) is finite. Hence, the sum defining \(\tau_x^{N,\eta}\) is exponentially dominated by the minimum-cost tree. Define the stochastic potential
\[
V^N(x)
:=
\min_{T\in\mathcal T_x}
\sum_{(u\to v)\in T}\kappa^N(u,v).
\]
Then
\[
-\eta\log\tau_x^{N,\eta}\longrightarrow V^N(x),
\qquad \eta\to0.
\]
Therefore,
\[
\eta\log\frac{\mu^{N,\eta}(B)}{\mu^{N,\eta}(A)}
=
\eta\log\frac{\tau_B^{N,\eta}}{\tau_A^{N,\eta}}
\longrightarrow
V^N(A)-V^N(B).
\]
It remains to identify \(V^N(A)\) and \(V^N(B)\) with the corresponding escape costs.

We first show that
\begin{align*}
V^N(A)=\gamma_A^N
:=
C^N\!\bigl(B,\Exit^N(B)\bigr).
\end{align*}
The proof for \(V^N(B)=\gamma_B^N\) is symmetric.

For the upper bound, let
\begin{align*}
\phi=(\phi_0,\phi_1,\ldots,\phi_K)
\end{align*}
be a minimum-cost path from \(B\) to \(\Exit^N(B)\), with \(\phi_0=B\) and \(\phi_K\in\Exit^N(B)\). By definition,
\begin{align*}
C^N(\phi)=\gamma_A^N.
\end{align*}
Since \(\phi_K\notin \mathcal B_0^N(B)\), Assumption~\ref{ass:2attractor} (II) implies that \(C^N(\phi_K,A)=0\). Thus, there exists a zero-cost path from \(\phi_K\) to \(A\). Concatenating this zero-cost path with \(\phi\) gives a path from \(B\) to \(A\) with total cost \(\gamma_A^N\).

We now construct an \(A\)-rooted spanning tree. Include the edges of the above path from \(B\) to \(A\). For each remaining state \(x\in \mathcal B_0^N(A)\), attach \(x\) to \(A\) along a zero-cost path, stopping when the path first reaches the already constructed tree. For each remaining state \(x\notin \mathcal B_0^N(A)\), Assumption~\ref{ass:2attractor} (II) gives \(C^N(x,B)=0\); attach \(x\) along such a zero-cost path to \(B\), again stopping when the path first reaches the already constructed tree. Since all added edges have zero cost, this produces an \(A\)-rooted spanning tree with total cost \(\gamma_A^N\). Hence,
\[
V^N(A)\le \gamma_A^N.
\]

For the lower bound, let \(T\in\mathcal T_A\) be any \(A\)-rooted spanning tree. The unique directed path in \(T\) from \(B\) to \(A\) must leave the stability set \(\mathcal B_0^N(B)\). Let \(z\) be the first state on this path outside \(\mathcal B_0^N(B)\). Then \(z\in\Exit^N(B)\). Therefore, the cost of the portion of this path from \(B\) to \(z\) is at least
\begin{align*}
C^N\!\bigl(B,\Exit^N(B)\bigr)=\gamma_A^N.
\end{align*}
Since all transition costs are nonnegative, the total cost of \(T\) is at least \(\gamma_A^N\). Thus, we have
\[
V^N(A)\ge \gamma_A^N.
\]
Combining the upper and lower bounds gives \(V^N(A)=\gamma_A^N\). By the same argument,
\[
V^N(B)=\gamma_B^N
:=
C^N\!\bigl(A,\Exit^N(A)\bigr).
\]
Substituting these identities into the potential-ratio formula yields
\[
\lim_{\eta\to 0}
\eta\log\frac{\mu^{N,\eta}(B)}{\mu^{N,\eta}(A)}
=
\gamma_A^N-\gamma_B^N,
\]
which proves \eqref{eq:mass-ratio}.

It remains to establish the conclusions concerning stochastic
stability under Definition~\ref{def:sse}. By the assumed
vanishing-noise behavior of the noisy best-response functions,
every entry of \(P^{N,\eta}\) has a limit as \(\eta\to0\). Define
the limiting transition matrix by
\[
P^{N,0}(x,y)
:=
\lim_{\eta\to0}P^{N,\eta}(x,y).
\]
Since \(X^N\) is finite, \(P^{N,\eta}\to P^{N,0}\) entrywise, and
\(P^{N,0}\) is a stochastic matrix. Moreover, its positive-probability
off-diagonal transitions are precisely the zero-cost transitions of
the zero-noise dynamics.

Under Assumption~\ref{ass:2attractor}, the singleton sets
\(\{A\}\) and \(\{B\}\) are closed recurrent classes of \(P^{N,0}\).
They are also the only recurrent classes. Indeed, suppose that
\(\mathcal R\) were another recurrent class and let
\(x\in\mathcal R\). By
Assumption~\ref{ass:2attractor}\textup{(I)}, there is a zero-cost
path from \(x\) to either \(A\) or \(B\). This path consists of
positive-probability transitions of \(P^{N,0}\), contradicting the
closedness of \(\mathcal R\). Thus, the only recurrent classes of
\(P^{N,0}\) are \(\{A\}\) and \(\{B\}\).

We claim that
\(
\mu^{N,\eta}\bigl(X^N\setminus\{A,B\}\bigr)
\longrightarrow0
\), as \(\eta\to0\).
To prove this claim, consider an arbitrary sequence
\(\eta_m\to0\). Since the probability simplex on \(X^N\) is compact,
there is a subsequence, which we continue to denote by
\(\{\eta_m\}\), such that
\(
\mu^{N,\eta_m}\longrightarrow\bar\mu
\)
for some probability distribution \(\bar\mu\) on \(X^N\). Passing
to the limit in the stationarity equation
\(
\mu^{N,\eta_m}P^{N,\eta_m}
=
\mu^{N,\eta_m}
\)
gives
\(
\bar\mu P^{N,0}=\bar\mu.
\)
Thus, \(\bar\mu\) is a stationary distribution of \(P^{N,0}\).
Every stationary distribution of a finite Markov chain is supported
on its recurrent classes. Therefore,
\[
\bar\mu\bigl(X^N\setminus\{A,B\}\bigr)=0.
\]
Since this conclusion holds for every subsequential limit,
\(
\mu^{N,\eta}\bigl(X^N\setminus\{A,B\}\bigr)
\longrightarrow0.
\)
Equivalently,
\[
\mu^{N,\eta}(A)+\mu^{N,\eta}(B)
\longrightarrow1.
\]

Suppose now that
\(
\gamma_A^N<\gamma_B^N.
\)
By equation~\eqref{eq:mass-ratio},
\[
\eta\log
\frac{\mu^{N,\eta}(B)}{\mu^{N,\eta}(A)}
\longrightarrow
\gamma_A^N-\gamma_B^N<0.
\]
Hence,
\[
\frac{\mu^{N,\eta}(B)}{\mu^{N,\eta}(A)}
\longrightarrow0.
\]
Together with
\(
\mu^{N,\eta}(A)+\mu^{N,\eta}(B)
\longrightarrow1,
\)
this implies
\[
\mu^{N,\eta}(A)\longrightarrow1
\qquad\text{and}\qquad
\mu^{N,\eta}(B)\longrightarrow0.
\]
Therefore, \(A\) is uniquely stochastically stable according to
Definition~\ref{def:sse}.

The case
\(\gamma_B^N<\gamma_A^N\)
is symmetric and gives
\[
\mu^{N,\eta}(B)\longrightarrow1.
\]
Thus, \(B\) is uniquely stochastically stable.
\end{proof}

\section{Auxiliary Lemmas}

\begin{lemma}\label{lem:Slogit}

The function \(\bar S_{\mathrm{logit}}\) defined in \eqref{eq:bar-s-logit} is continuous on \(\mathbb R\) and strictly monotone unless \(|a-b|=|c-d|\), in which case it is constant.
Moreover,
\[
\lim_{\beta\to-\infty}\bar S_{\mathrm{logit}}(\beta)
=
-\min\{a,b\}+\min\{c,d\},
\]
and
\[
\lim_{\beta\to+\infty}\bar S_{\mathrm{logit}}(\beta)
=
-\max\{a,b\}+\max\{c,d\}.
\]
\end{lemma}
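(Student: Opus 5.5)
The plan is to compute $\bar S_{\mathrm{logit}}$ in closed form, so that continuity, the endpoint limits and monotonicity can all be read off a single explicit function of $\beta$.

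\textbf{Closed form.} For real $u,v$ and $\beta\neq0$, set
\[
G_\beta(u,v):=\int_0^1 \tfrac1\beta\log\bigl(xe^{\beta u}+(1-x)e^{\beta v}\bigr)\,dx .
\]
By definition \eqref{eq:bar-s-logit}, $\bar S_{\mathrm{logit}}(\beta)=-G_\beta(a,b)+G_\beta(c,d)$.

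To evaluate $G_\beta$, I will substitute $y=xe^{\beta u}+(1-x)e^{\beta v}$, which is linear in $x$, and use $\int\log y\,dy=y\log y-y$. For $u\neq v$ this gives
\[
G_\beta(u,v)=\frac{ue^{\beta u}-ve^{\beta v}}{e^{\beta u}-e^{\beta v}}-\frac1\beta .
\]
The integrand is symmetric under swapping $(u,x)\leftrightarrow(v,1-x)$, so I may take $u>v$ and write $w:=u-v=|u-v|$. Then
\[
G_\beta(u,v)=\min\{u,v\}+w\,\psi(\beta w),\qquad \psi(r):=\frac{1}{1-e^{-r}}-\frac1r,
\]
with $\psi(0):=\tfrac12$. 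The case $w=0$ is consistent, since then $G_\beta=u$.

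Writing $s:=|a-b|$ and $t:=|c-d|$, this yields
\[
\bar S_{\mathrm{logit}}(\beta)=\min\{c,d\}-\min\{a,b\}+t\,\psi(\beta t)-s\,\psi(\beta s).
\]
This formula also holds at $\beta=0$. There the expected-payoff integral is $(u+v)/2=\min\{u,v\}+w/2$, which matches $\psi(0)=\tfrac12$.

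\textbf{Continuity and limits.} The function $\psi$ is smooth on $\mathbb R$, because the singularity at $r=0$ is removable, so $\bar S_{\mathrm{logit}}$ is continuous (indeed $C^1$). Moreover $\psi(r)\to0$ as $r\to-\infty$ and $\psi(r)\to1$ as $r\to+\infty$.

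For $s,t>0$ these limits give, as $\beta\to-\infty$, the value $\min\{c,d\}-\min\{a,b\}$. As $\beta\to+\infty$ they give $\min\{c,d\}+t-\min\{a,b\}-s=\max\{c,d\}-\max\{a,b\}$. If $s=0$ or $t=0$, the corresponding term vanishes identically and the same formulas hold.

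\textbf{Monotonicity.} This is the main step. I will differentiate in $\beta$:
\[
\frac{d}{d\beta}\bar S_{\mathrm{logit}}(\beta)=t^2\psi'(\beta t)-s^2\psi'(\beta s).
\]
Since $\psi'(r)=r^{-2}-e^r/(e^r-1)^2$, for $\beta\neq0$ we have
\[
z^2\psi'(\beta z)=\beta^{-2}-g(z),\qquad g(z)=\frac{z^2e^{\beta z}}{(e^{\beta z}-1)^2}.
\]
This $g$ is exactly the function from the proof of Lemma~\ref{lem:xstar-beta-monotonicity}, where it was shown to be strictly decreasing on $z>0$. Its limit as $z\to0$ is $\beta^{-2}$, so the extension $g(0):=\beta^{-2}$ keeps it decreasing on $[0,\infty)$.

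Hence the derivative equals $g(s)-g(t)$. For every $\beta\neq0$ its sign is $\operatorname{sgn}(t-s)$, and it is identically zero when $s=t$. At $\beta=0$ the derivative equals $(t^2-s^2)\psi'(0)$ with $\psi'(0)=\tfrac1{12}>0$, which has the same sign.

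It follows that $\bar S_{\mathrm{logit}}$ is strictly monotone when $|a-b|\neq|c-d|$ and constant otherwise. The expected obstacle is the careful handling of $\beta=0$ and of the degenerate gaps $s=0$ or $t=0$. Both are resolved by continuity of $\psi$ and by the removable singularity of $g$.
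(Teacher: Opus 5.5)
Your proposal is correct and follows essentially the same route as the paper: you integrate $\Delta_\beta$ in closed form, read off continuity and the endpoint limits, and reduce $\bar S_{\mathrm{logit}}'(\beta)$ to $g(|a-b|)-g(|c-d|)$, where $g$ is the strictly decreasing function from the proof of Lemma~\ref{lem:xstar-beta-monotonicity}. Writing everything through the single function $\psi(r)=\frac{1}{1-e^{-r}}-\frac1r$ handles $\beta=0$ a bit more cleanly, since you get $\psi'(0)=\tfrac1{12}$ directly, whereas the paper argues via constant sign on $\beta\neq0$ plus continuity; this difference is cosmetic.
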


\begin{proof}
Continuity of \(\bar S_{\mathrm{logit}}\) follows from the continuity of
\((\beta,x)\mapsto \Delta_\beta(x)\) on \(\mathbb R\times[0,1]\).

For \(p\le q\), define
\[
\psi_{p,q}(\beta):=
\begin{cases}
\dfrac{q e^{\beta q}-p e^{\beta p}}{e^{\beta q}-e^{\beta p}}
-\dfrac1\beta,
& \beta\neq 0,\ p<q,\\[2ex]
\dfrac{p+q}{2}, & \beta=0,\ p<q,\\[1.2ex]
p, & p=q.
\end{cases}
\]
The apparent singularity at \(\beta=0\) is removable, since for \(p<q\),
\[
\frac{q e^{\beta q}-p e^{\beta p}}{e^{\beta q}-e^{\beta p}}
-\frac1\beta
\longrightarrow
\frac{p+q}{2}
\qquad\text{as }\beta\to0.
\]
Thus, \(\psi_{p,q}\) is continuous on \(\mathbb R\).

For \(\beta\neq0\), integrating the logarithmic expression for
\(\Delta_\beta\) gives
\[
\bar S_{\mathrm{logit}}(\beta)
=
-\psi_{\min\{a,b\},\max\{a,b\}}(\beta)
+
\psi_{\min\{c,d\},\max\{c,d\}}(\beta).
\]
By continuity, the same identity holds at \(\beta=0\). In particular,
\[
\bar S_{\mathrm{logit}}(0)
=
-\frac{a+b}{2}
+
\frac{c+d}{2}
=
-\frac{a+b-c-d}{2}.
\]

We now prove monotonicity. If \(p<q\), then
\[
\psi'_{p,q}(\beta)
=
\frac1{\beta^2}
-
\frac{(q-p)^2 e^{\beta(q-p)}}{(e^{\beta(q-p)}-1)^2},
\qquad \beta\neq0.
\]
If \(p=q\), then \(\psi_{p,p}\equiv p\), so \(\psi'_{p,p}=0\).

Hence, for \(\beta\neq0\),
\[
\bar S_{\mathrm{logit}}'(\beta)
=
G_\beta(|a-b|)-G_\beta(|c-d|),
\]
where
\[
G_\beta(z):=
\begin{cases}
\dfrac{z^2 e^{\beta z}}{(e^{\beta z}-1)^2}, & z>0,\\[2ex]
\dfrac1{\beta^2}, & z=0.
\end{cases}
\]
The value at \(z=0\) is the continuous extension, since
\[
\lim_{z\downarrow0}
\frac{z^2 e^{\beta z}}{(e^{\beta z}-1)^2}
=
\frac1{\beta^2}.
\]
Thus, the degenerate cases \(|a-b|=0\) or \(|c-d|=0\) are handled.

For each fixed \(\beta\neq0\), the function
\(z\mapsto G_\beta(z)\) is strictly decreasing on \([0,\infty)\). This is the same calculation used in the proof of Lemma~\ref{lem:xstar-beta-monotonicity}. Therefore, the sign of
\(\bar S_{\mathrm{logit}}'(\beta)\) is constant on
\((-\infty,0)\cup(0,\infty)\).
Since \(\bar S_{\mathrm{logit}}\) is continuous at \(\beta=0\), it follows that
\(\bar S_{\mathrm{logit}}\) is monotone on all of \(\mathbb R\). It is strictly
monotone unless \(|a-b|=|c-d|\), in which case it is constant. 

Finally, if \(p<q\), then
\[
\lim_{\beta\to-\infty}\psi_{p,q}(\beta)=p,
\qquad
\lim_{\beta\to+\infty}\psi_{p,q}(\beta)=q,
\]
while \(\psi_{p,p}\equiv p\). Substituting these limits into the expression for
\(\bar S_{\mathrm{logit}}(\beta)\) gives
\[
\lim_{\beta\to-\infty}\bar S_{\mathrm{logit}}(\beta)
=
-\min\{a,b\}+\min\{c,d\},
\]
and
\[
\lim_{\beta\to+\infty}\bar S_{\mathrm{logit}}(\beta)
=
-\max\{a,b\}+\max\{c,d\}.
\]
\end{proof}

\begin{lemma}\label{lem:Rpos}
Fix \(p\in\{1,2\}\). Suppose that, for population \(p\),
\[
a^p\ge d^p
\quad\text{and}\quad
b^p\ge c^p,
\]
with at least one inequality strict. Then, for every \(\beta^p\in\mathbb R\),
if \(\Delta(\cdot)=\Delta_{\beta^p}^p(\cdot)\) denotes the ERM payoff difference
in \eqref{eq:erm-delta}, we have
\[
\Delta(1-z)>-\Delta(z),
\qquad \forall z\in[0,1].
\]
Equivalently, \(\Delta(z)+\Delta(1-z)>0\) for all \(z\in[0,1]\).
\end{lemma}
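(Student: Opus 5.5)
Suppress the superscript $p$ and write $\beta,a,b,c,d$. The plan is to compute $\Delta(z)+\Delta(1-z)$ explicitly and compare the two log-sums. For $\beta=0$ the expected-payoff form gives
\[
\Delta_0(z)+\Delta_0(1-z)=(a+b)-(c+d),
\]
which is strictly positive because $a\ge d$, $b\ge c$ and at least one inequality is strict. So the substance is the case $\beta\neq0$.

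For $\beta\neq0$, define $\Phi(u,v):=\frac1\beta\log\bigl(ze^{\beta u}+(1-z)e^{\beta v}\bigr)$. Then $F_{\beta,1}(z)=\Phi(a,b)$ and $F_{\beta,1}(1-z)=\Phi(b,a)$, while $F_{\beta,2}(z)=\Phi(c,d)$ and $F_{\beta,2}(1-z)=\Phi(d,c)$. Hence
\[
\Delta(z)+\Delta(1-z)=\bigl[\Phi(a,b)-\Phi(d,c)\bigr]+\bigl[\Phi(b,a)-\Phi(c,d)\bigr].
\]
The key observation is that $\Phi$ is strictly increasing in each argument for every $\beta\neq0$. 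Indeed, for $\beta>0$ both $\log$ and $e^{\beta\cdot}$ are increasing and $1/\beta>0$. For $\beta<0$, the map $u\mapsto e^{\beta u}$ is decreasing and the factor $1/\beta$ flips the sign back. Strictness holds for $z\in(0,1)$.

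Since $a\ge d$ and $b\ge c$, we get $\Phi(a,b)\ge\Phi(d,c)$ and $\Phi(b,a)\ge\Phi(c,d)$, so the sum is nonnegative. Strict positivity follows because one of the inequalities $a>d$ or $b>c$ is strict, and that strict coordinate appears with positive weight $z$ or $1-z$ in at least one bracket.

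The only real obstacle is the endpoints $z\in\{0,1\}$, where a weight vanishes and strict monotonicity in one argument is lost. There I will compute directly: $\Delta(0)+\Delta(1)=(b-d)+(a-c)$. This is positive because $a\ge d$ and $b\ge c$ give $(a-c)+(b-d)\ge(a-c)+(c-d)\cdot 0\dots$; more simply $(a-d)+(b-c)\ge0$ with strict inequality under the hypothesis, and $(a-c)+(b-d)=(a-d)+(b-c)$.

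For interior $z$ I will check strictness term by term. If $a>d$, the first bracket is strict because $a$ carries weight $z>0$. If $b>c$, the second bracket is strict because $b$ carries weight $z>0$. Either way the sum is strictly positive, which proves the claimed inequality for all $z\in[0,1]$.
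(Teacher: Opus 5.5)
Your proof is correct, and it takes a different route from the paper.

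\textbf{The paper's route.} The paper multiplies out the two logarithm arguments. It uses the identity $(zA+(1-z)B)((1-z)A+zB)=AB+t(A-B)^2$ with $t=z(1-z)\in[0,\tfrac14]$, where $A=e^{\beta a}$ and so on, to write
\[
e^{\beta R_\beta(z)}=\frac{AB+t(A-B)^2}{CD+t(C-D)^2}.
\]
Positivity then reduces to checking an affine function of $t$ at its endpoints. At $t=0$, comparing $AB$ with $CD$ uses $a+b>c+d$. At $t=\tfrac14$, comparing $A+B$ with $C+D$ uses $A\ge D$ and $B\ge C$. The cases $\beta>0$ and $\beta<0$ are handled separately.

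\textbf{Your route.} You regroup the sum as
\[
\bigl[F_{\beta,1}(z)-F_{\beta,2}(1-z)\bigr]+\bigl[F_{\beta,1}(1-z)-F_{\beta,2}(z)\bigr].
\]
You then use only that the ERM certainty equivalent $\Phi$ is nondecreasing in each payoff realization. Indeed, $\partial_u\Phi=ze^{\beta u}/(ze^{\beta u}+(1-z)e^{\beta v})\ge 0$ for every $\beta\neq0$.

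\textbf{What each buys.} Your argument gives a stronger, termwise statement: $F_{\beta,1}(z)\ge F_{\beta,2}(1-z)$ holds on its own. It treats both signs of $\beta$ uniformly. The case $\beta=0$ also fits, with $\Phi(u,v)=zu+(1-z)v$. It would go through unchanged for any risk evaluation that is monotone in outcomes. The paper's identity is tied to the exponential form, but in exchange it gives an explicit closed form for $R_\beta$.

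\textbf{Two small clean-ups.}
\begin{itemize}
\item The sentence beginning ``$(a-c)+(b-d)\ge(a-c)+(c-d)\cdot 0\dots$'' is garbled and should be deleted. The identity $(a-c)+(b-d)=(a-d)+(b-c)$ you state right after is all you need.
\item The endpoints $z\in\{0,1\}$ do not need separate treatment. The comparison $a$ versus $d$ enters the first bracket with weight $z$ and the second with weight $1-z$, and $b$ versus $c$ the other way round. So whichever hypothesis is strict carries positive weight in at least one bracket for every $z\in[0,1]$.
\end{itemize}
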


\begin{proof}
Fix \(\beta^p\in\mathbb R\), and define
\[
R_{\beta^p}(z):=\Delta(z)+\Delta(1-z),
\qquad z\in[0,1].
\]
We first consider the case \(\beta^p\neq 0\). Let
\[
A=e^{\beta^p a^p},\quad
B=e^{\beta^p b^p},\quad
C=e^{\beta^p c^p},\quad
D=e^{\beta^p d^p}.
\]
Then
\begin{align*}
\beta^p R_{\beta^p}(z)
=
\log\!\left(
\frac{(zA+(1-z)B)\bigl((1-z)A+zB\bigr)}
     {(zC+(1-z)D)\bigl((1-z)C+zD\bigr)}
\right).
\end{align*}
Writing \(t=z(1-z)\in[0,\tfrac14]\), we obtain
\[
\exp\!\bigl(\beta^p R_{\beta^p}(z)\bigr)
=
\frac{AB+t(A-B)^2}{CD+t(C-D)^2}.
\]

Since \(a^p\ge d^p\) and \(b^p\ge c^p\), with at least one strict inequality,
we have
\[
a^p+b^p>c^p+d^p.
\]
If \(\beta^p>0\), then \(AB>CD\), and also
\[
A+B>C+D,
\]
because \(A\ge D\), \(B\ge C\), and at least one inequality is strict. Hence, the
affine function
\[
t\mapsto
\bigl[AB+t(A-B)^2\bigr]
-
\bigl[CD+t(C-D)^2\bigr]
\]
is positive at \(t=0\). At \(t=\tfrac14\), it equals
\[
\frac{(A+B)^2-(C+D)^2}{4},
\]
which is also positive. Therefore, it is positive for all
\(t\in[0,\tfrac14]\). Thus
\[
\exp\!\bigl(\beta^p R_{\beta^p}(z)\bigr)>1,
\]
and since \(\beta^p>0\), we get \(R_{\beta^p}(z)>0\).

If \(\beta^p<0\), the inequalities reverse: \(AB<CD\) and \(A+B<C+D\).
The same affine-function argument gives
\[
\exp\!\bigl(\beta^p R_{\beta^p}(z)\bigr)<1.
\]
Since now \(\beta^p<0\), it again follows that \(R_{\beta^p}(z)>0\).

Finally, when \(\beta^p=0\), the ERM payoffs reduce to expected payoffs, so
\[
\Delta(z)
=
z(a^p-c^p)+(1-z)(b^p-d^p).
\]
Hence
\[
R_0(z)
=
\Delta(z)+\Delta(1-z)
=
(a^p+b^p)-(c^p+d^p)>0.
\]
Therefore, \(R_{\beta^p}(z)>0\) for every \(\beta^p\in\mathbb R\) and every
\(z\in[0,1]\), which is equivalent to
\[
\Delta(1-z)>-\Delta(z),
\]
completing the proof.
\end{proof}




\end{appendices}


\bibliography{bibliography}

\end{document}